\def\extended{true}
\newcommand{\ifextended}[2]{\ifdefined\extended{#1}\else{#2}\fi}
\newcommand{\labelT}[1]{\label{#1}} 
\newcommand{\extref}[1]{\ref{#1}}
\newcommand{\extCref}[1]{\Cref{#1}}
\newcommand{\extCshref}[1]{\Cshref{#1}}
\newcommand{\extref}[1]{\ref{ext-#1}}
\newcommand{\extCref}[1]{\Cref{ext-#1}}
\newcommand{\extCshref}[1]{\Cshref{ext-#1}}
\newcommand\refpage[1]{\ref{#1}, p.\,\pageref{#1}}
\newcommand\cshrefpage[1]{\cshref{#1}, p.\,\pageref{#1}}
\DeclareRobustCommand*\cal{\@fontswitch\relax\mathcal}
\newsavebox\CBox
\newcommand\hcancel[2][.5pt]{\ifmmode\sbox\CBox{$#2$}\else\sbox\CBox{#2}\fi \makebox[0pt][l]{\usebox\CBox}\rule[0.2em-#1/2]{\wd\CBox}{#1}}
\xpatchcmd{\thmt@restatable}{\csname #2\@xa\endcsname\ifx\@nx#1\@nx\else[{#1}]\fi}{\IfAppendix{\csname #2\@xa\endcsname}{\csname #2\@xa\endcsname\ifx\@nx#1\@nx\else[{#1}]\fi}}{}{} \makeatother
\DeclareMathOperator*{\argmax}{arg\,max}
\DeclareMathOperator{\E}{\mathbb{E}}
\DeclarePairedDelimiter\ceil{\lceil}{\rceil}
\DeclareRobustCommand\iff{\;\Leftrightarrow\;}
\newcommand{\twodots}{\mathinner {\ldotp \ldotp}}
\def\ie{{\em i.e.}\xspace}
\def\eg{{\em e.g.}\xspace}
\def\cf{{\em cf.}\xspace}
\def\reals{{\mathbb R}}
\def\hmm{\text{\sc hmm}\xspace}
\def\cS{{\cal S}}
\def\cA{{\cal A}}
\def\cB{{\cal B}}
\def\cZ{{\cal Z}}
\def\occ{\sigma}
\def\os{{\sc os}\xspace}
\def\aoh{{\sc aoh}\xspace}
\def\Occ{{\cal O}^\sigma}
\newcommand{\PP}[4]{P_{#2}^{#4}(#3|#1)}
\def\cI{{\cal I}}
\def\l{\lambda}
\newcommand\lt[1]{\lambda_{#1}} \def\p{1} \def\WUL{W} \def\va{{\boldsymbol{a}}}
\def\vz{{\boldsymbol{z}}}
\def\vth{{\boldsymbol{\theta}}}
\def\vTh{{\boldsymbol{\Theta}}}
\def\vbeta{{\boldsymbol{\beta}}}
\def\dr{{\sc dr}\xspace}
\def\vx{{\boldsymbol{x}}}
\def\vy{{\boldsymbol{y}}}
\def\vv{{\boldsymbol{v}}}
\def\xx{x}
\def\yy{y}
\def\zz{z}
\def\nev{\textsc{nev}} \def\nes{\textsc{nes}}  \def\nxt{T}
\newcommand{\lp}[1]{\textsc{lp}#1}
\newcommand{\dlp}[1]{\textsc{dlp}#1}
\def\depth{\tau}
\def\supp{\mathit{Supp}}
\def\radius{\rho}
\def\thr{thr}
\def\t{\intercal} \def\indep{\perp \!\!\! \perp}
\def\thr{thr}
\def\nxt{T}
\def\Nxt{T} 
\def\vr{r}
\def\t{\top}
\def\Lin{Lin}
\def\ze{she} \def\zer{her}
\def\omgHSVIlc{{\sc omg}HSVI$^\text{\sc lc}$}
\def\omgHSVIlccc{{\sc omg}HSVI$^\text{\sc lc}_\text{\sc cc}$}
\newtheorem{theorem}{Theorem}
\newtheorem{lemma}{Lemma}
\newtheorem{proposition}{Proposition}
\newcommand{\eqdef}     {\stackrel{{\textrm{\rm\tiny def}}}{=}}
\newcommand{\poubelle}[1]{}
\newcommand\upb[1]{\overline{#1}} \newcommand\lob[1]{\underline{#1}} \newcommand{\h}[3]{h_{#2}} 
\makeatletter \newcommand{\pushright}[1]{\ifmeasuring@#1\else\omit\hfill$\displaystyle#1$\fi\ignorespaces}
\newcommand{\pushleft}[1]{\ifmeasuring@#1\else\omit$\displaystyle#1$\hfill\fi\ignorespaces}
\newcommand{\specialcell}[1]{\ifmeasuring@#1\else\omit$\displaystyle#1$\ignorespaces\fi}
\DeclarePairedDelimiter{\abs}{\lvert}{\rvert}\DeclarePairedDelimiter{\norm}{\lVert}{\rVert}
\newcommand\vabs[1]{\vv{\abs{#1}}}
\newcommand\vnorm[1]{\vv{\norm{#1}}}
\def\vleq{\ \vec\leq\ }
\def\player{}
\def\Player{}
\newcommand{\citetg}[1]{\citeauthor{#1}'s \citep{#1}}
\definecolor{pink}{rgb}{0.858, 0.188, 0.478}
\newcommand{\persComment}[3]{
  \ifmmode
  \text{\textcolor{#3}{[#2] #1}}
  \else
  \textcolor{#3}{[#2] \em #1}
  \fi
}
\newcommand{\Jilles}[1]{\persComment{#1}{jsd}{violet}} \newcommand{\Vincent}[1]{\persComment{#1}{vt}{blue}} \newcommand{\Olivier}[1]{\persComment{#1}{ob}{teal}} \newcommand{\Abdallah}[1]{\persComment{#1}{as}{brown}} \newcommand{\Aurelien}[1]{\persComment{#1}{ad}{pink}}
\renewcommand{\Jilles}[1]{} \renewcommand{\Vincent}[1]{} \renewcommand{\Olivier}[1]{} \renewcommand{\Abdallah}[1]{} \renewcommand{\Aurelien}[1]{} \fi
\renewcommand{\Jilles}[1]{\merde} \renewcommand{\Vincent}[1]{\merde} \renewcommand{\Olivier}[1]{\merde} \renewcommand{\Abdallah}[1]{\merde} \renewcommand{\Aurelien}[1]{\merde} 
\DeclareRobustCommand{\abbrevcrefs}{\Crefname{appendix}{App.}{Apps.}\Crefname{section}{Sec.}{Secs.}\Crefname{equation}{Eq.}{Eqs.}\Crefname{figure}{Fig.}{Figs.}\Crefname{algorithm}{Alg.}{Algs.}\Crefname{tabular}{Tab.}{Tabs.}\Crefname{lemma}{Lem.}{Lems.}\Crefname{corollary}{Cor.}{Cors.}\Crefname{theorem}{Thm.}{Thms.}\Crefname{proposition}{Prop.}{Props.}\crefname{appendix}{app.}{apps.}\crefname{section}{sec.}{secs.}\crefname{equation}{eq.}{eqs.}\crefname{figure}{fig.}{figs.}\crefname{algorithm}{alg.}{algs.}\crefname{tabular}{tab.}{tabs.}\crefname{lemma}{lem.}{lems.}\crefname{corollary}{cor.}{cors.}\crefname{theorem}{thm.}{thms.}\crefname{proposition}{prop.}{props.}}
\DeclareRobustCommand{\cshref}[1]{{\abbrevcrefs\cref{#1}}}
\DeclareRobustCommand{\Cshref}[1]{{\abbrevcrefs\Cref{#1}}}
\title{
HSVI for zs-POSGs using \linebreak Concavity, Convexity and Lipschitz Properties }
\author{Aurélien Delage\\
  Univ. Lyon, INSA Lyon, INRIA, CITI,\\
  F-69621 Villeurbanne, France\\
  \texttt{firstname.lastname@inria.fr} \\
  \And
  Olivier Buffet\\
  Université de Lorraine, INRIA, CNRS, LORIA,\\
  F-54000 Nancy, France\\
  \texttt{firstname.lastname@inria.fr} \\
  \And
  Jilles Dibangoye\\
  Univ. Lyon, INSA Lyon, INRIA, CITI,\\
  F-69621 Villeurbanne, France\\
  \texttt{firstname.lastname@inria.fr} \\
}
\begin{document}

\maketitle

\fbox{
  \begin{minipage}{1.0\textwidth}
    \uline{Warning:}
The work presented in this paper has been improved along several
    lines in \cite{DelBufDibAbd-corr22}: readability, clarity of
    topics such as safety (global consistency), relation to the work
    of \citeauthor{WigOliRoi-ecai16} and to continual resolving
    approaches.
    \end{minipage}
}

\medskip

\begin{abstract}
  Dynamic programming and heuristic search are at the core of
  state-of-the-art solvers for sequential decision-making problems.
In partially observable or collaborative settings (\eg, POMDPs and
  Dec-POMDPs), this requires introducing an appropriate statistic that
  induces a fully observable problem as well as bounding (convex)
  approximators of the optimal value function.
This approach has succeeded in some subclasses of 2-player zero-sum
  partially observable stochastic games (zs-POSGs) as well, but failed
  in the general case despite known concavity and convexity
  properties, which only led to heuristic algorithms with poor convergence
  guarantees.
We overcome this issue, leveraging on these properties to derive
  bounding approximators and efficient update and selection operators,
before deriving a prototypical solver inspired by HSVI that
  provably converges to an $\epsilon$-optimal solution in finite time,
  and which we empirically evaluate.
This opens the door to a novel family of promising approaches
  complementing those relying on linear programming or iterative
  methods.
\end{abstract}

\message{<<< Entering \currfilename <<<}

\section{Introduction}
\label{sec|introduction}

Solving imperfect information sequential games is a challenging field
with many applications from playing Poker \citep{Kuhn-ctg50} to
security games \citep{BasNitGat-aaai16}.
We focus on finite-horizon 2-player 0-sum partially observable
stochastic games ((2p) zs-POSGs) an important class of games coming
with compact problem representations that allow for exploiting
structure (\eg, to derive relaxations).
From the viewpoint of (maximizing) Player~$1$, we aim at finding a
strategy with a worst-case expected return (\ie, whatever Player~$2$'s
strategy) within $\epsilon$ of the problem's Nash equilibrium value.

A first approach to solving a zs-POSG is to turn it into a 0-sum
extensive-form game (zs-EFG) \citep{OliVla-tr06}\footnote{Note: POSGs
  are equivalent to the large class of ``well-behaved'' EFGs as
  defined by \citet{KovSBBL-corr20}.}
addressed as a {\em sequence form} linear program 
\citep{KolMegSte-geb96,Stengel-geb96,BosEtAl-jair14}, giving rise to an exact algorithm.
A second approach is to use an iterative game solver, \ie, either a counterfactual-regret-based method (CFR)
\citep{ZinJohBowPic-nips07,BroSan-science18}, or a first-order method \citep{KroWauKKSan-mp20}, both coming with asymptotic convergence properties.
CFR-based approaches now incorporate deep reinforcement learning and
search, some of them winning against top human players at
heads-up no limit hold’em poker
\citep{MorEtAl-science17,BroSan-science18,BroBakLerQon-nips20}.

In contrast, dynamic programming and heuristic search have not been
applied to general zs-POSGs, while often at the core of
state-of-the-art solvers in other problem classes that involve
Markovian dynamics, partial observability and multiple agents (POMDP
\citep{Astrom-jmaa65,Smith-phd07}, Dec-POMDP
\citep{SzeChaZil-uai05,DibAmaBufCha-jair16}, or subclasses of
zs-POSGs with simplifying observability assumptions
\citep{GhoMcDSin-jota04,ChaDoy-tcl14,BasSte-jco15,HorBosPec-aaai17,ColKoc-jet01,HorBos-aaai19}).
They all rely on some statistic that induces a fully observable
problem whose value function ($V^*$) exhibits continuity properties
that allow deriving bounding approximations.
\citeauthor{WigOliRoi-ecai16} \citep{WigOliRoi-ecai16,WigOliRoi-corr16} contributed two
continuity properties, namely $V^*$'s concavity and convexity in two
different spaces, but which only led to heuristic algorithms with poor
convergence guarantees \citep{Wiggers-msc15}.

We here follow up on this work, successfully achieving the same 3-step
process as aforementioned approaches.
First, we obtain a fully observable game
for which Bellman's principle of optimality---the problem being made of nested subproblems---directly applies
(\Cref{sec|background}) by reasoning on the {\em occupancy state}
\citep{DibAmaBufCha-jair16} (also known for example as the {\em public
  (belief) state} \citep{MorEtAl-science17,BroBakLerQon-nips20}), \ie, the probability distribution over the players' past
action-observation histories.
Second, we exhibit novel continuity properties of optimal value
functions (not limited to $V^*$),
\ie, we extend
\citeauthor{WigOliRoi-corr16}'s
\citep{WigOliRoi-corr16,WigOliRoi-ecai16} continuity
properties (see also \citep{BroBakLerQon-nips20}), and introduce
complementary Lipschitz-continuity properties.
They allow proposing point-based upper and lower bound
approximations, and efficient update and selection operators based on
linear programming (\Cshref{sec|approxV}).
Third, we adapt \citeauthor{SmiSim-uai05}'
\citep{SmiSim-uai05} HSVI's algorithmic scheme to
$\epsilon$-optimally solve the problem in finitely many iterations
(\Cshref{sec|HSVI}).
In particular, we adopt the same changes to the algorithms, and thus
to the theoretical analysis of the finite-time convergence, as in the
work of
\citeauthor{HorBosPec-aaai17}\citep{HorBosPec-aaai17,HorBos-aaai19}.
These changes are required because, in both cases, the induced tree of
possible futures has an infinite branching factor.
\Cshref{sec|XPs} empirically validates the contributed algorithm.

\ifextended{}{
  \ifdefined\submission 
[Note: Appendices are in supplemental material here: {\small \url{https://members.loria.fr/OBuffet/papiers/jfpda21-ext.pdf} }.]
\else [Note: Appendices are in supplemental material.]
\fi
}

\section{Background}
\label{sec|background}

\uline{Note:} We may replace:
\begin{itemize}
\item  subscript ``$\depth:H-1$'' with ``$\depth:$'',
\item any function $f(\vx)$ linear in vector $\vx$ with either
$f(\cdot) \cdot \vx$ or ${\vx}^\t \! \cdot f(\cdot)$, and
\item a full tuple with its few elements of interest.
\end{itemize}

We first define zs-POSGs before recasting them into a new,
fully-observable, game (\Cshref{sec|oSG}). Then, concavity and convexity properties of this game's optimal value
function $V^*$ are presented (\Cshref{sec|CCV}), before introducing a {\em local} game and our first contributed
results, which will allow exploiting the nesting of subproblems
(\Cshref{sec|localGames}).

\ifextended{For the sake of clarity, the concepts and results of the EFG
  literature used in this work are recast in the POSG setting.
We will employ the terminology of behavioral strategies
  and strategy profiles---more convenient in our non-collaborative
  setting---instead of deterministic or stochastic policies (private
  or joint ones)---common in the collaborative setting of Dec-POMDPs.
}{}

A (2-player) zero-sum partially observable stochastic game (zs-POSG)
is defined by a tuple
$\langle \cS, \cA^1, \cA^2, \cZ^1, \cZ^2, P, r, H, \gamma, b_0 \rangle$, where
\ifdefined\extended \begin{itemize} \item \fi $\cS$ is a finite set of states;
  \ifdefined\extended \item \fi $\cA^i$ is (player) $i$'s finite set of actions;
  \ifdefined\extended \item \fi $\cZ^i$ is \player $i$'s finite set of observations;
  \ifdefined\extended \item \fi $\PP{s}{a^1,a^2}{s'}{z^1,z^2}$ is the probability to transition
  to state $s'$ and receive observations $z^1$ and $z^2$ when actions
  $a^1$ and $a^2$ are performed in state $s$;
\ifdefined\extended \item \fi $r(s,a^1,a^2)$ is a (scalar) reward function;
\ifdefined\extended \item \fi $H \in \mathbb{N}$ is a (finite) temporal horizon;
  \ifdefined\extended \item \fi $\gamma\in [0,1]$ is a discount factor; and \ifdefined\extended \item \fi $b_0$ is the initial belief state.
  \ifdefined\extended \end{itemize} \fi Player~$1$ wants to maximize the expected return, defined as the
discounted sum of future rewards, while \player $2$ wants to
minimize it, what we formalize next.

Due to the symmetric setting, many definitions and results are
given from a single player's viewpoint\ifextended{ when only obvious changes are
needed for the other}{}.

From the Dec-POMDP, POSG and EFG literature, we use the
following concepts and definitions, where $i \in \{1,2\}$:
\begin{description}[leftmargin=!,labelwidth=\widthof{$\vbeta_{\depth:\depth'}$}]
\item[\Player $-i$] is \player $i$'s opponent. Thus: $-1=2$, and $-2=1$.
\item[$\theta^i_\depth$] $ = (a^i_1, z^i_1, \dots , a^i_\depth,
  z^i_\depth) $ ($\in \Theta^i = \cup_{t=0}^{H-1} \Theta^i_t$) is a length-$\depth$ {\em action-observation history} (\aoh) for
  \player $i$.
\item[$\vth_\depth$] $ =(\theta^1_\depth,\theta^2_\depth)$ ($\in \vTh = \cup_{t=0}^{H-1} \vTh_t$) is a {\em joint \aoh} at $\depth$.
\item[{[$\beta^i_\depth$]}] A {\em (behavioral) decision rule} (\dr) at time
  $\depth$ for \player $i$ is a mapping $\beta^i_\depth$ from private
  \aoh{}s in $\Theta^i_\depth$ to {\em distributions} over private
  actions.
We note $\beta^i_\depth(\theta^i_\depth,a^i)$ the
  probability to pick \ifextended{action}{} $a^i$ when facing
  \ifextended{history}{} $\theta^i_\depth$.
\item[$\vbeta_\depth$]
  $= \langle \beta^1_\depth, \beta^2_\depth \rangle$
  ($\in \cB = \cup_{t=0}^{H-1} \cB_t$) is a {\em decision rule
    profile}.
\item[$\beta^i_{\depth:\depth'}$]
  $= (\beta^i_\depth, \dots, \beta^i_{\depth'})$ is a {\em behavioral
    strategy} (aka {\em policy}) for \player $i$ from time step $\depth$ to $\depth'$
  (included).
\item[$\vbeta_{\depth:\depth'}$]
  $= \langle \beta^1_{\depth:\depth'}, \beta^2_{\depth:\depth'}
  \rangle$ is a {\em behavioral strategy profile}  (aka {\em joint policy}).
\item[{[$V_0(\vbeta_{0:H-1}) $]}] The {\em value} of a strategy profile $\vbeta_{0:H-1}$ is \ifextended{
    \begin{align*}
      V_0(\vbeta_{0:H-1}) 
      & = E[\sum_{t=0}^{H-1} \gamma^t R_t \mid \vbeta_{0:H-1}],
    \end{align*}
  }{
    $
    V_0(\vbeta_{0:H-1}) 
    = E[\sum_{t=0}^{H-1} \gamma^t R_t \mid \vbeta_{0:H-1}],
    $
  }
  where $R_t$ is the random variable associated to the instant reward
  at $t$. 
\end{description}

The objective is here to find a Nash equilibrium strategy
(NES), \ie, a strategy profile
$\vbeta^*_{0:}=\langle\beta^{1*}_{0:},\beta^{2*}_{0:}\rangle$ such that no player has an
incentive to deviate, which can be written:
\begin{align*}
  \forall \beta^1, V_0(\beta^{1*}_{0:}, \beta^{2*}_{0:}) \geq V_0(\beta^{1}_{0:}, \beta^{2*}_{0:}),
  & \quad \text{and} \quad
  \forall \beta^2, V_0(\beta^{1*}_{0:}, \beta^{2*}_{0:}) \leq V_0(\beta^{1*}_{0:}, \beta^{2}_{0:}).
\end{align*}
In such a game, all NESs have the same Nash-equilibrium value (NEV)
$V^*_0 \eqdef V_0(\beta^{1*}_{0:}, \beta^{2*}_{0:})$.

As explained in the introduction, we aim at deriving an algorithm
based on dynamic programming or heuristic search, as in other
sequential decision-making problems.
Yet, Bellman's principle of optimality cannot be directly applied in a
game where players do not share their individual histories, and thus
do not have the same information about the current situation (except
at $\depth=0$).
To address this issue, we follow the same idea as for Dec-POMDPs or
some subclasses of zs-POSGs such as One-Sided POSGs
\cite{HorBosPec-aaai17} to consider a different game where each (new)
player controls an avatar, that interacts with the environment for
\zer{}, by publicly providing decision rules to be executed, but not
knowing which \aoh{s} are experienced.
This is what we do in the following section, demonstrating later
that one can retrieve solution strategies for the
original game which are robust to deviations.

\subsection{Re-casting POSGs as Occupancy Markov Games}
\label{sec|oSG}

Here, a different, fully observable, zero-sum game is derived
from the zs-POSG.
To that end, let us define the {\em occupancy state}
$\occ_{\vbeta_{0:\depth-1}}$ ($\in  \Occ_\depth$) as the probability
distribution over joint \aoh{}s $\vth_\depth$ given partial strategy
profile $\vbeta_{0:\depth-1}$.
This statistic exhibits the following properties (\cf also \citep[Thm.~1]{DibAmaBufCha-jair16}).

\begin{restatable}[Markov dynamics and rewards -- Proof in \extCshref{app|fromTo}]{proposition}{lemOccSufficient}
  \labelT{lem|occSufficient}
  \IfAppendix{{\em (originally stated on
      page~\pageref{lem|occSufficient})}}{}
  $\occ_{\vbeta_{0:\depth-1}}$, together with $\vbeta_\depth$, is a
  sufficient statistics to compute (i) the next \os, $\occ_{\vbeta_{0:\depth}}$, and (ii) the expected reward at $\depth$:
  $\E\left[ R_\depth \mid \vbeta_{0:\depth-1} \oplus \vbeta_\depth
  \right]$, where $\oplus$ denotes a concatenation.
\end{restatable}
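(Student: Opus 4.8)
The plan is to exhibit explicit formulas for the next occupancy state and for the expected reward and then verify that every quantity appearing in them is a function of $\occ_{\vbeta_{0:\depth-1}}$, $\vbeta_\depth$, and the fixed game parameters only. The linchpin is the observation that, although $\occ_{\vbeta_{0:\depth-1}}$ records only a distribution over joint \aoh{}s and not the hidden state, the posterior over the current state given a joint \aoh{} is \emph{independent of the strategy profile} that produced it. Concretely, for each $\vth_\depth$ I would introduce the associated belief $b_{\vth_\depth}(s) \eqdef \mathbb{P}(s_\depth = s \mid \vth_\depth, \vbeta_{0:\depth-1})$ and first establish, as a key lemma, that $b_{\vth_\depth}$ depends only on $b_0$ and $P$.

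To prove this lemma I would expand the trajectory likelihood by the chain rule. Summing over the intermediate states $s_0, \dots, s_{\depth-1}$, the joint probability $\mathbb{P}(\vth_\depth, s_\depth = s \mid \vbeta_{0:\depth-1})$ factorises into (a) a decision-rule factor $\prod_{t=1}^{\depth} \beta^1_{t-1}(\theta^1_{t-1}, a^1_t)\,\beta^2_{t-1}(\theta^2_{t-1}, a^2_t)$ that depends on $\vbeta_{0:\depth-1}$ but, crucially, not on the state trajectory, and (b) a model factor $b_0(s_0)\prod_{t=1}^{\depth} \PP{s_{t-1}}{a^1_t,a^2_t}{s_t}{z^1_t,z^2_t}$ that involves only $b_0$ and $P$. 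Since factor (a) does not touch the states, it pulls out of the sum over $s_0,\dots,s_{\depth-1}$ and is shared by the normaliser $\mathbb{P}(\vth_\depth \mid \vbeta_{0:\depth-1}) = \sum_{s'} \mathbb{P}(\vth_\depth, s_\depth = s' \mid \vbeta_{0:\depth-1})$; in the Bayes ratio defining $b_{\vth_\depth}(s)$ it therefore cancels, leaving a quantity expressed in $b_0$ and $P$ alone.

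Given the lemma, both claims follow by marginalisation. For (ii), conditioning on the joint \aoh{} and then on the joint action yields
\[ \E\left[ R_\depth \mid \vbeta_{0:\depth-1} \oplus \vbeta_\depth \right] = \sum_{\vth_\depth} \occ_{\vbeta_{0:\depth-1}}(\vth_\depth) \sum_{a^1,a^2} \beta^1_\depth(\theta^1_\depth,a^1)\,\beta^2_\depth(\theta^2_\depth,a^2) \sum_{s} b_{\vth_\depth}(s)\, r(s,a^1,a^2), \]
in which every factor is read off from $\occ_{\vbeta_{0:\depth-1}}$, $\vbeta_\depth$, and the model (the posterior $b_{\vth_\depth}$ being strategy-free). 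For (i), writing a length-$(\depth+1)$ joint \aoh{} as $\vth_\depth \oplus (\va,\vz)$ with $\va=(a^1,a^2)$ and $\vz=(z^1,z^2)$, the same decomposition gives
\[ \occ_{\vbeta_{0:\depth}}(\vth_\depth \oplus (\va,\vz)) = \occ_{\vbeta_{0:\depth-1}}(\vth_\depth)\,\beta^1_\depth(\theta^1_\depth,a^1)\,\beta^2_\depth(\theta^2_\depth,a^2) \sum_{s,s'} b_{\vth_\depth}(s)\,\PP{s}{a^1,a^2}{s'}{z^1,z^2}, \]
again manifestly a function of the claimed quantities only, which establishes sufficiency.

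The main obstacle is precisely the key lemma: one must set up the chain-rule expansion carefully so that the strategy-dependent decision-rule factors are seen to be exactly the terms that cancel under Bayes' rule, which is what makes the state-free occupancy state a sufficient statistic despite the state being hidden. The two marginalisations are then routine bookkeeping. A minor point to treat separately is the base case $\depth=0$, where $\occ$ is the trivial distribution on the empty joint history, $b_{\vth_0}=b_0$, and the recursion in (i) initialises.
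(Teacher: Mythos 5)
Your proposal is correct and follows essentially the same route as the paper's proof: both exhibit the one-step update $\occ_{\vbeta_{0:\depth}}(\vth_\depth,\va,\vz) = \occ_{\vbeta_{0:\depth-1}}(\vth_\depth)\,\vbeta_\depth(\vth_\depth,\va)\sum_{s,s'}b(s\mid\vth_\depth)\PP{s}{\va}{s'}{\vz}$ and the analogous factorisation of the expected reward, so that everything is read off from $\occ_{\vbeta_{0:\depth-1}}$, $\vbeta_\depth$, and the model. The only difference is that you prove your ``key lemma'' (strategy-independence of the posterior $b(s\mid\vth_\depth)$ via cancellation of the decision-rule factors in the Bayes ratio), whereas the paper simply asserts it by invoking the usual HMM filtering process---a welcome, if minor, tightening rather than a different argument.
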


These Markov properties
allow introducing an equivalent game (implicitely used by
\citet{WigOliRoi-corr16}), called a {\em zero-sum occupancy Markov
  game} (zs-OMG),\footnote{We use (i) ``Markov game'' instead of
  ``stochastic game'' because the dynamics are not stochastic, and
  (ii) ``partially observable stochastic game'' to stick with the
  literature.}
and defined by the tuple
$\langle \Occ, \cB, \nxt, r, H, \gamma \rangle$, where:
\ifdefined\extended \begin{itemize} \item \fi $\Occ (= \cup_{t=0}^{H-1} \Occ_t)$
  is the set of \os{}s induced by the zs-POSG;
\ifdefined\extended \item \fi $\cB$ is the set of \dr profiles of the zs-POSG;
\ifdefined\extended \item \fi $\nxt$ is a deterministic transition function that maps each pair
  $(\occ_\depth,\vbeta_\depth)$ to the (only) possible next \os
  $\occ_{\depth+1}$; formally (see proof of
  \extCshref{lem|occSufficient} in \extCshref{app|fromTo}),
  $\forall \theta^1_\depth, a^1, z^1, \theta^2_\depth, a^2, z^2$,
  \begin{align}
\label{eq|transition}
    \nxt(\occ_\depth,\vbeta_\depth)((\theta^1_\depth,a^1,z^1),(\theta^2_\depth,a^2,z^2))
& \eqdef Pr((\theta^1_\depth,a^1,z^1),(\theta^2_\depth,a^2,z^2) | \occ_\depth, \vbeta_\depth) \\
\nonumber
    & 
    = \beta^1_\depth(\theta^1_\depth,a^1) \beta^2_\depth(\theta^2_\depth,a^2) \occ_\depth (\vth_\depth)
    \sum_{s,s'} P^{\vz}_{\va}(s'|s) b(s|\vth_\depth)
    ,
  \end{align}
  where $b(s|\vth_\depth)$ is a belief state obtained by HMM filtering;
  \ifdefined\extended \item \fi $r$ is a reward function induced from the zs-POSG as the expected
  reward for the current \os and \dr profile:
\begin{align}
r(\occ_\depth,\vbeta_\depth) 
    \label{eq|reward}
    \eqdef E[r(S,A^1,A^2) | \occ_\depth, \vbeta_\depth ]
= \sum_{s,\vth_\depth, \va} \occ_\depth(\vth_\depth) b(s | \vth_\depth)
    \beta^1_\depth(a^1|\theta^1) \beta^2_\depth(a^2|\theta^2) r(s,\va);
  \end{align}
  we use the same notation $r$ for zs-POSGs as the context will
  indicate which one is discussed;
  \ifdefined\extended \item \fi $H$ and $\gamma$ are as in the zs-POSG
  ($b_0$ is not in the tuple but serves to define $\Nxt$ and $r$).
  \ifdefined\extended \end{itemize}  \fi The value of a strategy profile
$\vbeta_{0:}$ will be the same for both games, so that they share the
same \nev{} and \nes{}s.
We will see that, by computing the zs-OMG's \nev{}, we can obtain an
$\epsilon$-optimal zs-POSG solution strategy for $1$ or
$2$ as a by-product.

A zs-OMG is no standard finite zs Markov game since
(i) it is non-stationary, with different (continuous and of increasingly dimensionality) state and
action spaces at each time step;
(ii) at each time step, there are infinitely many actions, and a
mixture of such pure actions is equivalent to a pre-existing pure
action; and
(iii) the dynamics are deterministic (even for ``mixed'' actions).
But an important benefit of working with a zs-OMG is that both players
know the current \os,
$\occ_\depth$, thus always share the same information about the game.
This will allow finding an $\epsilon$-Nash equilibrium solution of
that game by exploring the tree of partial strategy
profiles.
This tree has an infinite branching factor due to the continuous
action (\dr) and state (\os) spaces.

We aim at using HSVI's algorithmic scheme, which relies on bounding
approximations of the optimal value function (updating them along
generated trajectories until $\epsilon$-convergence in the initial
point).
We thus now define such value functions and look at their known
structural properties, before building on them in \Cshref{sec|approxV}
to obtain the required bounding approximations.

\subsection{Concavity and Convexity (CC) Properties of $V^*$}
\label{sec|CCV}

A zs-OMG's {\em subgame} at $\occ_\depth$ is its restriction starting
from time step $\depth$ under this particular \os (thus looking for
strategies $\beta^1_{\depth:}$ and $\beta^2_{\depth:}$).
$\occ_\depth$ tells which \aoh{}s each player could be facing with
non-zero probability, and are thus relevant for planning.
We can then define the (optimal) value function in any
\os $\occ_\depth$ as follows:
\begin{align}
  \label{eq|subgame}
  V_\depth(\occ_\depth,\vbeta_{\depth:}) 
  \eqdef E[\sum_{t=\depth}^\infty \gamma^{t-\depth} r(S_t,A_t) | \occ_\depth, \vbeta_{\depth:}], 
  & \quad \text{and} \quad
  V_\depth^*(\occ_\depth) 
  \eqdef \max_{\beta_{\depth:}^1} \min_{\beta_{\depth:}^2} V_\depth ( \occ_\depth, \beta_{\depth:}^1,\beta_{\depth:}^2).
\end{align}

To study $V^*$, \citet{WigOliRoi-corr16} (whose results we extend here from $\gamma=1$
to $\gamma\leq 1$) decompose $\occ_\depth$, from $1$'s viewpoint, as a
{\em marginal} term,
$\occ_\depth^{m,1} (\theta_\depth^1) \eqdef \sum_{\theta_\depth^2} \occ_\depth(\theta_\depth^1,\theta_\depth^2) $,
and a {\em conditional} one,
$\occ_\depth^{c,1}(\theta_\depth^2 | \theta_\depth^1) \eqdef \frac{\occ_\depth (\theta^1_\depth,\theta^2_\depth) }{ \occ^{m,1}_{\depth} (\theta^1_\depth)}$,
so that $\occ_\depth = \occ_\depth^{m,1} \occ_\depth^{c,1}$.
In addition, $\nxt^1_m( \occ_{\depth}, \vbeta_\depth)$ and
$\nxt^1_c( \occ_{\depth}, \vbeta_\depth )$ here denote $1$'s marginal
and conditional terms associated to
$\nxt( \occ_{\depth}, \vbeta_\depth )$.

Given $\occ^{c,1}_\depth$ and a fixed \ifextended{strategy}{} $\beta^2_{\depth:}$, $1$ faces a POMDP, and  the optimal (POMDP) value in any \aoh
$\theta^1_\depth$  is given by
$ \nu^2_{[\occ_\depth^{c,1},\beta^{2}_{\depth:}]} (\theta^1_\depth)  \eqdef \max_{\beta_{\depth:}^1}  \mathbb{E} \left\{ \sum_{t = \depth}^{H-1} \gamma^{t-\depth} r(S_t,A_t^1,A_t^2) \mid \theta_\depth^1,\beta_{\depth:}^1, \beta_{\depth:}^{2}, \occ^{c,1}_\depth \right\}.$
\citeauthor{WigOliRoi-corr16} then demonstrate the following
concavity and convexity properties of $V^*$.

\begin{theorem}[Concavity and convexity (CC) of $V^*_\depth$ {\citep[Thm.~2]{WigOliRoi-corr16}}] \labelT{theo|ConvexConcaveV}
  For any $\depth \in \{0\twodots H-1\}$, $V_\depth^*$ is
(i) concave w.r.t. $\occ_\depth^{m,1}$ for a fixed
  $\occ_\depth^{c,1}$, and (ii) convex w.r.t. $\occ_\depth^{m,2}$ for a fixed
  $\occ_\depth^{c,2}$.
More precisely,
  \begin{align*}
    V_\depth^*(\occ_\depth)
    & = \min_{\beta_{\depth:}^2}\left[ \occ_\depth^{m,1} \cdot \nu^2_{[\occ_\depth^{c,1}, \beta_{\depth:}^2]} \right]
= \max_{\beta_{\depth:}^1}\left[ \occ_\depth^{m,2} \cdot \nu^1_{[\occ_\depth^{c,2}, \beta_{\depth:}^1]} \right].
  \end{align*}
\end{theorem}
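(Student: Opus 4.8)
The plan is to rewrite $V_\depth$ so as to separate its dependence on the marginal from its dependence on the conditional, recognise the two inner optimisations as the POMDP best-response vectors $\nu^1,\nu^2$, and then read concavity and convexity off the resulting expressions; I work out the player~1 (marginal-1) side, the other being symmetric. Starting from the definition of $V_\depth$ in \cref{eq|subgame} as an expectation over trajectories generated by $\occ_\depth$ and $\vbeta_{\depth:}$, I condition on player~1's private history $\theta^1_\depth$ and use the factorisation $\occ_\depth=\occ^{m,1}_\depth\,\occ^{c,1}_\depth$. By the tower rule and \cref{lem|occSufficient} (the belief over $(s,\theta^2_\depth)$ given $\theta^1_\depth$ is obtained by filtering from $\occ^{c,1}_\depth$), this yields
\[
  V_\depth(\occ_\depth,\beta^1_{\depth:},\beta^2_{\depth:})
  = \sum_{\theta^1_\depth}\occ^{m,1}_\depth(\theta^1_\depth)\, g_{\theta^1_\depth}(\beta^1_{\depth:},\beta^2_{\depth:}),
\]
where $g_{\theta^1_\depth}$ is the conditional expected return given $\theta^1_\depth$. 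The key observation is that $g_{\theta^1_\depth}$ depends on $\occ_\depth$ only through $\occ^{c,1}_\depth$, never through the marginal.

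Next I fix $\beta^2_{\depth:}$. For each $\theta^1_\depth$ player~1 then faces a finite-horizon POMDP, and by perfect recall the continuations of distinct histories $\theta^1_\depth$ live on disjoint subtrees, so a single behavioural strategy can prescribe each continuation independently. Since the weights $\occ^{m,1}_\depth(\theta^1_\depth)$ are nonnegative, the maximisation commutes with the weighted sum, giving $\max_{\beta^1_{\depth:}} V_\depth = \occ^{m,1}_\depth\cdot\nu^2_{[\occ^{c,1}_\depth,\beta^2_{\depth:}]}$. The mirror argument (player~2 minimising) gives $\min_{\beta^2_{\depth:}} V_\depth = \occ^{m,2}_\depth\cdot\nu^1_{[\occ^{c,2}_\depth,\beta^1_{\depth:}]}$.

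Plugging the player-2 identity into $V^*_\depth=\max_{\beta^1_{\depth:}}\min_{\beta^2_{\depth:}}V_\depth$ from \cref{eq|subgame} gives the second equality at once. The first equality instead reads $\min_{\beta^2_{\depth:}}[\occ^{m,1}_\depth\cdot\nu^2_{[\occ^{c,1}_\depth,\beta^2_{\depth:}]}]=\min_{\beta^2_{\depth:}}\max_{\beta^1_{\depth:}}V_\depth$, so it requires swapping the two operators, \ie invoking existence of the game value $\max_{\beta^1}\min_{\beta^2}V_\depth=\min_{\beta^2}\max_{\beta^1}V_\depth$; this holds because the subgame is a finite zero-sum game with perfect recall, where Kuhn's theorem convexifies the strategy sets and makes $V_\depth$ multilinear, so von Neumann's minimax theorem applies. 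Concavity and convexity are then immediate: for fixed $\occ^{c,1}_\depth$ each map $\occ^{m,1}_\depth\mapsto\occ^{m,1}_\depth\cdot\nu^2_{[\occ^{c,1}_\depth,\beta^2_{\depth:}]}$ is linear (the vector $\nu^2$ is independent of the marginal), and a pointwise minimum of linear functions is concave; dually $V^*_\depth$ is a pointwise maximum of maps linear in $\occ^{m,2}_\depth$, hence convex.

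The steps I expect to be delicate are the decoupling and the swap. For the former one must justify that best-responding can be carried out history-by-history, which rests on perfect recall (distinct $\theta^1_\depth$ indexing disjoint continuation subtrees) and on the nonnegativity of the marginal weights so that $\max$ and $\min$ pass through the sum. For the latter, the exchange of $\max$ and $\min$ is not valid over behavioural strategies, whose set is not convex; it becomes legitimate only after passing to the equivalent mixed-strategy (sequence-form / normal-form) representation, where bilinearity and compactness make the minimax theorem applicable.
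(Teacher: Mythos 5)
The paper does not actually prove this statement---it is imported from \citet{WigOliRoi-corr16} (their Thm.~2)---and your reconstruction is correct and follows essentially the same route as that cited proof: the per-history best-response decomposition $\max_{\beta^1_{\depth:}} V_\depth(\occ_\depth,\vbeta_{\depth:}) = \occ^{m,1}_\depth \cdot \nu^2_{[\occ^{c,1}_\depth,\beta^2_{\depth:}]}$ (Lemma~3 of \citet{WigOliRoi-corr16}, which this paper's appendix also invokes), the exchange of $\max$ and $\min$ justified by value existence for finite zero-sum games with perfect recall, and concavity/convexity read off as lower/upper envelopes of functions linear in the marginal. One small correction to your closing remark: the obstruction to swapping $\max$ and $\min$ directly over behavioral strategies is not non-convexity of the strategy sets (each is a product of simplices, hence convex), but that $V_\depth$ is multilinear across time steps and therefore neither concave in $\beta^1_{\depth:}$ nor convex in $\beta^2_{\depth:}$; your fix---passing to the mixed/sequence-form representation where the payoff is bilinear and compactness holds, then returning via Kuhn's theorem---is the standard and correct one.
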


However, this property alone allows approximating $V^*_\depth$ only
with a finite set of vectors
$\nu^{-i}_{[\occ_\depth^{c,i},\beta^{-i}_{\depth:}]}$, thus {\em only}
for finitely many conditional terms $\occ^{c,i}_\depth$, {\em not} for
the whole occupancy space.

\subsection{Introducing Local Games}
\label{sec|localGames}

{\em Sub}games (\Cshref{eq|subgame}) involve suffix strategies (over $\depth:H-1$),
while we aim at dealing with \dr{}s for one $\depth$ at a time.
Let us then introduce the {\em local} game at $\occ_\depth$, whose
payoff function is the {\em optimal action-value function} (which
assumes a known $V^*_{\depth+1}(\cdot)$) :
\begin{align}
  Q^*_\depth(\occ_\depth, \vbeta_\depth) & = r(\occ_\depth, \vbeta_\depth) + \gamma V^*_{\depth+1}( \nxt(\occ_\depth,\vbeta_\depth) ).
  \label{eq|localGame}
\end{align}
$Q_\depth^*(\occ_\depth, \cdot, \cdot)$ may not be
bilinear (cf. Prop.~\extref{prop|localGameNotBiLinear}, App.~\extref{prop|localGameNotBiLinear}), so that local games are not amenable to linear programming.
But \Cref{theo|ConvexConcaveV} (and $\Nxt$'s linearity in $\beta^1$
and $\beta^2$) leads to the following result.

\begin{restatable}[{\small New result} -- Proof in \extCshref{proofLemQcc}]{lemma}{lemQcc}
  \labelT{lem|Qcc}
  \IfAppendix{{\em (originally stated on
    page~\pageref{lem|Qcc})}}{}
$Q_\depth^*(\occ_\depth,\vbeta_\depth)$ is concave in $\beta_\depth^1$ and convex in $\beta_\depth^2$.
\end{restatable}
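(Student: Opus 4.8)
The plan is to exploit Player~$1$'s marginal/conditional decomposition of the occupancy state together with the concavity structure of $V^*_{\depth+1}$ granted by \Cref{theo|ConvexConcaveV}. I focus on concavity in $\beta^1_\depth$; convexity in $\beta^2_\depth$ then follows by the symmetric argument, swapping the players' roles and invoking the convexity half of \Cref{theo|ConvexConcaveV}. Throughout, fix $\occ_\depth$ and $\beta^2_\depth$ and regard $Q^*_\depth(\occ_\depth, \vbeta_\depth) = r(\occ_\depth, \vbeta_\depth) + \gamma V^*_{\depth+1}(\nxt(\occ_\depth, \vbeta_\depth))$ as a function of $\beta^1_\depth$ alone.

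First I would dispose of the reward term: from \Cref{eq|reward}, $r(\occ_\depth, \vbeta_\depth)$ is bilinear in $\vbeta_\depth$, hence affine---and in particular concave---in $\beta^1_\depth$. Since $\gamma \geq 0$, it then suffices to prove that $\beta^1_\depth \mapsto V^*_{\depth+1}(\nxt(\occ_\depth, \vbeta_\depth))$ is concave.

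The core of the argument is to track how the two factors of Player~$1$'s decomposition of the successor occupancy $\occ_{\depth+1} = \nxt(\occ_\depth, \vbeta_\depth)$ depend on $\beta^1_\depth$. From \Cref{eq|transition}, the probability $\beta^1_\depth(\theta^1_\depth, a^1)$ enters linearly, so the marginal $\nxt^1_m(\occ_\depth, \vbeta_\depth) = \occ_{\depth+1}^{m,1}$ is affine in $\beta^1_\depth$. The key observation is that in the conditional $\nxt^1_c(\occ_\depth, \vbeta_\depth) = \occ_{\depth+1}^{c,1}$, for each fixed successor history $\theta^1_{\depth+1} = (\theta^1_\depth, a^1, z^1)$ the common factor $\beta^1_\depth(\theta^1_\depth, a^1)$ cancels between the joint occupancy (numerator) and its marginal $\occ_{\depth+1}^{m,1}(\theta^1_{\depth+1})$ (denominator); hence $\occ_{\depth+1}^{c,1}$ is independent of $\beta^1_\depth$. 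Verifying this cancellation is the main obstacle: one must check that the surviving $\beta^1$-dependence factors out identically in numerator and denominator, which relies on the product form $\beta^1_\depth(\theta^1_\depth, a^1)\beta^2_\depth(\theta^2_\depth, a^2)$ appearing in \Cref{eq|transition}.

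With these two facts the conclusion is immediate. As $\beta^1_\depth$ varies (with $\occ_\depth, \beta^2_\depth$ fixed), the conditional $\occ_{\depth+1}^{c,1}$ stays constant while the marginal $\occ_{\depth+1}^{m,1}$ runs over an affine image of $\beta^1_\depth$; by \Cref{theo|ConvexConcaveV}, $V^*_{\depth+1}$ is concave in $\occ_{\depth+1}^{m,1}$ at that fixed conditional. A concave function precomposed with an affine map is concave, so $\beta^1_\depth \mapsto V^*_{\depth+1}(\nxt(\occ_\depth, \vbeta_\depth))$ is concave, and adding the affine reward preserves concavity. Convexity of $Q^*_\depth$ in $\beta^2_\depth$ follows verbatim from Player~$2$'s decomposition and the convexity half of \Cref{theo|ConvexConcaveV}.
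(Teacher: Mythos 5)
Your proof is correct and takes essentially the same route as the paper's: both rest on the marginal transition $\nxt^1_m(\occ_\depth,\vbeta_\depth)$ being linear in $\beta^1_\depth$, the conditional $\nxt^1_c(\occ_\depth,\vbeta_\depth)$ being independent of $\beta^1_\depth$ (the cancellation you describe is exactly the paper's Lemma~\ref{lem|T1cindep}), and \Cref{theo|ConvexConcaveV}. The only cosmetic difference is that the paper substitutes the explicit $\min_{\beta^2_{\depth+1:}}$-of-linear-functions representation of $V^*_{\depth+1}$ and concludes concavity as a pointwise minimum of functions linear in $\beta^1_\depth$, whereas you invoke the concavity statement of \Cref{theo|ConvexConcaveV} directly and compose with the affine map---the same argument in different packaging.
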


\citeauthor{Neu-ma28}'s minimax theorem \citep{Neu-ma28} thus applies,
and solution strategies can be obtained by respectively maximizing and
minimizing the following two intermediate value functions:
\begin{align*}
  W^{1,*}_\depth(\occ_\depth,\beta^1_\depth) 
  \eqdef  \min_{\beta^2_\depth} Q^*_\depth(\occ_\depth,\beta^1_\depth,\beta^2_\depth), 
  & \qquad \text{and} \qquad
  W^{2,*}_\depth(\occ_\depth,\beta^2_\depth) 
  \eqdef  \max_{\beta^1_\depth} Q^*_\depth(\occ_\depth,\beta^1_\depth,\beta^2_\depth).
\end{align*}
This, plus the inherent nesting of local games, opens the way to
applying Bellman's principle of optimality, building a solution of the
zs-OMG by concatening solutions of subsequent local games.

\section{Properties and Approximations of Optimal Value Functions} \label{sec|approxV}

$V^*$'s known structural properties have not been sufficient to derive
appropriate approximations for an HSVI-like algorithm.
We solve this issue by proving $V^*$'s Lipschitz continuity (LC)
(\Cshref{sec|LC|V}) before introducing approximations of $V^*$, $W^{1,*}$, and $W^{2,*}$
(\Cshref{sec|approximations}), and their related operators (\Cshref{sec|relatedOperators}).

\subsection{Lipschitz Continuity of $V^*$} \label{sec|LC|V}

Establishing $V^*$'s Lipschitz continuity starts with properties of
$\nxt$.

\begin{restatable}[Proof in \extCshref{proofLemOccLin}]{lemma}{lemOccLin}
  \labelT{lem|occ|lin}
  \IfAppendix{{\em (originally stated on
    page~\pageref{lem|occ|lin})}}{}
At depth $\depth$, $\nxt(\occ_\depth,\vbeta_\depth)$ is linear in
  $\beta^1_\depth$, $\beta^2_\depth$, and $\occ_\depth$, where
  $\vbeta_\depth=\langle \beta^1_\depth, \beta^2_\depth\rangle$.
It is more precisely $1$-Lipschitz-continuous ($1$-LC) in
  $\occ_\depth$ (in $1$-norm), \ie, for any $\occ_\depth$,
  $\occ'_\depth$:
  \begin{align*}
    \norm{\nxt(\occ'_\depth,\vbeta_\depth) - \nxt(\occ_\depth,\vbeta_\depth)}_1
    & \leq 1 \cdot \norm{\occ'_\depth - \occ_\depth}_1.
  \end{align*}
\end{restatable}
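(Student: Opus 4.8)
The plan is to argue directly from the closed-form expression for $\nxt$ in \eqref{eq|transition}, which writes each entry of the next occupancy state as
\[
\nxt(\occ_\depth,\vbeta_\depth)(\vth_{\depth+1}) = \beta^1_\depth(\theta^1_\depth,a^1)\,\beta^2_\depth(\theta^2_\depth,a^2)\,\occ_\depth(\vth_\depth)\sum_{s,s'} P^{\vz}_{\va}(s'|s)\, b(s|\vth_\depth),
\]
where $\vth_{\depth+1} = ((\theta^1_\depth,a^1,z^1),(\theta^2_\depth,a^2,z^2))$ extends $\vth_\depth=(\theta^1_\depth,\theta^2_\depth)$ by $\va=(a^1,a^2)$ and $\vz=(z^1,z^2)$. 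Linearity is then read off immediately: for fixed $\vbeta_\depth$, each output coordinate is the single input coordinate $\occ_\depth(\vth_\depth)$ multiplied by a constant depending only on $\vbeta_\depth$ and the zs-POSG model. The one point deserving a sentence is that the filtered belief $b(s|\vth_\depth)$ depends only on $\vth_\depth$ and $b_0$, and neither on $\occ_\depth$ nor on $\vbeta_\depth$, since conditioning on the \aoh{} fixes the actions and the policy factors cancel in Bayes' rule; this is what makes the coefficient a genuine constant and the map genuinely linear. The same reading gives separate linearity in $\beta^1_\depth$ and in $\beta^2_\depth$.

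For the Lipschitz bound I would first invoke linearity in $\occ_\depth$ to write $\nxt(\occ'_\depth,\vbeta_\depth) - \nxt(\occ_\depth,\vbeta_\depth) = \nxt(\delta,\vbeta_\depth)$ with $\delta \eqdef \occ'_\depth - \occ_\depth$ treated as a signed vector, reducing the claim to $\norm{\nxt(\delta,\vbeta_\depth)}_1 \leq \norm{\delta}_1$. Expanding the $1$-norm as a sum over $\vth_{\depth+1}$ and noting that $(\vth_\depth,\va,\vz)\mapsto\vth_{\depth+1}$ is a bijection (so the sum over $\vth_{\depth+1}$ is exactly the sum over $(\vth_\depth,a^1,z^1,a^2,z^2)$), the nonnegativity of the strategy entries, of $P^{\vz}_{\va}(s'|s)$, and of the belief lets the absolute value factor through and pull out $|\delta(\vth_\depth)|$, leaving
\[
\norm{\nxt(\delta,\vbeta_\depth)}_1 = \sum_{\vth_\depth} |\delta(\vth_\depth)| \sum_{a^1,z^1,a^2,z^2} \beta^1_\depth(\theta^1_\depth,a^1)\beta^2_\depth(\theta^2_\depth,a^2)\sum_{s,s'} P^{\vz}_{\va}(s'|s)\,b(s|\vth_\depth).
\]

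The heart of the argument is to show the inner coefficient equals $1$ for every $\vth_\depth$, by telescoping three normalization identities in the right order: summing $P^{\vz}_{\va}(s'|s)$ over $(s',\vz)$ gives $1$ (it is a distribution over next state and joint observation), summing the resulting $b(s|\vth_\depth)$ over $s$ gives $1$, and summing $\beta^i_\depth(\theta^i_\depth,a^i)$ over $a^i$ gives $1$ for each player. This yields $\norm{\nxt(\delta,\vbeta_\depth)}_1 = \sum_{\vth_\depth}|\delta(\vth_\depth)| = \norm{\delta}_1$, establishing the $1$-LC property (in fact with equality, so $\nxt(\cdot,\vbeta_\depth)$ is a $1$-norm isometry on differences). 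I do not anticipate a genuine obstacle here; the only care points are justifying the independence of $b(s|\vth_\depth)$ from $\occ_\depth$ and $\vbeta_\depth$, and keeping the index bookkeeping clean so that distinct $\vth_\depth$ never feed the same $\vth_{\depth+1}$.
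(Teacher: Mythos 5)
Your proof is correct, and the linearity half is essentially identical to the paper's (read the claim off the closed form of $\nxt$ from \Cref{eq|transition}, noting that $b(s|\vth_\depth)$ depends only on the model and $b_0$, not on $\occ_\depth$ or $\vbeta_\depth$). The Lipschitz half, however, takes a genuinely different (and sharper) route. The paper abstracts the map $\occ_\depth \mapsto \nxt(\occ_\depth,\vbeta_\depth)$ into an arbitrary matrix $M$ with nonnegative entries whose per-input-coordinate mass sums to $1$, and then applies the triangle inequality:
\begin{align*}
\norm{M\vx - M\vy}_1 \;\leq\; \sum_{j}\sum_{i} M_{i,j}\,\abs{x_i - y_i} \;=\; \norm{\vx-\vy}_1,
\end{align*}
which proves contraction for \emph{any} column-stochastic kernel, regardless of how mass from different inputs mixes. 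You instead exploit the specific combinatorial structure of the history tree: since $\vth_{\depth+1}$ determines its prefix $\vth_\depth$ uniquely, distinct current histories feed \emph{disjoint} blocks of next histories, so each output coordinate involves a single input coordinate, the absolute value factors exactly, and the telescoping of the three normalizations ($P$, then $b$, then the $\beta^i$) yields the identity $\norm{\nxt(\occ'_\depth,\vbeta_\depth) - \nxt(\occ_\depth,\vbeta_\depth)}_1 = \norm{\occ'_\depth - \occ_\depth}_1$. What each buys: your argument gives the stronger conclusion that $\nxt(\cdot,\vbeta_\depth)$ is a $1$-norm isometry on differences, but it relies on the no-merging property of joint \aoh{}s (it would break if, e.g., probabilistically equivalent histories were merged by a compression step); the paper's argument only yields the inequality but remains valid for any Markov transition matrix, which is the more robust fact and the one actually needed downstream.
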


Also, the expected instant reward at any $\depth$ is linear in
$\occ_\depth$ (\cf proof of \Cref{lem|occSufficient},
\extCshref{app|fromTo}), and thus so is the expected value of a finite-horizon strategy profile
from $\depth$ onwards (\Cref{lem|V|lin|occ}).
This leads to $V^*_\depth$ being LC in $\occ_\depth$
(\Cref{cor|V|LC|occ}).

\begin{restatable}[Proof in \extCshref{proofLemVLinOcc}]{lemma}{lemVLinOcc}
  \labelT{lem|V|lin|occ}
  \IfAppendix{{\em (originally stated on
    page~\pageref{lem|V|lin|occ})}}{}
At depth $\depth$,
  $V_\depth(\occ_\depth,\vbeta_{\depth:})$ is linear w.r.t.  $\occ_\depth$.
\end{restatable}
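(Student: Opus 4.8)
The plan is to prove the claim by backward induction on the time step $\depth$, running from $H-1$ down to $0$, and to carry through the \emph{stronger} induction hypothesis that for a fixed strategy profile $\vbeta_{\depth:}$ there exists a vector $\alpha_{\vbeta_{\depth:}}$, indexed by joint \aoh{}s $\vth_\depth$ and independent of $\occ_\depth$, such that $V_\depth(\occ_\depth,\vbeta_{\depth:}) = \occ_\depth \cdot \alpha_{\vbeta_{\depth:}}$ for every $\occ_\depth \in \Occ_\depth$. Linearity w.r.t.\ $\occ_\depth$ is then immediate, and keeping the explicit $\alpha$-vector form is what makes the compositional step below go through.

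First I would set up the one-step Bellman recursion. Peeling the first reward off the finite-horizon sum in \Cref{eq|subgame} and invoking the Markov property of the occupancy state (\Cref{lem|occSufficient})---which guarantees the tail expectation from $\depth+1$ onwards depends on the past only through the next \os{} $\occ_{\depth+1} = \nxt(\occ_\depth,\vbeta_\depth)$---yields
\begin{align*}
  V_\depth(\occ_\depth,\vbeta_{\depth:})
  = r(\occ_\depth,\vbeta_\depth) + \gamma\, V_{\depth+1}\bigl(\nxt(\occ_\depth,\vbeta_\depth),\vbeta_{\depth+1:}\bigr).
\end{align*}
For the base case $\depth=H-1$ the tail is empty, so $V_{H-1}=r(\occ_{H-1},\vbeta_{H-1})$, which is linear in $\occ_{H-1}$ directly from \Cref{eq|reward}: since the filtered belief $b(s\mid\vth_{H-1})$ depends only on $\vth_{H-1}$ (and $b_0$), not on $\occ_{H-1}$, the reward is a fixed inner product $\occ_{H-1}\cdot\alpha^r_{\vbeta_{H-1}}$.

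For the inductive step I would assume $V_{\depth+1}(\cdot,\vbeta_{\depth+1:}) = (\cdot)\cdot\alpha_{\vbeta_{\depth+1:}}$. By \Cref{lem|occ|lin}, $\nxt(\cdot,\vbeta_\depth)$ is linear in its \os{} argument, i.e.\ there is a matrix $M_{\vbeta_\depth}$ with $\nxt(\occ_\depth,\vbeta_\depth)=M_{\vbeta_\depth}\,\occ_\depth$ (its entries given by \Cref{eq|transition}). Composing a linear function with a linear map is linear, so the second term equals $\gamma\,\occ_\depth\cdot\bigl(M_{\vbeta_\depth}^\t\alpha_{\vbeta_{\depth+1:}}\bigr)$; adding the linear reward term gives $V_\depth(\occ_\depth,\vbeta_{\depth:})=\occ_\depth\cdot\alpha_{\vbeta_{\depth:}}$ with $\alpha_{\vbeta_{\depth:}}\eqdef\alpha^r_{\vbeta_\depth}+\gamma\,M_{\vbeta_\depth}^\t\alpha_{\vbeta_{\depth+1:}}$, which closes the induction.

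The only genuine subtlety---the step I would take most care with---is this composition: the argument is valid only because $V_{\depth+1}(\cdot,\vbeta_{\depth+1:})$ is linear as a function on the whole ambient vector space (not merely on the occupancy simplex), so that it can be precomposed with $M_{\vbeta_\depth}$. This is precisely why I propagate the $\alpha$-vector representation through the recursion rather than an abstract ``linearity on $\Occ_\depth$'' statement. Everything else---linearity of the instant reward and of $\nxt$ in $\occ_\depth$---is handed to us by \Cref{eq|reward} and \Cref{lem|occ|lin}, and the discounted sum of linear terms is again linear.
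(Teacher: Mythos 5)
Your proof is correct and follows essentially the same route as the paper's: backward induction from $\depth=H-1$, with the base case given by the linearity of the expected reward in $\occ_{H-1}$ and the inductive step obtained by composing the linear map $\nxt(\cdot,\vbeta_\depth)$ (\Cref{lem|occ|lin}) with the inductively linear $V_{\depth+1}(\cdot,\vbeta_{\depth+1:})$, then adding the linear reward term. Your explicit $\alpha$-vector bookkeeping is merely a more concrete rendering of the paper's composition argument (and a sound way to make the ``linear on the simplex vs.\ on the ambient space'' point airtight), not a genuinely different method.
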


\begin{restatable}[Proof in \extCshref{proofCorVLCOcc}]{theorem}{corVLCOcc}
  \labelT{cor|V|LC|occ}
  \IfAppendix{{\em (originally stated on
      page~\pageref{cor|V|LC|occ})}}{}
Let
  $\h{H}{\depth}{\gamma} \eqdef \frac{1-\gamma^{H-\depth}}{1-\gamma}$
(or $\h{H}{\depth}{\gamma} \eqdef H-\depth$ if $\gamma=1$).
Then $V^*_\depth(\occ_\depth)$ is $\lt{\depth}$-Lipschitz continuous in
  $\occ_\depth$ at any depth $\depth \in \{0 \twodots H-1\}$, where
  $\lt{\depth} = \frac{1}{2} \h{H}{\depth}{\gamma} \left( r_{\max} - r_{\min} \right)$. 
\end{restatable}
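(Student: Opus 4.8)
The plan is to exploit that $V^*_\depth$ is a max-min, over strategy profiles, of the per-profile values $V_\depth(\occ_\depth,\vbeta_{\depth:})$, each of which is \emph{linear} in $\occ_\depth$ by \Cref{lem|V|lin|occ}, and to show that all of these linear functions share one common Lipschitz constant $\lt{\depth}$. Since taking pointwise infima and suprema of a family of $\lt{\depth}$-Lipschitz functions again yields a $\lt{\depth}$-Lipschitz function, the result follows from the definition $V^*_\depth(\occ_\depth) = \max_{\beta^1_{\depth:}}\min_{\beta^2_{\depth:}} V_\depth(\occ_\depth,\beta^1_{\depth:},\beta^2_{\depth:})$.

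First, fix a profile $\vbeta_{\depth:}$. By \Cref{lem|V|lin|occ} I can write $V_\depth(\occ_\depth,\vbeta_{\depth:}) = \sum_{\vth_\depth} \occ_\depth(\vth_\depth)\, c_{\vbeta_{\depth:}}(\vth_\depth)$, where $c_{\vbeta_{\depth:}}(\vth_\depth) = \E[\sum_{t=\depth}^{H-1} \gamma^{t-\depth} r(S_t,A_t) \mid \vth_\depth, \vbeta_{\depth:}]$ is the expected discounted return conditioned on the joint \aoh{} $\vth_\depth$. As each instant reward lies in $[r_{\min}, r_{\max}]$ and the discount weights sum to $\h{H}{\depth}{\gamma}$, every coefficient satisfies $c_{\vbeta_{\depth:}}(\vth_\depth) \in [\h{H}{\depth}{\gamma}\, r_{\min},\, \h{H}{\depth}{\gamma}\, r_{\max}]$.

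The crux is to use the simplex constraint to recover the factor $\tfrac{1}{2}$. Because $\occ_\depth$ and $\occ'_\depth$ are probability distributions, $\sum_{\vth_\depth}(\occ_\depth(\vth_\depth) - \occ'_\depth(\vth_\depth)) = 0$, so subtracting the midpoint $m \eqdef \tfrac{1}{2}\h{H}{\depth}{\gamma}(r_{\max}+r_{\min})$ from every coefficient does not change the value difference:
\begin{align*}
  V_\depth(\occ_\depth,\vbeta_{\depth:}) - V_\depth(\occ'_\depth,\vbeta_{\depth:})
  = \sum_{\vth_\depth} \bigl(c_{\vbeta_{\depth:}}(\vth_\depth) - m\bigr)\bigl(\occ_\depth(\vth_\depth) - \occ'_\depth(\vth_\depth)\bigr).
\end{align*}
Each recentred coefficient now lies in $[-\lt{\depth},\lt{\depth}]$ with $\lt{\depth} = \tfrac{1}{2}\h{H}{\depth}{\gamma}(r_{\max}-r_{\min})$, so Hölder's inequality gives $\abs{V_\depth(\occ_\depth,\vbeta_{\depth:}) - V_\depth(\occ'_\depth,\vbeta_{\depth:})} \leq \lt{\depth}\,\norm{\occ_\depth - \occ'_\depth}_1$, uniformly in $\vbeta_{\depth:}$.

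Finally, I transfer this uniform constant through the max-min. Writing $g_{\beta^1_{\depth:}}(\occ) \eqdef \min_{\beta^2_{\depth:}} V_\depth(\occ,\beta^1_{\depth:},\beta^2_{\depth:})$, and letting $\beta^2_{\depth:}$ attain the minimum at the reference point $\occ'_\depth$ (attainment holds by continuity of $V_\depth$ and compactness of the strategy set, consistent with \Cref{theo|ConvexConcaveV}), I get $g_{\beta^1_{\depth:}}(\occ_\depth) \leq V_\depth(\occ_\depth,\beta^1_{\depth:},\beta^2_{\depth:}) \leq g_{\beta^1_{\depth:}}(\occ'_\depth) + \lt{\depth}\norm{\occ_\depth-\occ'_\depth}_1$, and symmetrically the reverse bound, so $g_{\beta^1_{\depth:}}$ is $\lt{\depth}$-LC; the same argument for the outer maximization preserves the constant, whence $V^*_\depth$ is $\lt{\depth}$-LC. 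The main obstacle is not the Lipschitz bookkeeping but recognizing that the normalization (simplex) constraint on occupancy states is exactly what turns the naive range bound $\h{H}{\depth}{\gamma}(r_{\max}-r_{\min})$ into the factor-$\tfrac{1}{2}$ bound; the only other point to verify carefully is the attainment of the extrema, so that the sup/inf stability step applies with genuine maximizers and minimizers.
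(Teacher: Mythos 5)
Your proof is correct and takes essentially the same route as the paper's: per-profile values $V_\depth(\cdot,\vbeta_{\depth:})$ are linear in $\occ_\depth$ (\Cref{lem|V|lin|occ}), uniformly bounded in $[\h{H}{\depth}{\gamma}\,r_{\min},\,\h{H}{\depth}{\gamma}\,r_{\max}]$, hence uniformly $\lt{\depth}$-Lipschitz in $1$-norm, and this constant passes through the $\max$-$\min$. The only differences are that where the paper simply cites Hor\'ak's Lemma~3.5 for the factor-$\frac{1}{2}$ step, you rederive it inline via the recentering/H\"older argument (a nice self-contained addition), and that the attainment of extrema you flag as needing care is actually unnecessary: an $\inf$ or $\sup$ of a family of uniformly $\lt{\depth}$-Lipschitz functions is $\lt{\depth}$-Lipschitz with no attainment hypothesis, which is exactly how the paper's one-line $\max$-$\min$ inequality proceeds.
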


\subsection{Bounding Approximations of $V^*$, $W^{1,*}$ and $W^{2,*}$}
\label{sec|approximations}

We now derive bounding approximations of (i) $V^*$ to compute the gap
$\upb V_\depth(\occ_\depth) - \lob V_\depth(\occ_\depth)$, and (ii) $W^{1,*}$ and $W^{2,*}$ to efficiently solve the upper- and
lower-bounding local games at $\occ_\depth$ obtained by replacing
$V^*_{\depth+1}$ by $\upb V_{\depth+1}$ or $\lob V_{\depth+1}$ in
\Cshref{eq|localGame} when $\depth<H-1$.\footnote{At $\depth=H-1$, the
  game
  $\max_{\beta^1_{H-1}} \min_{\beta_{H-1}^2} \vr(\occ_{H-1},
  \beta^1_{H-1}, \beta^2_{H-1})$ is solved as an LP
  \citep[Sec.~3]{WigOliRoi-corr16}).}

\paragraph{Bounding $V^*$ --} Using both $V^*_\depth$'s concavity property
(\Cref{theo|ConvexConcaveV}) and its Lipschitz continuity
(\Cref{lem|occ|lin}) allows deriving the following upper bound approximation
(details in \extCshref{app|derivingApproximationsV}, including its
symmetric $\lob{V}_\depth(\occ_\depth)$) as the lower-envelope of
several upper bounds:
\begin{align*}
  \upb{V}_\depth(\occ_\depth)
  & = \min_{  \langle \tilde\occ_\depth^{c,1}, \upb\nu^2_\depth  \rangle \in \upb{bagV}_\depth } \left[ \occ_\depth^{m,1} \cdot \upb\nu^2_\depth + \lt{\depth} \norm{ \occ_\depth - \occ_\depth^{m,1}\tilde\occ_\depth^{c,1} }_1 \right],
\end{align*}
where $\upb{bagV}_\depth$ is a set of data points wherein, for each
$\tilde\occ_\depth^{c,1}$, (i) the vector $\upb\nu^2_\depth$ upper bounds
$\nu^2_{[\tilde\occ^{c,1}_\depth, \beta^2_{\depth:}]}$ for some
$\beta^2_{\depth:}$ (\Cshref{sec|CCV}), so that (ii) the scalar product gives an upper-bounding hyperplane under fixed
$\tilde\occ_\depth^{c,1}$, and (iii) the Lipschitz term allows generalizing to any $\occ_\depth^{c,1}$.

\paragraph{Upper Bounding $W^{1,*}$ --} $V^*_{\depth+1}$ being Lipschitz in $\occ_{\depth+1}$
(\Cref{cor|V|LC|occ}), and
exploiting linearity and independence properties of
$\nxt_m^1(\occ_\depth,\vbeta_\depth)$ and
$\nxt_c^1(\occ_\depth,\vbeta_\depth)$
(Lemmas~\extref{lem|T1mlin}+\extref{lem|T1cindep},
\extCshref{app|derivingApproximationsW}), we can derive an {\em upper} bound approximation $\upb{W}^1_\depth$
of $W^{1,*}_\depth$ (and conversely a {\em lower} bound
approximation $\lob{W}^2_\depth$ of $W^{2,*}_\depth$) by using finitely
many tuples $\langle \tilde\occ^{c,1}_\depth, \beta_\depth^2, \upb\nu^2_{\depth+1} \rangle$
stored in a set $\upb{bagW}^1_\depth$ (\cf
\extCshref{app|derivingApproximationsW}):
\begin{align}
  \label{eq|Wupb}
  \upb{W}^1_\depth(\occ_\depth, \beta^1_\depth) 
  &  =  \min_{  \substack{ \langle \tilde\occ^{c,1}_\depth, \beta^2_\depth, \upb\nu^2_{\depth+1} \rangle \in \upb{bagW}^1_\depth } }
  {\beta^1_\depth}^\t \cdot \Big[ {
    r(\occ_\depth, \cdot, \beta^2_\depth)
    + \gamma 
    \nxt_m^1(\occ_\depth, \cdot, \beta^2_\depth) \cdot \upb\nu^2_{\depth+1}
  } 
  \\
  & \qquad { 
    + \gamma \lt{\depth+1} \cdot \norm{ \nxt(\occ_\depth, \cdot, \beta^2_\depth) - \nxt_m^1(\occ_\depth, \cdot, \beta^2_\depth) \nxt^1_c(\tilde\occ^{c,1}_\depth, \beta^2_\depth) }_1
  } \Big].
\nonumber
\end{align}
For $\depth=H-1$, only the reward term is preserved.

We now look at the operators used to manipulate the approximations.

\subsection{Related Operators}
\label{sec|relatedOperators}

\paragraph{Selection Operator} As detailed in \extCshref{sec|getLP}, given a distribution
$\delta^2_\depth$ over tuples
$\langle \tilde\occ^{c,1}_\depth, \beta^2_\depth, \upb\nu^2_{\depth+1}
\rangle$ in $\upb{W}^1_\depth$, we can now upper bound the value of
``profile'' $\langle \beta^1_\depth, \delta^2_\depth \rangle$ when in
$\occ_\depth$ as
${\beta^1_\depth}^\t \! \cdot M^{\occ_\depth} \! \cdot \delta^2_\depth$, with $M^{\occ_\depth}$ an
$|\Theta^1_\depth \times \cA^1| \times |\upb{bagW}^1_\depth|$ matrix
(with null columns for improbable histories $\theta^1_\depth$ under
$\occ^1_\depth$).
Solving
$\max_{\beta^1_\depth} \upb{W}^1_\depth(\occ_\depth, \beta^1_\depth)$
can then be written as solving a zero-sum game where pure strategies are:
for $1$, the choice of $|\Theta^1_\depth|$ actions and,
for $2$, the choice of $1$ element of $\upb{bagW}^1_\depth$.
The corresponding \ifextended{linear program}{LP}, $\lp{\upb{W}^1_\depth}(\occ_\depth)$, is:
\begin{align}
  \label{eq|LP}
  & \begin{array}{l@{\ }l@{\ }ll}
      \displaystyle
      \max_{\beta_\depth^1,v}
      v
      \quad \text{s.t. } & \text{(i)}
      & \forall w \in \upb{bagW}^1_\depth, & v \leq {\beta_\depth^1}^\t \! \cdot M^{\occ_\depth}_{(\cdot,w)}
      \\
      & \text{(ii)}
      & \forall \theta_\depth^1 \in \Theta_\depth^1,
      & {\displaystyle \sum_{a^1}} \beta_\depth^1(a^1|\theta_\depth^1)
        = 1,
    \end{array}
    \intertext{and its dual, $\dlp{\upb{W}^1_\depth}(\occ_\depth)$, is}
\label{eq|DLP}
    & \begin{array}{l@{\ }l@{\ }ll}
        \displaystyle
        \min_{\delta^2_\depth,v}
        v
        \quad \text{s.t. }
        & \text{(i)}
        & \forall (\theta^1_\depth, a^1 ) \in \Theta^1_\depth\times \cA^1, & v \geq  M^{\occ_\depth}_{((\theta^1_\depth,a^1),\cdot)} \cdot \delta^2_\depth
        \\
        & \text{(ii)}
        &
        & {\displaystyle \sum_{w \in \upb{bagW}^1_\depth}} \! \delta^2_\depth( w )
          = 1.
      \end{array}
    \end{align}

\paragraph{Strategy Extraction} \label{sec|stratExtraction}
Any tuple $w_\depth \in \upb{bagW}^1_{\depth}$ contains \begin{itemize}
\item a default strategy for $2$ if this is an initial $w_\depth$, and
\item both (i) a decision rule $\beta^2_\depth[w_\depth]$, and
(ii) a probability distribution $\delta^2_{\depth+1}$ over tuples
  $w_{\depth+1} \in \upb{bagW}^1_{\depth+1}$ (unless $\depth=H-1$) otherwise.
\end{itemize}
As a consequence, each such tuple $w_\depth$ induces a
recursively-defined strategy for $2$.\footnote{This new space of {\em
    recursive strategies} trivially contains the space of behavioral
  strategies, which correspond to recursive strategies whose
  intermediate distributions are degenerate.}

$\delta^2_\depth$ needs to be stored as this strategy will play a key
role in the following, hence the new definitions of $\upb{V}_\depth$
and $\upb{W}^1_{\depth-1}$ (which rely essentially on the same
information (when $\depth>1$) and will be discussed together): $\upb{bagV}_\depth$ contains tuples
$\langle { \occ^{c,1}_\depth, \langle \delta^2_{\depth}, \upb\nu^2_\depth
  \rangle
} \rangle$,
and $\upb{bagW}^1_{\depth-1}$ (for $\depth\geq 1$) related tuples
$\langle { \occ^{c,1}_{\depth-1}, \beta^2_{\depth-1}, \langle \delta^2_{\depth}, \upb\nu^2_\depth
  \rangle } \rangle$.

Note: For convenience, we explain how to derive an equivalent behavioral strategy in \Cref{app|stratExtraction}.

\paragraph{Initializations}
\label{sec|bound|init}

One can look for an upper bound of \ifextended{the optimal value
  function}{} $V^*$, \ie, an optimistic bound (an admissible
heuristic) for (maximizing) player $1$, by relaxing the problem \ze{} faces.
To that end, we here solve the POMDP obtained when $2$ is assigned a
uniformly random $\beta^{2,\oslash}_{0:}$, the resulting best response
being noted $\beta^{1,\otimes}_{0:}$.
At depth $\depth$, $\langle \beta^{1,\otimes}_{0:}, \beta^{2,\oslash}_{0:} \rangle$ induces (i) an \os{} $\occ_\depth$ and (ii) a vector $\upb\nu^2_\depth$, where
$\upb\nu^2_\depth(\theta^1_\depth)$ is the value of
$\beta^{1,\otimes}_{0:}$ in $\theta^1_\depth$ (against
$\beta^{2,\oslash}_{0:}$ and under $\occ^{c,1}_\depth$).
Given these strategies, each $\upb{bagV}_\depth$ (respectively
$\upb{bagW}^1_{\depth-1}$) is initialized as $\{ \langle { \occ^{c,1}_\depth, \langle \delta^{2,\oslash}_\depth, \upb\nu^2_\depth \rangle } \rangle  \}$ (resp. $\{ \langle { \occ^{c,1}_{\depth-1}, \beta^{2,\oslash}_{\depth-1}, \langle \delta^{2,\oslash}_\depth, \upb\nu^2_\depth \rangle } \rangle \} $), where $\delta^{2,\oslash}_\depth$ is a degenerate distribution over the only
element in $\upb{bagW}^1_\depth$.

\paragraph{Updating $\upb{V}_\depth$ and $\upb{W}^1_{\depth-1}$} 

As depicted in \Cref{alg|zsPOSGwithLP+VWs+},
\crefrange{alg|UpdateFunction}{alg|UpdateFunction|end},
$\upb{V}_\depth$ and $\upb{W}^1_{\depth-1}$ are updated
simultaneously.
Given a tuple
$\langle \occ_\depth, \occ^{c,1}_{\depth-1}, \beta^2_{\depth-1}
\rangle$ (partly undefined if $\depth=0$), solving the dual LP (\ref{eq|DLP}), which relies on
$\upb{bagV}_{\depth+1}$, gives both $\delta^2_\depth$ and, as a
by-product, $\upb \nu^2_\depth$.
Indeed, assuming that (i) $2$ follows strategy $\delta^2_\depth$ and (ii) the expected return from $\depth+1$ on is given by
$\upb{V}_{\depth+1}$ ($=0$ if $\depth+1=H$), the value of $1$'s best action $a^1$ at $\theta^1_\depth$ is
upper bounded by:
\begin{align}
  \upb\nu^2_\depth(\theta^1_\depth)
  & =  \frac{1}{\occ^{1}_{\depth,m}(\theta^1_\depth)} \max_{a^1 \in \cA^1}
  M^{\occ_\depth}_{((\theta^1_\depth, a^1), . )} \cdot \delta^2_\depth
  & \text{(\cf \extCshref{prop|rec|nu}, \extCshref{sec|compute|nu}).}
  \label{eq|nu}
\end{align}
One then needs to add
$\langle \occ^{c,1}_\depth, \langle
\delta^2_\depth, \upb\nu^2_\depth \rangle \rangle$ to $\upb{bagV}_\depth$, and 
(if $\depth \geq 1$) $\langle \occ^{c,1}_{\depth-1}, \beta^2_{\depth-1}, \langle \delta^2_\depth, \upb\nu^2_\depth \rangle \rangle$ to $\upb{bagW}^1_{\depth-1}$.

\paragraph{Pruning}

Because they have different forms, $\upb{bagV}_\depth$ and
$\upb{bagW}^1_{\depth-1}$ have to be pruned independently.
Yet, to preserve the recursively defined strategies $\delta^2_\depth$,
pruned tuples should be kept in memory.

$\upb{V}_\depth$ relies on a ``$\min$-surfaces'' (rather than
``$\min$-planes'') representation, where each surface is linear in
$\occ^{m,1}_\depth$ and exploits the Lipschitz-continuity.
This allows exploiting (inverted) POMDP $\max$-planes pruning methods
so that, as explained in \extCref{lem|pruningV} (\extCshref{app|pruningV}), whether a test may induce false positives (pruning non-dominated
elements) or false negatives (not pruning dominated elements) carries
on from the $\min$-planes setting to our $\min$-surfaces setting.

For its part, $\upb{W}^1_\depth$ involving (i) a reward term that is bilinear (linear in both $\occ_\depth$ and
$\beta^1_\depth$), and (ii) a possibly non-continuous term,
deriving pruning techniques is not as
straightforward.
While solving local games may significantly benefit from pruning
$\upb{W}^1_\depth$, we leave this issue for future work.

\paragraph{About Improbable Histories}
To save on time and memory, we do not store \dr{}s and components of
vectors $\upb\nu^2_\depth$ for 0-probability \aoh{}s
$\tilde\theta^1_\depth$ in current $\occ^{m,1}_\depth$, which carry
little relevant information.
This leads to replacing, when computing $M^{\occ_\depth}$, undefined
components of vectors $\upb\nu^2_\depth$ by a heuristic overestimate
such as (\cf \extCshref{app|improbableAOHs}): $\upb\nu_{\text{init}}$
(not admissible), and $\upb\nu_{\text{bMDP}}$ (admissible).

\section{HSVI for zs-POSGs}
\label{sec|HSVI}

\subsection{Algorithm}

HSVI for zs-OMGs, which seeks $\epsilon$-optima, is described in \Cref{alg|zsPOSGwithLP+VWs+}.
As vanilla HSVI, it relies on
(i) generating trajectories while acting optimistically (lines
\ref{alg|greedP1}+\ref{alg|greedP2}), \ie, player $1$ (resp. $2$)
acting ``greedily'' w.r.t. $\upb{W}^1_\depth$ (resp. $\lob{W}^2_\depth$),
and
(ii) locally updating the upper and lower bound approximations
(lines \ref{alg|updateUpB}+\ref{alg|updateLoB}).
Both phases rely on solving the same
normal-form games described by LP~(\ref{eq|LP}).
At $\depth=H-1$, \cref{alg|oneShot} selects \dr{}s by solving the
exact game (\Cshref{sec|approximations}), and \cref{line|computeDelta}
returns a distribution reduced to the single element added in
\cref{alg|oneShotAddW}.
Note that the implementation maintains {\em full} occupancy states
$o_\depth \in \Delta(\cS \times \vTh_\depth)$, which allow easily
retrieving both ``simple'' \os{}s
$\occ_\depth \in \Occ_\depth=\Delta(\vTh_\depth)$ and ``beliefs''
$b(s|\vth_\depth)$.

A key difference with \citeauthor{SmiSim-uai05}' HSVI algorithm \citep{SmiSim-uai05} lies in
the criterion for stopping trajectories.
In vanilla HSVI (for POMDPs), the finite branching factor allows
looking at the convergence of $\upb{V}$ and $\lob{V}$ at each point reachable
under an optimal strategy.
To ensure $\epsilon$-convergence at $\occ_0$, trajectories just need
to be interrupted when the current width at $\occ_\depth$
( $\eqdef \upb{V}_\depth(\occ_\depth) - \lob{V}_\depth(\occ_\depth)$ )
is smaller than a threshold
$\gamma^{-\depth} \epsilon$.
(This happens even if $\gamma=1$ due to the approximation's width
falling to 0 beyond $H$.)
Here, dealing with an infinite branching factor, one may converge
towards an optimal solution while always visiting new points of the
occupancy space.
To counter this, we bound the width within balls around visited
points by exploiting $V^*$'s Lipschitz continuity.
This is achieved by adding a term
$- \sum_{i=1}^\depth 2 \radius \l_{\depth-i} \gamma^{-i}$ 
\citep{HorBosPec-aaai17} (even if $\gamma=1$)
to ensure that the width is below $\gamma^{_\depth} \epsilon$ within a
ball of radius $\radius$ around the current point (here
$\occ_\depth$),
hence the threshold
\begin{align}
  \label{eq|def|thr}
  \thr(\depth) & \eqdef
    \gamma^{-\depth}\epsilon - \sum_{i=1}^\depth 2 \radius \lt{\depth-i} \gamma^{-i}.
\end{align}

\begin{algorithm}
  \caption{zs-OMG-HSVI($b_0, [ \epsilon, \radius ]$) \small [here returning solution strategy $\lob\delta^1_0$ for Player $1$]}
  \label{alg|zsPOSGwithLP+VWs+}
\DontPrintSemicolon
  \SetKwFunction{FupbUpdate}{$\upb{\text{\bf Update}}$}
  \SetKwFunction{FlobUpdate}{$\lob{\text{\bf Update}}$}
  \SetKwFunction{FRecursivelyTry}{\textbf{Explore}} \SetKwFunction{FzsOMGHSVI}{\textbf{zs-OMG-HSVI}}

  {  \scalefont{.9}

    \begin{multicols}{2}

\Fct{\FzsOMGHSVI{$b_0 \simeq \occ_0$}}{ $\forall \depth \in 0\twodots H-1$, initialize  $\upb{V}_\depth$, $\lob{V}_\depth$, $\upb{W}^1_\depth$, \& $\lob{W}^2_\depth$  \;
        \While{ $\left[ \upb{V}_0(\occ_0)- \lob{V}_0(\occ_0) > \thr(0) \right]$ }{
          \FRecursivelyTry{$\occ_0, 0, -, -$}
        }
$ \delta_0^1 \gets {\displaystyle
          \argmax_{ \substack{
              \langle \occ^{c,1}_0, \langle \delta^1_0, \lob\nu^1_0 \rangle \rangle \in \lob{bagV}_0
            } } } \Big( {
          \occ^{m,2}_0 \cdot \lob\nu^1_0 +  \lt{0} \cdot 0
        } \Big)
        $ \;
        \Return{$\delta_0^1$} }

      \vfill
      
\Fct{\FupbUpdate{$\upb{V}_\depth, \upb{W}^1_{\depth-1}, \langle { \occ_\depth, \occ^{c,1}_{\depth-1}, \lob\beta^2_{\depth-1} } \rangle $ } }   {
        \label{alg|UpdateFunction}
$\langle \delta_\depth^2, \upb\nu^2_\depth \rangle \gets \dlp{\upb{W}^1_\depth}(\occ_\depth, \delta_\depth^2)$ \label{line|computeDelta} \;
$\upb{bagV}_\depth \gets \upb{bagV}_\depth \cup \{ \langle { \occ^{c,1}_\depth, \langle \delta^2_\depth, \upb\nu^2_\depth \rangle
        } \rangle \} $ \;
        $\upb{bagW}^1_{\depth-1} \gets \upb{bagW}^1_{\depth-1} \cup \{ \langle { \occ^{c,1}_{\depth-1}, \lob\beta^2_{\depth-1}, \langle \delta^2_\depth, \upb\nu^2_\depth \rangle
        } \rangle \} $
        \label{alg|UpdateFunction|end}
      }

\Fct{\FRecursivelyTry{$\occ_\depth, \depth, \occ_{\depth-1}, \vbeta_{\depth-1} $}}
      { \If{$\left[ \upb{V}_\depth(\occ) -\lob{V}_\depth(\occ) > thr(\depth) \right]$}{
\eIf{$\depth<H-1$}{
            $\upb\beta_\depth^1 \gets \lp{\upb{W}^1_\depth}(\occ, \beta_\depth^1)$ \label{alg|greedP1} \;
$\lob{\beta}_\depth^2 \gets \lp{\lob{W}^2_\depth}(\occ, \beta_\depth^2)$ \label{alg|greedP2} \;
\FRecursivelyTry{$ \nxt(\occ_\depth, \upb\beta_\depth^1, \lob\beta_\depth^2), \depth+1,\occ_\depth, \langle \upb\beta^1_\depth, \lob\beta^2_\depth \rangle
$}
          }({($\depth=H-1$)}){
            $(\upb\beta_\depth^1, \lob\beta^2_\depth) \gets \nes \left( r(\occ, \beta^1_\depth, \beta^2_\depth) \right)$ \label{alg|oneShot} \; 
$\upb{bagW}^1_{\depth} \gets \upb{bagW}^1_{\depth} \cup \{ \langle { \occ^{c,1}_{\depth}, \lob\beta^2_\depth, -
            } \rangle \} $ \label{alg|oneShotAddW} \; $\lob{bagW}^2_{\depth} \gets \lob{bagW}^2_{\depth} \cup \{ \langle { \occ^{c,2}_{\depth}, \upb\beta^1_\depth,  -
            } \rangle \} $ \; }
$\FupbUpdate ({ \upb{V}_\depth, \upb{W}^1_{\depth-1}, \langle \occ_\depth, \occ^{c,1}_{\depth-1}, \lob\beta^2_{\depth-1} \rangle }) $ \label{alg|updateUpB}\;
$\FlobUpdate ({ \lob{V}_\depth, \lob{W}^2_{\depth-1}, \langle \occ_\depth, \occ^{c,2}_{\depth-1}, \upb\beta^1_{\depth-1} \rangle }) $  \label{alg|updateLoB} \;
}
}

\end{multicols}
  }
  
  \medskip
\end{algorithm}

\paragraph{Setting $\radius$}

As can be observed, this threshold function should always return
positive values, which requires a small enough (but $>0$) $\radius$.
For a given problem (\cf \extCshref{lem|MaxRadius},
\extCshref{sec|settingRadius}), the maximum possible value
$\radius_{\max}$ depends on the Lipschitz constants at each time step,
which themselves depend on the initial upper and lower bounds of the
optimal value function.
\ifdefined\extended
But what is the effect of setting $\radius \in (0,\radius_{\max})$ to small or large values?
\ifdefined\extended \begin{itemize} \item \fi The smaller $\radius$, the larger $\thr(\depth)$, the shorter the
  trajectories, but the smaller the balls and the higher the required
  density of points around the optimal trajectory, thus the more
  trajectories needed to converge.
\ifdefined\extended \item \fi The larger $\radius$, the smaller $\thr(\depth)$, the longer the
  trajectories, but the larger the balls and the lower the required
  density of points around the optimal trajectory, thus the less
  trajectories needed to converge.
\ifdefined\extended \end{itemize} \fi So, setting $\radius$ means making a trade-off between the number of
generated trajectories and their length.
\else
Then, setting $\radius \in (0,\radius_{\max})$ means making a trade-off
between generating many trajectories (small $\radius$) and long ones
(large $\radius$).
\fi

\subsection{Finite-Time Convergence} 

\begin{restatable}[Proof in \extCshref{sec|ConvergenceProof}]{theorem}{thmTermination}
  \labelT{thm|termination}
  \IfAppendix{{\em (originally stated on page~\pageref{thm|termination})}}{}
zs-OMG-HSVI (\Cref{alg|zsPOSGwithLP+VWs+}) terminates in
  finite time with an $\epsilon$-approximation of $V^*_0(\occ_0)$.
\end{restatable}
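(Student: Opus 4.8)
The plan is to follow the classical HSVI convergence scheme of \citet{SmiSim-uai05}, modified as in \citet{HorBosPec-aaai17} to cope with the infinite branching factor, splitting the argument into a \emph{soundness} part (the bounds always sandwich $V^*$) and a \emph{termination} part (only finitely many trajectories are generated). First I would prove soundness by induction over the sequence of updates: the initializations of \Cref{sec|bound|init} are admissible (the relaxation in which $2$ plays $\beta^{2,\oslash}$ yields an optimistic $\upb{V}$ for the maximizer, and symmetrically for $\lob{V}$), and each \texttt{Update} preserves the invariant $\lob{V}_\depth \le V^*_\depth \le \upb{V}_\depth$. This is where \Cref{theo|ConvexConcaveV}, \Cref{lem|Qcc} and \Cref{cor|V|LC|occ} enter: solving $\dlp{\upb{W}^1_\depth}$ on $\upb{W}^1_{\depth+1}$ produces a vector $\upb\nu^2_\depth$ that upper-bounds the POMDP value for the fitted conditional term, while the Lipschitz penalty $\lt{\depth}\norm{\cdot}_1$ extends this into a valid upper bound over the \emph{whole} occupancy simplex, not merely at finitely many conditional terms.

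Next I would establish the two quantitative ingredients driving termination. (a) \emph{Monotonicity}: each update can only lower $\upb{V}$ and raise $\lob{V}$ everywhere, so the per-point width $\upb{V}_\depth(\occ)-\lob{V}_\depth(\occ)$ is non-increasing in the number of updates. (b) \emph{Backup contraction with Lipschitz slack}: after updating at $\occ_\depth$ using the current bounds at $\depth+1$, the width at $\occ_\depth$ is at most $\gamma$ times the width at its successor $\nxt(\occ_\depth,\vbeta_\depth)$, and---by the Lipschitz continuity of both bounds (\Cref{cor|V|LC|occ})---the width stays controlled inside the ball of radius $\radius$ around $\occ_\depth$, the overhead being exactly the term $\sum_{i=1}^\depth 2\radius\lt{\depth-i}\gamma^{-i}$ subtracted in \Cref{eq|def|thr}. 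I would then check that $\thr(\depth)$ is chosen so that it stays strictly positive for $\radius\in(0,\radius_{\max})$, so that interrupting trajectories is possible, and so that ``width $\le\thr(\depth)$ in a $\radius$-ball at every reachable $\occ_\depth$'' unwinds, depth by depth up to the root, into $\upb{V}_0(\occ_0)-\lob{V}_0(\occ_0)\le\epsilon$.

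The decisive and most delicate step is finiteness. I would argue by contradiction: if the loop never halts then \textbf{Explore} is invoked infinitely often, and since every trajectory has length at most $H$, some depth $\depth$ is reached with width above $\thr(\depth)$ in infinitely many trajectories. The corresponding visited points all live in the compact simplex $\Occ_\depth$, so infinitely many of them fall within a single ball of radius $\radius$. The first update performed at a point of that ball should drive the width below $\thr(\depth)$ throughout the ball, by the Lipschitz ball-control of ingredient (b); any later visit inside the ball would then find width $\le\thr(\depth)$ and neither descend further nor re-update---contradicting that infinitely many such visits occur above threshold. The contradiction must, however, be organised as an induction from the leaves $\depth=H-1$ (where \cref{alg|oneShot} solves the game exactly, giving zero width) upward, because the width after an update at depth $\depth$ depends on the still-improving bounds at $\depth+1$: one first shows the deepest layers are entered finitely often, which caps the residual width feeding the backup at $\depth$, and only then closes the ball-covering argument at $\depth$.

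The main obstacle I anticipate is precisely this coupling: making the $\radius$-cover count at depth $\depth$ rigorous while the successor bounds at $\depth+1$ are still changing. The clean resolution is the depth induction together with ingredient (b)---once depths $>\depth$ are visited only finitely often, their bounds are eventually frozen within any fixed ball, so the backup at $\depth$ behaves like a genuine $\gamma$-contraction with a fixed Lipschitz overhead, and the finite $\radius$-cover of $\Occ_\depth$ then bounds the number of updates at $\depth$. Assembling soundness with termination yields $\upb{V}_0(\occ_0)-\lob{V}_0(\occ_0)\le\epsilon$ at the halting condition, which---since $\lob{V}_0(\occ_0)\le V^*_0(\occ_0)\le\upb{V}_0(\occ_0)$---gives the claimed $\epsilon$-approximation of $V^*_0(\occ_0)$.
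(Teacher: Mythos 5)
Your proposal is correct and follows essentially the same route as the paper's own proof: the paper isolates your ingredients (a) and (b) as preliminary lemmas (monotonic refinement of $\upb{W}^1_\depth/\lob{W}^2_\depth$, a contraction lemma giving $\upb{W}^1_\depth - \lob{W}^2_\depth \leq \gamma\,\thr(\depth+1)$ at the end of each trajectory, plus a comparison of $\upb{V}_\depth$ with $\max_{\beta^1}\upb{W}^1_\depth$), and then runs exactly your depth-wise induction from the horizon toward the root, using compactness of $\Occ_\depth$ to extract two visited points with $\norm{\occ_\depth-\occ'_\depth}_1\leq\radius$ and the arithmetic identity $\gamma\,\thr(\depth+1)+2\radius\lt{\depth}=\thr(\depth)$ to conclude that the second expansion should never have occurred. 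The only organizational difference is that the paper's termination proof takes the validity (soundness) of the bounds for granted from the approximation sections rather than re-establishing it inside the theorem, as you propose to do.
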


\begin{proof}(sketch adapted from \citet{HorBos-aaai19})
  \label{sketch|theo|IntersectionHypercubeSimplexe}
Assume for the sake of contradiction that the algorithm does not
  terminate and generates an infinite number of explore trials.
Then, the number of trials of length $T$ (for some
  $0 \leq T \leq H$) must be infinite.
It is impossible to fit an infinite number of occupancy points
  $\occ_T$ satisfying $\norm{\occ_T-\occ'_T}_{\p} > \radius$ within $\Occ_T$.
There must thus be two trials of length $T$,
  $\{\occ_{\depth,1}\}_{\depth=0}^T$ and
  $\{\occ_{\depth,2}\}_{\depth=0}^T$, such that
  $\norm{\occ_{T-1,1}-\occ_{T-1,2}}_{\p} \leq \radius$,
and one can show (as detailed in \extCref{sec|ConvergenceProof})
  that the second trial should not have happened.
\end{proof}

The finite time complexity suffers from the same combinatorial
explosion as for Dec-POMDPs, and is even worse as we have to handle
"infinitely branching" trees of possible futures.
More precisely, the bound on the number of iterations depends on the
number of balls of radius $\radius$ required to cover occupancy
simplexes at each depth.

Also, the following proposition allows solving infinite horizon problems as
well (when $\gamma<1$) by bounding the length of HSVI's trajectories
using the boundedness of $\upb{V}-\lob{V}$ and the exponential growth of
$thr(\depth)$.

\begin{restatable}[Proof in \extCshref{proofLemFiniteTrials}]{proposition}{lemFiniteTrials}
  \labelT{lem|finiteTrials}
  \IfAppendix{{\em (originally stated on
      page~\pageref{lem|finiteTrials})}}{}
When $\gamma<1$, using the depth-independent Lipschitz constant
  $\l^\infty$, and with
  $\WUL \eqdef \norm{ \upb{V}^{(0)}-\lob{V}^{(0)}}_\infty$ the maximum
  width between initializations, the length of trajectories is
  upper bounded by
  $ T_{\max}
    \eqdef \ceil*{
      \log_{\gamma} \frac{
        \epsilon - \frac{2 \radius \l^\infty}{1-\gamma}
      }{
        \WUL - \frac{2 \radius \l^\infty}{1-\gamma}
      }
}. $
\end{restatable}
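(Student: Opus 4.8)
The plan is to pit two monotone quantities against each other: the width $\upb{V}_\depth(\occ_\depth)-\lob{V}_\depth(\occ_\depth)$, which stays uniformly bounded throughout the run, and the stopping threshold $\thr(\depth)$, which grows with $\depth$ once $\radius$ is admissible. Since the \textbf{Explore} routine of \cref{alg|zsPOSGwithLP+VWs+} recurses from depth $\depth$ only when this width \emph{strictly} exceeds $\thr(\depth)$, it suffices to exhibit the first depth at which the threshold provably dominates every attainable width; beyond that depth no trajectory can continue.

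First I would bound the width uniformly in depth. Each update only lowers $\upb{V}$ and raises $\lob{V}$, so the gap at any reached \os never exceeds its initialization value, giving $\upb{V}_\depth(\occ_\depth)-\lob{V}_\depth(\occ_\depth)\le\norm{\upb{V}^{(0)}-\lob{V}^{(0)}}_\infty=\WUL$ at all times. Next I would put $\thr$ in closed form. Replacing each $\lt{\depth-i}$ by the common constant $\l^\infty\ge\lt{\depth-i}$ only shrinks the threshold, hence can only \emph{over}-estimate the trajectory length, which is exactly what an upper bound requires; summing the geometric series $\sum_{i=1}^\depth\gamma^{-i}=\frac{\gamma^{-\depth}-1}{1-\gamma}$ in \cref{eq|def|thr} then yields, with $C\eqdef\frac{2\radius\l^\infty}{1-\gamma}$, the compact form $\thr(\depth)=\gamma^{-\depth}(\epsilon-C)+C$. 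Admissibility of $\radius$ (\ie $\radius<\radius_{\max}$, which keeps $\thr(\depth)>0$ everywhere) forces $\epsilon-C>0$, so $\thr$ is strictly increasing in $\depth$.

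Then I would solve $\thr(\depth)\ge\WUL$. Substituting the closed form gives $\gamma^{-\depth}\ge\frac{\WUL-C}{\epsilon-C}$; taking logarithms in base $\gamma\in(0,1)$ reverses the inequality and yields $\depth\ge\log_\gamma\frac{\epsilon-C}{\WUL-C}$, whose least integer solution is exactly $T_{\max}$. At depth $T_{\max}$ the uniform width bound gives $\upb{V}_{T_{\max}}(\occ_{T_{\max}})-\lob{V}_{T_{\max}}(\occ_{T_{\max}})\le\WUL\le\thr(T_{\max})$, so the strict test of \textbf{Explore} fails and no expansion occurs; hence every trajectory has length at most $T_{\max}$.

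The delicate part is not the algebra but the well-posedness of the logarithm. One must check that $\frac{\epsilon-C}{\WUL-C}\in(0,1]$ so that $\log_\gamma(\cdot)\ge0$ and $T_{\max}$ is a genuine nonnegative bound. This reduces to the chain $0<C<\epsilon\le\WUL$: the inequality $C<\epsilon$ is precisely the admissibility condition $\radius<\radius_{\max}$ that keeps the threshold positive, while $\epsilon\le\WUL$ holds because the requested accuracy cannot exceed the initial worst-case gap. A final point of care is the off-by-one in counting a trajectory's length: because the test is strict, depth $T_{\max}$ may be reached but is never expanded, which is consistent with the stated bound of $T_{\max}$.
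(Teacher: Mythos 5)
Your proposal is correct and follows essentially the same route as the paper: bound the width everywhere by $\WUL$, use the closed form $\thr(\depth)=\gamma^{-\depth}\bigl(\epsilon-\tfrac{2\radius\l^\infty}{1-\gamma}\bigr)+\tfrac{2\radius\l^\infty}{1-\gamma}$ (the paper cites it from \Cref{lem|MaxRadius}, you re-derive it via the geometric series), invoke positivity of $\epsilon-\tfrac{2\radius\l^\infty}{1-\gamma}$ from the admissibility of $\radius$, and solve the resulting inequality in $\depth$ by taking $\log_\gamma$. Your added remarks on monotonicity of $\thr$ and well-posedness of the logarithm are sound refinements of the same argument, not a different proof.
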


As in the Dec-POMDP case, the length of trajectories required to
approximate a discounted criterion is non-exponential.

\subsection{Execution}
\label{sec|Execution}

As can be noted, any strategy $\delta^2_\depth$ in a tuple $w$
guarantees at most (\ie, at worst from 2's viewpoint) expected return
$\occ^{m,1}_\depth \cdot \upb\nu^2_\depth$ if in the associated
$\occ_\depth$, whatever 1's strategy.
This holds in particular at $\depth=0$, where $\occ_0$ always
corresponds to the initial \os.
Thus, if 2 executes a strategy
$\delta^{2,\star}_0 \in \argmax_{\langle { \occ^{c,1}_0, \langle \upb\nu^2_0, \delta^2_0 \rangle }\rangle} \occ^{m,1}_0 \cdot \upb\nu_0 $,
then \zer{} expected return is at most $\upb{V}_0(\occ_0)$
($\leq V^*_0(\occ_0) + \epsilon$) (whatever 1's strategy).
Solving the derived zs-OMG therefore provides a solution strategy for
each player in the original zs-POSG, and each player can derive \zer{}
strategy on \zer{} own (no need for a coordinating central planner as
for Dec-POMDPs).
For instance, \Cshref{alg|zsPOSGwithLP+VWs+} returns a solution
strategy only for $1$.

\section{Experiments}
\label{sec|XPs}

The experiments aim at validating the proposed approach.
Additional results appear in \extCref{sec|Experiments}.

\subsection{Setup}

\paragraph{Benchmark Problems}

Four benchmark problems were used.
Mabc and Recycling Robot are well-known Dec-POMDP benchmark problems (\cf \url{http://masplan.org}) and were adapted to our competitive setting by making Player $2$ minimize (rather than maximize) the objective function.
Adversarial Tiger and Competitive Tiger were introduced by
\citet{Wiggers-msc15}.
We only consider finite horizons and $\gamma=1$.

\paragraph{Algorithms}
\label{sec|XP|algorithms}

\Cref{alg|zsPOSGwithLP+VWs+} is denoted \omgHSVIlccc, while \omgHSVIlc denotes a variant relying only on the Lipschitz
continuity (\cf \extCref{app|LipschitzOnly}), and used to highlight
the importance of exploiting the concavity and convexity properties.
They are compared against Sequence Form LP \citep{KolMegSte-geb96}, and \citeauthor{Wiggers-msc15}' two heuristic algorithms, {\em Informed}
and {\em Random} \citep{Wiggers-msc15}, which rely on the concavity
and convexity.\footnote{We use Wiggers' own (unreleased) Sequence Form
  LP solver, but could only copy the results for the two heuristic
  algorithms (based on a single run despite their randomization) from
  \citep{Wiggers-msc15}.}

\omgHSVIlc{} ran with an error $\epsilon$ specified in \Cref{tab|resExp}, and $\l_\depth = H \cdot ( r_{\max} - r_{\min} )$.
\omgHSVIlccc{} ran with an error $\epsilon = 0.01$, $\l_\depth = H \cdot ( r_{\max} - r_{\min} )$, $\rho$ the middle of its feasible interval, and the heuristic estimate for missing components of
$\upb\nu^2_\depth$ indicated in
\Cref{tab|resExp}.\footnote{The inadmissible heuristic $\upb\nu_{\text{init}}$ failed only on Adversarial Tiger.}
We also use FB-HSVI's LPE lossless compression of probabilistically equivalent \aoh{}s in \os{}s, so as to reduce their dimensionality \citep{DibAmaBufCha-jair16}.

Experiments ran on an Ubuntu machine with i7-10810U 1.10\,GHz Intel processor and 16\,GB available RAM.
We intend to make the code available within coming months under MIT license.

\subsection{Results}
A first observation is that both \omgHSVIlccc{} and \omgHSVIlc{}
maintain valid lower and upper bounds of the optimal value at
$\occ_0$, and reduce the gap progressively (\cf figures in
\extCshref{sec|Experiments}).
\Cref{tab|resExp} shows that (i) \omgHSVIlccc{} is always better than \omgHSVIlc{} and \citeauthor{Wiggers-msc15}' \citep{Wiggers-msc15} algorithms, and (ii) unless running out of memory, Sequence Form LP always outperforms \omgHSVIlccc{}. 
However, the LPE compression allows \omgHSVIlccc{} to exploit some
games' structure and thus generate trajectories even for large
horizons (\eg, Recycling Robot for $H=6$, \cf \extCshref{sec|Experiments}).
More generally, we observe that the number of
iterations performed by \omgHSVIlccc{} in 24\,\si{\hour} is highly correlated to the quality of the LPE
compression (Recycling Robot compresses the most and Adversarial Tiger
the least).
As expected, \omgHSVIlc{} turns out to be very slow, not terminating even its first iteration in most cases.

\begin{table}
  \def\tunits{time/gap}
\caption{Experiments comparing 4 solvers on various benchmark problems.
Reported values are the running times, or [gap] values [$\upb{V}_0(\occ_0)-\lob{V}_0(\occ_0)$] if the 24\,\si{\hour} timeout limit is reached.
{\small 
    ``(ni)'' indicates no improvement over the initialization after 24\,\si{\hour}.
``xx'' indicates an out-of-memory error.
``n/a'' indicates an unavailable result.
    }
}
\centering
    \label{tab|resExp}
{\begin{tabular}{lllll}
        \toprule
\multirow{1}{*}{Adversarial Tiger} & H=2 & H=3 & H=4 & H=5 \\
        \midrule
        Wiggers Random & [0.04]  & [0.38]  & [0.92] & [2.07]\\
        Wiggers Informed & [0.59]  & [1.32]  & [1.79] & [3.34]\\
        \omgHSVIlc \scriptsize{(0.1)} & 22\ \si{\min} & (ni)  & (ni) & (ni) \\
        \omgHSVIlccc \scriptsize{($\upb\nu_{\text{bMDP}}$)} & 1\ \si{\second} & 44\ \si{\second} & [1.79] & [2.27] \\
        Sequence Form LP & 0.02\ \si{\second} & 0.17\ \si{\second} & 3\ \si{\second} & 107\ \si{\second}\\
        \midrule
\multirow{1}{*}{Competitive Tiger} &  H=2 &  H=3 &  H=4 &  H=5 \\
        \midrule
        Wiggers Random & [0.56]  & [2.67]  & [5.81] &  [6.97] \\
        Wiggers Informed & [2.07]  & [2.33]  & [3.61] &  xx \\
        \omgHSVIlc \scriptsize{$(1.0)$} & [2.8] & (ni) & (ni) & (ni) \\
        \omgHSVIlccc \scriptsize{($\upb\nu_{\text{init}}$)} & 6\ \si{\second}  & [0.04] &  [2.30] & [4.92] \\
        Sequence Form LP & 0.14\ \si{\second} & 48\ \si{\second}  & 14\ \si{\min} & 2.5\ \si{\hour} \\
        \midrule
\multirow{1}{*}{Mabc} & H=2 & H=3 & H=4  & H=5 \\
        \midrule
\omgHSVIlc \scriptsize{$(0.05)$} & 2\ \si{\second}  & (ni)  & (ni) & (ni) \\
        \omgHSVIlccc \scriptsize{($\upb\nu_{\text{init}}$)} & 1\ \si{\second} &  34\ \si{\s} & [0.05] & [0.44]\\
        Sequence Form LP & 0.1\ \si{\second} & 1\ \si{\second} & 3\ \si{\second} & 181\ \si{\second} \\
        \midrule
\multirow{1}{*}{Recycling Robot} & H=3 & H=4 & H=5 & H=6 \\
        \midrule
\omgHSVIlc \scriptsize{$(0.2)$} &(ni) & (ni) & (ni) & (ni) \\
        \omgHSVIlccc \scriptsize{($\upb\nu_{\text{init}}$)} & 3\ \si{\min} & 12\ \si{\hour} & [0.77] & [2.65]\\
        Sequence Form LP & 1\ \si{\second} & 10\ \si{\second} & 1.5\ \si{\hour} & xx \\
\bottomrule
    \end{tabular}
} \end{table}

\poubelle{
\begin{figure}
\centering
\caption{Pruning does some work}
\resizebox{\linewidth}{!}{\includegraphics{figures/TailleSac.png}
}
\end{figure}
}

\section{Discussion}
\label{sec|discussion}

Inspired by state-of-the-art solution techniques for POMDPs,
Dec-POMDPs, and subclasses of zs-POSGs, we solve here zs-POSGs by turning them into zero-sum occupancy Markov
games, \ie, a fully-observable game that allows exploiting Bellman's
principle of optimality.
We expand the concavity-convexity and Lipschitz-continuity properties
of $V^*$ and $Q^*$, and build on them to propose point-based bounding
value function approximations, along with efficient selection and update operators
based on linear programming.
This allows deriving a variant of HSVI that provably converges in
finite time to an $\epsilon$-optimal solution, providing (safe)
solution strategies in a recursive form as a by-product of the solving
process.
Experiments confirm the feasibility of this approach and show
improved results compared to related heuristics (also exploiting the
concavity and convexity).

This approach paves the way for a large family of solvers as many
variants could be envisioned, \eg, using different algorithmic
schemes, approximations, selection and update operators, or pruning
techniques.
For instance, we also evaluated a variant relying only on $V^*$'s
Lipschitz-continuity.

Future work includes:
looking for better initializations, \eg, with
more advanced POMDP initializations or building on One-Sided zsPOSGs, and better Lipschitz constants, possibly through an incremental search;
proposing a pruning method for $\upb{W}^1_\depth$ and
$\lob{W}^2_\depth$;
exploiting oracle methods or other heuristics to solve local games
faster;
exploiting TPE rather than LPE compression;
and
branching on public observations (or even public information revealed by
the occupancy state's structure).

\message{>>> Leaving \currfilename >>>}

\paragraph{Acknowledgements}
  Let us thank Abdallah Saffidine, Vincent Thomas, and anonymous
  reviewers for fruitful discussions and comments that helped improve
  this work.
  
  This work was supported by the French National Research Agency through the 
  ``Planning and Learning to Act in Systems of Multiple Agents'' Project under Grant 19-CE23-0018-01.
[\url{http://perso.citi-lab.fr/jdibangoy/\#/plasma}]

\wlog{References start on page \thepage}

\bibliographystyle{plainnat}

\newpage

\ifextended{
  \newpage
  \onecolumn
  \appendix

\message{<<< Entering \currfilename <<<}

\section{Synthetic Tables}

For convenience, we provide two synthetic tables: 
\Cref{tab|PropertyTable} to sum up various theoretical properties
that are stated in this paper (assuming a finite temporal horizon), and 
\Cref{tab|NotationTable} to sum up the notations used in this
paper.

More precisely, \Cref{tab|PropertyTable} indicates, for various
functions $f$ and variables $x$, properties that $f$ is known to exhibit with
respect to $x$.
We denote by
\begin{description}[leftmargin=!,labelwidth=\widthof{$PWLCv$}]
\item[-] a function with no known (or used) property (see also comment
  below);
\item[{\sc n/a}] a non-applicable case;
\item[$\Lin$] a linear function;
\item[$LC$] a Lipschitz-continuous function;
\item[$Cv$] (resp. $Cc$) a convex (resp. concave) function;
\item[$PWLCv$] (resp. $PWLCc$) a piecewise linear and convex (resp. concave) function;
\item[$\indep$] the function being independent of the variable;
\item[$\neg P$] the negation of some property $P$ (\ie, $P$ is known not to hold).
\end{description}
Note also that, as
$\occ_\depth = \occ_\depth^{c,1} \occ_\depth^{m,1}$, the linearity or
Lipschitz-continuity properties of any function w.r.t. $\occ_\depth$
extends to both $\occ_\depth^{c,1}$ and $\occ_\depth^{m,1}$.
Reciprocally, related negative results extend from $\occ_\depth^{c,1}$ or $\occ_\depth^{m,1}$ to $\occ_\depth$.
In these three columns, we just indicate results that cannot be derived from one of the two other columns.

\begin{table}
    \caption{Known properties of various functions appearing in this work} \centering
    \label{tab|PropertyTable}
    \resizebox{\linewidth}{!}{\begin{tabular}{lr@{ }lr@{ }lr@{ }lr@{ }lr@{ }l}
      \toprule
& \multicolumn{2}{c}{$\occ_\depth$}
      & \multicolumn{2}{c}{$\occ_\depth^{m,1}$}
      & \multicolumn{2}{c}{$\occ_\depth^{c,1}$}
      & \multicolumn{2}{c}{$\beta_\depth^i$}
      & \multicolumn{2}{c}{$\beta_\depth^{-i}$} \\
      \midrule
$T(\occ_\depth,\vbeta_\depth)$
      & $\Lin$ & {\scriptsize (\cshrefpage{lem|occSufficient})}
      & \multicolumn{2}{c}{-}
      & \multicolumn{2}{c}{-}
      & $\Lin$ & {\scriptsize (\cshrefpage{lem|occSufficient})}
               & $\Lin$ & {\scriptsize (\cshrefpage{lem|occSufficient})} \\
$T^{m,i}(\occ_\depth,\vbeta_\depth)$
      & $\Lin$ & {\scriptsize (\cshrefpage{lem|T1mlin})}
      & \multicolumn{2}{c}{-}  & \multicolumn{2}{c}{-}
      & $Lin$ & {\scriptsize (\cshrefpage{lem|T1mlin})}
               & $Lin$ & {\scriptsize (\cshrefpage{lem|T1mlin})} \\
$T^{c,i}(\occ_\depth,\vbeta_\depth)$
      & \multicolumn{2}{c}{-}  & $\indep$ & {\scriptsize (\cshrefpage{lem|T1cindep})}
& $\neg LC$ & {\scriptsize (\cshrefpage{lemma|NotLipschitzOccC})} 
      & $\indep$ & {\scriptsize (\cshrefpage{lem|T1cindep})}
      & $\neg LC$ & {\scriptsize (\cshrefpage{lemma|NotLipschitzBeta})} \\
$V_\depth^{*}(\occ_\depth)$
      & $LC$ & {\scriptsize (\cshrefpage{cor|V|LC|occ})}
      & $PWLCv$ & {\scriptsize (\cshrefpage{theo|ConvexConcaveV})}
      & \multicolumn{2}{c}{-}
      & \multicolumn{2}{c}{\sc n/a} & \multicolumn{2}{c}{\sc n/a} \\
$Q_\depth^{*}(\occ_\depth,\vbeta_\depth)$
      & $LC$ & {\scriptsize (from $V^*_{\depth+1}$ LC)}
      & \multicolumn{2}{c}{-} & \multicolumn{2}{c}{-}
      & $\neg Lin$ & {\scriptsize (\cshrefpage{prop|localGameNotBiLinear})}
               & $\neg Lin$ & {\scriptsize (\cshrefpage{prop|localGameNotBiLinear})} \\
      & & & & & &
               & $Cc$ & {\scriptsize (\cshrefpage{lem|Qcc})}
      & $Cv$ & {\scriptsize (\cshrefpage{lem|Qcc})} \\
$W_\depth^{i,*}(\occ_\depth,\beta_\depth^i)$
      & $LC$ & {\scriptsize (from $Q^*_{\depth+1}$ LC)} & \multicolumn{2}{c}{-} & \multicolumn{2}{c}{-}
      & $\neg Lin$ & {\scriptsize (from $Q^*$ $\neg Lin$)} 
               & \multicolumn{2}{c}{\sc n/a} \\
      & & & & & &
               &  $Cc$ & {\scriptsize (\cshrefpage{lemma|Wconcave})}
      & \\
$\nu^2_{[\occ^{c,1}_\depth,\beta^2_{\depth:}]}$
      & \multicolumn{2}{c}{{\sc n/a}} & \multicolumn{2}{c}{{\sc n/a}}
      & $LC$ & {\scriptsize (\cshrefpage{lem|nuLC})} & \multicolumn{2}{c}{\sc n/a}
      & \multicolumn{2}{c}{-}  \\
\bottomrule
\end{tabular}
}
\end{table}

\newcommand{\multicolumnTWO}[1]{\multicolumn{2}{@{}>{\hsize=\dimexpr2\hsize+4\tabcolsep+2\arrayrulewidth\relax}X}{#1}}

\begin{table}[htbp]
  \caption{Various notations used in this work} \label{tab|NotationTable}
\centering \begin{tabularx}{1.\textwidth}{r@{ }c@{ }l@{ }X}
    \toprule
$-i$ & $\eqdef$ &
    \multicolumnTWO{$i$'s opponent. Thus: $-1=2$, and $-2=1$.}
    \medskip \\
    \multicolumn{4}{c}{\underline{Histories and occupancy states}} \medskip \\
$\theta^i_\depth$ & $\eqdef$ & 
    \multicolumnTWO{
      $(a^i_1, z^i_1, \dots , a^i_\depth, z^i_\depth) $ ($\in \Theta^i = \cup_{t=0}^{H-1} \Theta^i_t$) is a length-$\depth$ {\em action-observation history} (\aoh) for
      \player $i$.
    }
    \\
    $\vth_\depth$ & $\eqdef$ & 
    \multicolumnTWO{
      $(\theta^1_\depth,\theta^2_\depth)$ ($\in \vTh = \cup_{t=0}^{H-1} \vTh_t$) is a {\em joint \aoh} at $\depth$.
    }
    \\
    $\occ_\depth(\vth_\depth)$  & $\eqdef$ & 
    \multicolumnTWO{
      {\em Occupancy state} (\os) $\occ_\depth$ ($\in \Occ = \cup_{t=0}^{H-1} \Occ_t$, where $\Occ_\depth \eqdef \Delta(\vTh_\depth)$), \ie, probability distribution over joint \aoh{}s $\vth_\depth$
      (typically for some applied $\vbeta_{0:\depth-1}$).
    }
    \\
    $\occ_\depth^{m,i}(\theta^i_\depth)$ & $\eqdef$ & 
    \multicolumnTWO{
      {\em Marginal term} of $\occ_\depth$ from player $i$'s point of view ($\occ_\depth^{m,i} \in \Delta(\Theta_\depth^i) $).
    }
    \\
    $\occ_\depth^{c,i}(\theta^{-i}_\depth | \theta^i_\depth)$ & $\eqdef$ & 
    \multicolumnTWO{
      {\em Conditional term} of $\occ_\depth$ from $i$'s point of view ($\occ_\depth^{c,i} : \Theta_\depth^i \mapsto \Delta(\Theta_\depth^{-i}) $).
    }
    \\
    $b(s | \vth_\depth)$ & $\eqdef$ & 
    \multicolumnTWO{
      {\em Belief state}, \ie, probability distribution over states given a joint \aoh ($b(s | \vth_\depth) : \cS \times \vTh_\depth \mapsto \reals $). Can be computed by an HMM filtering process.
    }
    \\
    $o_\depth$ & $\eqdef$ &  
    \multicolumnTWO{
      {\em Full occupancy state} $o_\depth$ ($\in \Delta(\cS \times \vTh_\depth) $), \ie,
$Pr(s,\vth_\depth)$ for the current $\vbeta_{0:\depth-1}$,
and thus verifies $\occ_\depth(\vth_\depth)=\sum_{s\in\cS} o_\depth(s,\vth_\depth)$. 
Is used in the implementation to simplify computations (\eg, of $r_t$ and $\occ_{\depth+1}$ through $b$).
    }
\medskip \\
    \multicolumn{4}{c}{\underline{Decision rules and strategies}} \medskip \\
    $\beta^i_\depth$ & $\eqdef$ & 
    \multicolumnTWO{
      A {\em (behavioral) decision rule}
      (\dr) at time $\depth$ for \player $i$ is a mapping
      $\beta^i_\depth$ from private \aoh{}s in $\Theta^i_\depth$ to {\em
        distributions} over private actions.
We note $\beta^i_\depth(\theta^i_\depth,a^i)$ the probability to
      pick $a^i$ when facing $\theta^i_\depth$.
    }
    \\
    $\vbeta_\depth$ & $\eqdef$ &
    \multicolumnTWO{
      $\langle \beta^1_\depth, \beta^2_\depth \rangle$
      ($\in \cB = \cup_{t=0}^{H-1} \cB_t$) is a {\em decision rule
        profile}.
    }
    \\
    $\beta^i_{\depth:\depth'}$ & $\eqdef$ &
    \multicolumnTWO{
      $(\beta^i_\depth, \dots, \beta^i_{\depth'})$ is a {\em behavioral
        strategy} for \player $i$ from time step $\depth$ to $\depth'$
      (included).
    }
    \\
    $\vbeta_{\depth:\depth'}$ & $\eqdef$ &
    \multicolumnTWO{
      $ \langle \beta^1_{\depth:\depth'}, \beta^2_{\depth:\depth'}
      \rangle$ is a {\em behavioral strategy profile}.
    }
    \medskip \\
    \multicolumn{4}{c}{\underline{Rewards and value functions}} \medskip \\
$r_{\max}$ & $\eqdef$ & $\max_{s,\va}r(s,\va)$ & Maximum possible reward. \\
    $r_{\min}$ & $\eqdef$ & $\min_{s,\va}r(s,\va)$ & Minimum possible reward. \\
$V_\depth(\occ_\depth,\vbeta_{\depth:})$  & $\eqdef$ & 
$ E[\sum_{t=\depth}^{H-1} \gamma^t R_t \mid \occ_\depth, \vbeta_{\depth:}] $,
    &
    {\em Value} of  $\vbeta_{\depth:H-1}$ in
    \os $\occ_\depth$.
    \\
    & &
    \multicolumnTWO{
      where $R_t$ is the random var. for the reward at $t$.}
\\
    $V_\depth^*(\occ_\depth)$ & $\eqdef$ & $\max_{\beta^1_{\depth:}} \min_{\beta^2_{\depth:}} V_\depth(\occ_\depth, \vbeta_{\depth:})$ & \text{ \it Optimal value function} \\
$Q_\depth^*(\occ_\depth,\vbeta_\depth)$ & $\eqdef$  &
    $ r(\occ_\depth,\vbeta_\depth) + \gamma V_{\depth+1}^{*}(T(\occ_\depth,\vbeta_\depth))$
    & \text{ \it Opt. (joint) action-value fct.} \\
    $W_\depth^{i,*}(\occ_\depth,\beta_\depth^i)$ & $\eqdef$ 
    & $opt_{\beta^{-i}_\depth} Q_\depth^*(\occ_\depth,\vbeta_\depth)$,
    & \text{ \it Opt. (individual) action-value fct.} \\
    & &   \multicolumn{2}{>{\hsize=\dimexpr2\hsize+2\tabcolsep+\arrayrulewidth\relax}X}{
      where $opt=\max$ if $i=1$, $\min$ otherwise.
    }\\
    $\nu^2_{[\occ_\depth^{c,1}, \beta_{\depth:}^2]}$ & $\eqdef$ & 
    \multicolumnTWO{
      Vector of values (one component per \aoh{} $\theta^1_\depth$) for $1$'s best response to $\beta_{\depth:}^2$ assuming $\occ_\depth^{c,1}$.
This solution of a POMDP allows computing $V_\depth^{*}$ (see \Cref{theo|ConvexConcaveV}).
    }
    \medskip \\
    \multicolumn{4}{c}{\underline{Approximations}} \medskip \\
$\upb{V}_\depth(\occ_\depth)$ & $\eqdef$ & 
    \multicolumnTWO{
      Upper bound approximation of $V^*_\depth(\occ_\depth)$; relies on data set $\upb{bagV}_\depth$. 
    }
    \\
    $\lob{V}_\depth(\occ_\depth)$ & $\eqdef$ & 
    \multicolumnTWO{
      Lower bound approximation of $V^*_\depth(\occ_\depth)$; relies on data set $\lob{bagV}_\depth$. 
    }
    \\
    $\upb{W}^1_\depth(\occ_\depth,\beta^1_\depth)$ & $\eqdef$ & 
    \multicolumnTWO{
      Upper bound approximation of $W_\depth^{*,1}(\occ_\depth,\beta^1_\depth)$; relies on data set $\upb{bagW}^1_\depth$. 
    }
    \\
    $\lob{W}^2_\depth(\occ_\depth,\beta^2_\depth)$ & $\eqdef$ & 
    \multicolumnTWO{
      Lower bound approximation of $W_\depth^{*,2}(\occ_\depth,\beta^2_\depth)$; relies on data set $\lob{bagW}^2_\depth$. 
    }
    \\
    $\upb\nu_{\depth}^2$ & $\eqdef$ & 
    \multicolumnTWO{
      Vector (with one component per \aoh{} $\theta^1_\depth$) used in $\upb{V}_\depth$ and $\upb{W}^1_{\depth-1}$ (if $\depth\geq 1$).
    }
\medskip \\
    \multicolumn{4}{c}{\underline{Miscellaneous}} \medskip \\
$w_\depth$ & $\eqdef$ & 
    \multicolumnTWO{
      Denotes a triplet $\langle \occ^{c,1}_{\depth-1}, \beta^1_{\depth-1}, \langle \upb\nu^2_\depth,
    \delta^2_\depth \rangle \rangle \in \upb{bagW}^1_\depth$ (or a triplet in $\lob{bagW}^2_\depth$).
  }
  \\
    $\delta_{\depth}^2$ & $\eqdef$ & 
    \multicolumnTWO{
      Distribution over triplets $w_{\depth+1} \in \upb{bagW}^1_{\depth+1}$ (inducing a recursively defined strategy from $\depth$ to $H-1$).
Often denotes the strategy it induces. 
    }
    \\
    $x^{\top}$ & $\eqdef$ & 
    \multicolumnTWO{
      The transpose of a (usually column) vector $x$ of $\reals^n$.
    }
    \\ $c[y]$ & $\eqdef$ & 
    \multicolumnTWO{
      Denotes field $c$ of object/tuple $y$.
    }
    \\
    $\supp(d)$ & $\eqdef$ & 
    \multicolumnTWO{
      Support of distribution $d$, \ie, set of its non-zero probability elements.
    }
    \\
    \bottomrule
  \end{tabularx}
\end{table}

\section{Background}

\subsection{Re-casting POSGs as Occupancy Markov Games}
\label{app|fromTo}

The following result shows that the occupancy state is (i) Markovian, \ie, its value at $\depth$ only depends on its previous
value $\occ_{\depth-1}$, the system dynamics
$\PP{s}{a^1,a^2}{s'}{z^1,z^2}$, and the last behavioral decision rules
$\beta^1_{\depth-1}$ and $\beta^2_{\depth-1}$, and (ii) sufficient to estimate the expected reward.
Note that it holds for general-sum POSGs with any number of agents,
and as many reward functions; similar results have already been established, \eg, for Dec-POMDPs
(\cf \citep[Theorem~1]{DibAmaBufCha-jair16}).

\lemOccSufficient*

\begin{proof}
  \label{proof|lem|occSufficient}
  Let us first derive a recursive way of computing
  $ \occ_{ \vbeta_{0:\depth} }( \vth_\depth, \va_\depth,
  \vz_{\depth+1}) $:
  \begin{align*}
    & \occ_{ \vbeta_{0:\depth} }( \vth_\depth, \va_\depth, \vz_{\depth+1}) \eqdef Pr( \vth_\depth, \va_\depth, \vz_{\depth+1} \mid \vbeta_{0:\depth} ) \\
    & = \sum_{s_\depth, s_{\depth+1}}
    Pr( \vth_\depth, \va_\depth, \vz_{\depth+1}, s_\depth, s_{\depth+1} \mid \vbeta_{0:\depth} )
    \\
    & = \sum_{s_\depth, s_{\depth+1}}
    Pr( \vz_{\depth+1}, s_{\depth+1} \mid \vth_\depth, \va_\depth, s_\depth, \vbeta_{0:\depth} ) 
    Pr( \va_\depth \mid  \vth_\depth, s_\depth, \vbeta_{0:\depth} ) 
    Pr( s_\depth \mid \vth_\depth, \vbeta_{0:\depth} ) 
    Pr( \vth_\depth \mid \vbeta_{0:\depth} ) 
    \\
    & = \sum_{s_\depth, s_{\depth+1}}
    \underbrace{ Pr( \vz_{\depth+1}, s_{\depth+1} \mid \va_\depth, s_\depth ) }_{= \PP{s_\depth}{\va_\depth}{s_{\depth+1}}{\vz_{\depth+1}}}
    \underbrace{ Pr( \va_\depth \mid  \vth_\depth, \vbeta_{\depth} ) }_{= \vbeta(\vth_\depth, \va_\depth)}
    \underbrace{ Pr( s_\depth \mid \vth_\depth, \vbeta_{0:\depth} ) }_{= b(s_\depth \mid \vth_\depth)}
    \underbrace{ Pr( \vth_\depth \mid \vbeta_{0:\depth-1} ) }_{= \occ_{\vbeta_{0:\depth-1}}(\vth_\depth)},
    \intertext{(where $b(s \mid \vth_\depth)$ is the belief over states obtained by a usual HMM filtering process)}
    & = \sum_{s_\depth, s_{\depth+1}}
    \PP{s_\depth}{\va_\depth}{s_{\depth+1}}{\vz_{\depth+1}}
    \vbeta(\vth_\depth, \va_\depth)
    b(s_\depth \mid \vth_\depth)
    \occ_{\vbeta_{0:\depth-1}}(\vth_\depth).
  \end{align*}
  $\occ_{ \vbeta_{0:\depth} }$ can thus be computed from
  $\occ_{\vbeta_{0:\depth-1}}$ and $\vbeta_\depth$ without explicitly
  using $\vbeta_{0:\depth-1}$ or earlier occupancy states.

  Then, let us compute the expected reward at $\depth$ given
  $\vbeta_{0:\depth}$:
  \begin{align*}
     E[r(S_\depth,A^1_\depth,A^2_\depth) \mid \vbeta_{0:\depth} ] & = \sum_{s_\depth, \va_\depth} r(s_\depth, \va_\depth) Pr( s_\depth, \va_\depth \mid \vbeta_{0:\depth} )
    \\
    & = \sum_{s_\depth, \va_\depth} \sum_{\vth_\depth} r(s_\depth, \va_\depth) Pr( s_\depth, \va_\depth, \vth_\depth \mid \vbeta_{0:\depth} )
    \\
    & = \sum_{s_\depth, \va_\depth} \sum_{\vth_\depth} r(s_\depth, \va_\depth)
    Pr( s_\depth, \va_\depth \mid \vth_\depth, \vbeta_{0:\depth} )
    Pr( \vth_\depth \mid \vbeta_{0:\depth} )
    \\
    & = \sum_{s_\depth, \va_\depth} \sum_{\vth_\depth} r(s_\depth, \va_\depth)
    \underbrace{ Pr( \va_\depth \mid \vth_\depth, \vbeta_{0:\depth} ) }_{ \vbeta_\depth(\vth_\depth, \va_\depth) }
    \underbrace{ Pr( s_\depth \mid \vth_\depth, \vbeta_{0:\depth} ) }_{ b(s_\depth \mid \vth_\depth) }
    \underbrace{ Pr( \vth_\depth \mid \vbeta_{0:\depth} ) }_{\occ_{\vbeta_{0:\depth-1}}( \vth_\depth ) }
    \\
    & = \sum_{s_\depth, \va_\depth} \sum_{\vth_\depth} r(s_\depth, \va_\depth)
     \vbeta_\depth(\vth_\depth, \va_\depth)
     b(s_\depth \mid \vth_\depth) 
     \occ_{\vbeta_{0:\depth-1}}( \vth_\depth ).
  \end{align*}  
  The expected reward at $\depth$ can thus be computed from
  $\occ_{\vbeta_{0:\depth-1}}$ and $\vbeta_\depth$ without explicitly
  using $\vbeta_{0:\depth-1}$ or earlier occupancy states.
\end{proof}

\subsection{Introducing Local Games}

The first two lemmas below present properties of $\nxt^1_m$ and $\nxt^1_c$
that will be useful to demonstrate concavity and convexity properties of $Q^*$.

\begin{lemma}
  \labelT{lem|T1mlin}
  $T^1_m(\occ_\depth, \vbeta_\depth)$ is linear in $\occ_\depth$, $\beta^1_\depth$, and $\beta^2_\depth$.
\end{lemma}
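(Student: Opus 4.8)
The plan is to derive the explicit formula for $\nxt^1_m$ by marginalising the transition \eqref{eq|transition} over \player~$2$'s next-step history, and then read off the (multi)linearity directly. First I would recall that $\nxt^1_m(\occ_\depth,\vbeta_\depth)$ is, by definition, the \player-$1$ marginal of the next occupancy state $\occ_{\depth+1}=\nxt(\occ_\depth,\vbeta_\depth)$, i.e.\ the sum of the components of \eqref{eq|transition} over $(\theta^2_\depth,a^2,z^2)$ (which together form $2$'s next \aoh{} $\theta^2_{\depth+1}$):
\begin{align*}
  \nxt^1_m(\occ_\depth,\vbeta_\depth)(\theta^1_\depth,a^1,z^1)
  &= \sum_{\theta^2_\depth,a^2,z^2}
     \beta^1_\depth(\theta^1_\depth,a^1)\,
     \beta^2_\depth(\theta^2_\depth,a^2)\,
     \occ_\depth(\vth_\depth)\,
     c(\vth_\depth,\va,\vz),
\end{align*}
where $c(\vth_\depth,\va,\vz)\eqdef \sum_{s,s'}P^{\vz}_{\va}(s'|s)\,b(s|\vth_\depth)$.

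The key observation --- and the single point that needs care --- is that the factor $c(\vth_\depth,\va,\vz)$, although it depends on the labels $\vth_\depth$, $\va$, $\vz$, is \emph{constant} with respect to the three variables $\occ_\depth$, $\beta^1_\depth$, $\beta^2_\depth$. Indeed, the belief $b(s|\vth_\depth)$ is obtained by HMM filtering from $b_0$ along the fixed joint history $\vth_\depth$ using only the dynamics $P$; conditioning on a specific $\vth_\depth$ makes the decision-rule probabilities cancel in the normalisation, so $b(\cdot|\vth_\depth)$ depends neither on $\vbeta_\depth$ nor on $\occ_\depth$ (any two occupancy states assigning positive mass to $\vth_\depth$ induce the same belief there).

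Given this, each output component is a fixed linear combination of the products $\beta^1_\depth(\theta^1_\depth,a^1)\,\beta^2_\depth(\theta^2_\depth,a^2)\,\occ_\depth(\vth_\depth)$, i.e.\ a trilinear form. I would then conclude by fixing any two of the three arguments: the remaining dependence is a linear combination (with coefficients built from the constants $c$ and the two fixed factors) of the components of the third argument, hence linear. This yields linearity in $\occ_\depth$, in $\beta^1_\depth$, and in $\beta^2_\depth$ separately. Equivalently, since \Cref{lem|occSufficient}/\eqref{eq|transition} already give that $\nxt$ itself is linear in each of these variables and the marginalisation map $\occ_{\depth+1}\mapsto\occ^{m,1}_{\depth+1}$ is a fixed linear projection, $\nxt^1_m$ --- their composition --- inherits linearity in each variable. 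The only real obstacle is establishing the belief-independence claim above; everything else is routine bookkeeping.
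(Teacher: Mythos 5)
Your proposal is correct and follows essentially the same route as the paper's proof: marginalise the explicit transition formula of \eqref{eq|transition} over $(\theta^2_\depth,a^2,z^2)$, factor out $\beta^1_\depth$, and read off that each component is a trilinear form in $(\occ_\depth,\beta^1_\depth,\beta^2_\depth)$ with coefficients built only from the dynamics and the beliefs. The belief-independence point you single out is indeed the crux, and it is exactly what the paper asserts (that $b(\cdot|\vth_\depth)$ depends only on the model, i.e.\ the transition function and the initial belief), so your extra justification of it is a welcome, not a divergent, addition.
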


\begin{proof}
  \label{proof|lem|T1mlin}
  \begin{align}
    \nxt^1_m(\occ_\depth,\vbeta_\depth)(\theta^1_\depth,a^1,z^1) & = \sum_{\theta^2_\depth, a^2, z^2} T(\occ_\depth, \vbeta_\depth)({
      ( \theta^1_\depth, a^1, z^1 ),
      ( \theta^2_\depth, a^2, z^2 )
    })
    & \text{(from \Cshref{eq|transition})}
    \nonumber
    \\
    & = \sum_{s',\theta^2_\depth,a^2,z^2} \beta^1_\depth(\theta^1_\depth,a^1) \beta^2_\depth(\theta^2_\depth,a^2)  \sum_{s} P^{z^1,z^2}_{a^1,a^2}(s'|s) 
    b(s|\theta^1_\depth,\theta^2_\depth) \occ_\depth (\theta^1_\depth,\theta^2_\depth)
    \nonumber
    \\
    & = \beta^1_\depth(\theta^1_\depth,a^1) \sum_{\theta^2_\depth,a^2} \beta^2_\depth(\theta^2_\depth,a^2)  \sum_{s,s',z^2} P^{z^1,z^2}_{a^1,a^2}(s'|s) 
    b(s|\theta^1_\depth,\theta^2_\depth) \occ_\depth (\theta^1_\depth,\theta^2_\depth).
    \label{eq|occm1}
\end{align}
\end{proof}

\begin{lemma}
  \labelT{lem|T1cindep}
  $T^1_c(\occ_\depth, \vbeta_\depth)$ is independent of $\beta^1_\depth$ and $\occ^{m,1}_\depth$.
\end{lemma}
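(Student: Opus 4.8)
The plan is to compute $\nxt^1_c$ directly from its definition as the ratio of the joint successor occupancy to its player-$1$ marginal, and then to read off the two claimed independences from two cancellations. Writing $\theta^i_{\depth+1}=(\theta^i_\depth,a^i,z^i)$ for the length-$(\depth+1)$ histories, the conditional term is, for any $\theta^1_{\depth+1}$ with positive marginal successor probability,
\begin{align*}
  \nxt^1_c(\occ_\depth,\vbeta_\depth)(\theta^2_{\depth+1}\mid\theta^1_{\depth+1})
  = \frac{\nxt(\occ_\depth,\vbeta_\depth)(\theta^1_{\depth+1},\theta^2_{\depth+1})}{\nxt^1_m(\occ_\depth,\vbeta_\depth)(\theta^1_{\depth+1})}.
\end{align*}
First I would substitute the closed form of the numerator from \Cref{eq|transition} and that of the denominator from \Cref{eq|occm1} (\cref{lem|T1mlin}).

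Two observations then finish the argument. For independence from $\beta^1_\depth$: both numerator and denominator carry the common factor $\beta^1_\depth(\theta^1_\depth,a^1)$---the denominator because its sum ranges only over $\theta^2_\depth,a^2,z^2$, with $\theta^1_\depth,a^1,z^1$ held fixed by the conditioning history $\theta^1_{\depth+1}$---so this factor cancels in the ratio. For independence from $\occ^{m,1}_\depth$: decomposing $\occ_\depth(\theta^1_\depth,\theta^2_\depth)=\occ^{m,1}_\depth(\theta^1_\depth)\,\occ^{c,1}_\depth(\theta^2_\depth\mid\theta^1_\depth)$, every occupancy term appearing in both numerator and denominator shares the same conditioning history $\theta^1_\depth$, so the common scalar $\occ^{m,1}_\depth(\theta^1_\depth)$ factors out of both and cancels; what remains depends on $\occ_\depth$ only through $\occ^{c,1}_\depth$.

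The one point I would take care over---and the main thing to get right---is the belief term $b(s\mid\vth_\depth)$ occurring in \Cref{eq|transition} and \Cref{eq|occm1}. I would note that, once the joint history $\vth_\depth$ is fixed, $b(s\mid\vth_\depth)$ is determined by HMM filtering from $b_0$ and $P$ alone (the strategy probabilities of the committed actions cancel in the Bayes ratio), hence it is independent of both $\vbeta_\depth$ and $\occ_\depth$ and plays no role in either cancellation. With that settled, the two cancellations are purely algebraic; the division is legitimate since $\nxt^1_c(\,\cdot\mid\theta^1_{\depth+1})$ is only defined where $\nxt^1_m(\occ_\depth,\vbeta_\depth)(\theta^1_{\depth+1})>0$. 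The only real bookkeeping is to keep track that $\theta^1_{\depth+1}$ fixes $\theta^1_\depth,a^1,z^1$, which is precisely what makes the two offending factors common to numerator and denominator.
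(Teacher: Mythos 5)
Your proposal is correct and follows essentially the same route as the paper's proof: write $\nxt^1_c$ as the ratio of $\nxt$ to $\nxt^1_m$, substitute the closed form from \Cref{eq|transition}, cancel the common factor $\beta^1_\depth(\theta^1_\depth,a^1)$ (constant over the denominator's sum since $\theta^1_\depth,a^1,z^1$ are fixed by the conditioning), then decompose $\occ_\depth = \occ^{m,1}_\depth\occ^{c,1}_\depth$ and cancel the common scalar $\occ^{m,1}_\depth(\theta^1_\depth)$. Your explicit remark that $b(s\mid\vth_\depth)$ depends only on the model and not on $\vbeta_\depth$ or $\occ_\depth$ is a point the paper leaves implicit, and it is a welcome clarification rather than a deviation.
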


\begin{proof}
  \label{proof|lem|T1cindep}
  \begin{align*}
\hspace{3cm}
    & \hspace{-3cm} T^1_c(\occ_\depth, \vbeta_\depth)((\theta^2_\depth,a^2,z^2) | (\theta^1_\depth, a^1, z^1)) = \frac{
      T(\occ_\depth, \vbeta_\depth)( (\theta^1_\depth, a^1, z^1), (\theta^2_\depth,a^2,z^2) )
    }{
      \sum_{\theta^2_\depth, a^2, z^2} T(\occ_\depth, \vbeta_\depth)( (\theta^1_\depth, a^1, z^1), (\theta^2_\depth,a^2,z^2) )
    } \\
    & = \frac{
      \beta^1_\depth(\theta^1_\depth,a^1) \beta^2_\depth(\theta^2_\depth,a^2)  \sum_{s,s'} P^{z^1,z^2}_{a^1,a^2}(s'|s) b(s | \theta^1_\depth, \theta^2_\depth) \occ_\depth (\theta^1_\depth, \theta^2_\depth)
    }{
      \beta^1_\depth(\theta^1_\depth,a^1) \sum_{\theta^2,a^2} \beta^2_\depth(\theta^2_\depth,a^2)  \sum_{s,s',z^2} P^{z^1,z^2}_{a^1,a^2}(s'|s) b(s | \theta^1_\depth, \theta^2_\depth) \occ_\depth (\theta^1_\depth,\theta^2_\depth)
    } \\
    & = \frac{
      \beta^2_\depth(\theta^2_\depth,a^2)  \sum_{s,s'} P^{z^1,z^2}_{a^1,a^2}(s'|s) b(s | \theta^1_\depth, \theta^2_\depth) \occ_\depth (\theta^1_\depth,\theta^2_\depth)
    }{
      \sum_{\theta^2,a^2} \beta^2_\depth(\theta^2_\depth,a^2)  \sum_{s,s',z^2} P^{z^1,z^2}_{a^1,a^2}(s'|s) b(s | \theta^1_\depth, \theta^2_\depth) \occ_\depth (\theta^1_\depth,\theta^2_\depth)
    }
    \\
& = \frac{
      \beta^2_\depth(\theta^2_\depth,a^2)  \sum_{s,s'} P^{z^1,z^2}_{a^1,a^2}(s'|s) b(s | \theta^1_\depth, \theta^2_\depth)
      \overbrace{ \occ^{c,1}_\depth(\theta^2_\depth | \theta^1_\depth) \occ^{m,1}_\depth(\theta^1_\depth) }
    }{
      \sum_{\theta^2,a^2} \beta^2_\depth(\theta^2_\depth,a^2)  \sum_{s,s',z^2} P^{z^1,z^2}_{a^1,a^2}(s'|s) b(s | \theta^1_\depth, \theta^2_\depth)
      \underbrace{ \occ^{c,1}_\depth(\theta^2_\depth | \theta^1_\depth) \occ^{m,1}_\depth(\theta^1_\depth) }
    } \\
    & = \frac{
      \left( \beta^2_\depth(\theta^2_\depth,a^2)  \sum_{s,s'} P^{z^1,z^2}_{a^1,a^2}(s'|s) b(s | \theta^1_\depth, \theta^2_\depth)
        \occ^{c,1}_\depth(\theta^2_\depth | \theta^1_\depth) \right) \occ^{m,1}_\depth(\theta^1_\depth)
    }{
      \left( \sum_{\theta^2,a^2} \beta^2_\depth(\theta^2_\depth,a^2)  \sum_{s,s',z^2} P^{z^1,z^2}_{a^1,a^2}(s'|s) b(s | \theta^1_\depth, \theta^2_\depth)
        \occ^{c,1}_\depth(\theta^2_\depth | \theta^1_\depth)  \right) \occ^{m,1}_\depth(\theta^1_\depth)
    } \\
    & = \frac{
      \beta^2_\depth(\theta^2_\depth,a^2)  \sum_{s,s'} P^{z^1,z^2}_{a^1,a^2}(s'|s) b(s | \theta^1_\depth, \theta^2_\depth)
        \occ^{c,1}_\depth(\theta^2_\depth | \theta^1_\depth)
    }{
      \sum_{\theta^2,a^2} \beta^2_\depth(\theta^2_\depth,a^2)  \sum_{s,s',z^2} P^{z^1,z^2}_{a^1,a^2}(s'|s) b(s | \theta^1_\depth, \theta^2_\depth)
      \occ^{c,1}_\depth(\theta^2_\depth | \theta^1_\depth)
    }.
    \label{eq|occc1} 
\end{align*}
\end{proof}

\begin{tcolorbox}[breakable, enhanced]
  \uline{Addendum:} The following complementary properties explain why seeking for better approximations is difficult.

  \begin{proposition}
    \labelT{lemma|NotLipschitzBeta}
    $\nxt^i_c(\occ_\depth,\vbeta_\depth)$ may be non continuous (thus
    non Lipschitz-continuous) w.r.t. $\beta_\depth^{\neg i}$.
\end{proposition}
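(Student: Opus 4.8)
The plan is to prove a negative statement, so I would exhibit an explicit minimal counterexample rather than argue abstractly. The starting point is the closed form of $\nxt^1_c$ derived in the proof of \Cref{lem|T1cindep}: for a fixed conditioning event $(\theta^1_\depth,a^1,z^1)$, the quantity $\nxt^1_c(\occ_\depth,\vbeta_\depth)((\theta^2_\depth,a^2,z^2)\mid(\theta^1_\depth,a^1,z^1))$ is a ratio $N(\beta^2_\depth)/D(\beta^2_\depth)$ in which both the numerator $N$ and the denominator $D=\occ^{m,1}_{\depth+1}(\theta^1_\depth,a^1,z^1)$ (the probability that $1$ reaches $(\theta^1_\depth,a^1,z^1)$) are affine in $\beta^2_\depth$ (\cf also \Cref{lem|occ|lin}). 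Such a rational map is continuous on the open region $\{D>0\}$, so the only candidate points of discontinuity are the decision rules at which the conditioning event becomes unreachable, \ie $D=0$. The whole argument therefore reduces to producing one such point together with two approach directions along which $N/D$ admits two distinct limits, \ie a genuine $0/0$ jump.

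First I would set up the instance at $\depth=0$, where both players' histories are empty (so there is a single conditioning event and $\occ^{c,1}_0$ is trivial). Let $1$ have a single action $a^1$ and two observations $\cZ^1=\{g,b\}$, and let $2$ have three actions $\cA^2=\{L,R,M\}$, the states and $2$'s observations playing no role (taken trivial, $|\cS|=1$, $|\cZ^2|=1$). I choose the transition kernel so that $L$ and $R$ both make $1$ observe $g$ while $M$ makes $1$ observe $b$. Writing $\beta^2_0$ for $2$'s decision rule (dropping the empty history), the reachability probability of $(\emptyset,a^1,g)$ is then $D(\beta^2_0)=\beta^2_0(L)+\beta^2_0(R)$, which vanishes exactly at the decision rule $\beta^{2,M}_0$ putting all mass on $M$.

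Next I would evaluate the conditional along two sequences converging to $\beta^{2,M}_0$. Taking $\beta^2_{0,\lambda}$ with $\beta^2_{0,\lambda}(L)=\lambda$, $\beta^2_{0,\lambda}(M)=1-\lambda$ and no mass on $R$, the observation $g$ can only have been produced by $L$, so $\nxt^1_c(\cdot\mid(\emptyset,a^1,g))$ is concentrated on the $L$-branch for every $\lambda>0$, whence its limit as $\lambda\to0^+$ is the Dirac on $(a^2{=}L,\cdot)$; taking instead $\tilde\beta^2_{0,\lambda}$ with the $\lambda$-mass on $R$ yields, by the same substitution into the formula of \Cref{lem|T1cindep}, the Dirac on $(a^2{=}R,\cdot)$. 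The two limits differ, so $\beta^2_\depth\mapsto\nxt^1_c(\occ_\depth,\vbeta_\depth)$ has no continuous extension at $\beta^{2,M}_0$ and is thus neither continuous nor Lipschitz-continuous; the symmetric claim for $\nxt^2_c$ \wrt~$\beta^1_\depth$ follows by exchanging the roles of the players. The only real subtlety is ensuring that the discontinuity is genuine rather than removable, which is precisely why the example forces $D\to0$ through a true $0/0$ with two competing branches; the remaining steps are routine substitutions into the already-established formula.
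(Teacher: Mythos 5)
Your proof is correct, and it rests on exactly the same analytic mechanism as the paper's: the conditional $\nxt^1_c$ is a ratio whose numerator and denominator are both linear in $\beta^2_\depth$, and the discontinuity is a genuine $0/0$ at decision rules that make the conditioning history $(\theta^1_\depth,a^1,z^1)$ unreachable, exhibited via two approach directions with different limits. Where you diverge is in execution. The paper isolates the prototype function $f(x,y,z)=\frac{\alpha x}{\alpha x+\beta y}$ on the probability simplex, shows it has limits $1$ and $0$ along the sequences $(\tfrac1n,\tfrac1{n^2},\cdot)$ and $(\tfrac1{n^2},\tfrac1n,\cdot)$, and then argues that the formula for $\nxt^1_c$ from \Cref{lem|T1cindep} can take this form---but it does so explicitly ``without providing a detailed example,'' under ``plausible assumptions'' about which transition probabilities vanish. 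You instead construct a concrete minimal zs-POSG (one state, one action and two observations for Player~$1$, three actions $\{L,R,M\}$ for Player~$2$, with $L,R$ producing observation $g$ and $M$ producing $b$) and compute the two Dirac limits directly. Your route buys something real: it closes the plausibility gap the paper acknowledges, turning ``such situations may indeed happen'' into an exhibited instance. The paper's route buys a cleaner separation of the analytic core, which makes the quantitative non-Lipschitz statement immediate (the Lipschitz ratio along the two sequences diverges). One remark: your argument as stated shows no continuous extension exists at $\beta^{2,M}_0$; to also conclude non-Lipschitzness of $\nxt^1_c$ \emph{on its domain of definition} (where the denominator is positive), note that your two decision rules $\beta^2_{0,\lambda}$ and $\tilde\beta^2_{0,\lambda}$ are at $1$-norm distance $2\lambda$ while the induced conditionals are Diracs at total-variation distance $1$, so the Lipschitz ratio blows up as $\lambda\to 0^+$---this is implicit in your construction and mirrors the paper's sequences exactly.
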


  \begin{proof}
    \label{proof|lemma|NotLipschitzBeta}
    First, let us define
    \begin{align}
      & \begin{array}{r@{\,}c@{\,}c@{\,}c}
          f : & S_3(1) & \to & \reals  \\
              & (x,y,z) & \mapsto & \frac{\alpha x}{\alpha x + \beta y},
        \end{array}
      \end{align}
      where $(\alpha,\beta) \in (\mathbb{R}^{+,*})^2$ and $S_k(1)$ is the $k$-dimensional probability simplex.

      One can show that $f$ is not Lipschitz-Continuous.
Indeed, the sequences 
      \begin{align} 
        (u_n)_n & = \left( f\left(\frac{1}{n}, \frac{1}{n^2}, 1-\left(\frac{1}{n} + \frac{1}{n^2}\right)\right)\right)_n
        \text{ and} \\
        (v_n)_n & = \left( f\left(\frac{1}{n^2}, \frac{1}{n}, 1-\left(\frac{1}{n} + \frac{1}{n^2}\right)\right)\right)_n
      \end{align} 
      converge towards different values (respectively 1 and 0).
$f$ is thus not continuous around $(0,0,1)$, and therefore not Lipschitz continuous.
      
      This property extends to functions of the form $f(x, y_1, \dots, y_I, z_1, \dots, z_J) = \frac{\alpha x}{\alpha x + \sum_{i=1}^I \beta_i y_i}$ with \begin{itemize}
      \item $I,J\in \mathbb{N}^*$,
      \item $(x,y_1,\dots,y_I,z_1,\dots,z_J) \in S_{1+I+J}(1)$,
      \item positive scalars $\alpha$ and $\vbeta_i$
        ($i\in \{1, \dots, I\}$).
      \end{itemize}

      Note: In the following, we make plausible assumptions without
      providing a detailed example.
Let us now consider \Cshref{eq|occc1} for two tuples
      $\langle \theta^1_\depth, a^1, z^1 \rangle$ and
      $\langle \theta^2_\depth, a^2, z^2 \rangle$ such that
      $\occ_\depth(\theta^1_\depth,\theta^2_\depth)\neq 0$:
\begin{align}
        & \nxt^1_c(\occ_\depth,\vbeta_\depth)(\theta^2_{\tau},a^2,z^2|\theta^1_{\tau},a^1,z^1) \\
& \qquad =  \frac{
          \beta^2_\depth(\theta^2_\depth,a^2) \left[ \sum_{s,s'} P^{z^1,z^2}_{a^1,a^2}(s'|s) b(s | \theta^1_\depth, \theta^2_\depth) \right] \occ_\depth (\theta^1_\depth,\theta^2_\depth)
        }{
          \sum_{\hat\theta^2, \hat a^2} \beta^2_\depth(\hat \theta^2_\depth, \hat a^2) \left[ \sum_{s,s',\hat z^2} P^{z^1,\hat z^2}_{a^1, \hat a^2}(s'|s) b(s | \theta^1_\depth, \hat \theta^2_\depth) \right] \occ_\depth (\theta^1_\depth, \hat\theta^2_\depth)} \intertext{and assuming a simple case where $\occ^{c,1}(\theta^2_\depth|\theta^1_\depth)=1$ (\ie, all other \aoh{}s for $2$ being impossible):}
        & \qquad =  \frac{
          \beta^2_\depth(\theta^2_\depth,a^2)  \left[ \sum_{s,s'} P^{z^1,z^2}_{a^1,a^2}(s'|s) b(s | \theta^1_\depth, \theta^2_\depth) \right] \occ_\depth (\theta^1_\depth,\theta^2_\depth)
        }{
          \sum_{\hat a^2} \beta^2_\depth(\theta^2_\depth, \hat a^2) \left[ \sum_{s,s',\hat z^2} P^{z^1, \hat z^2}_{a^1,\hat a^2}(s'|s) b(s | \theta^1_\depth, \theta^2_\depth) \right] \occ_\depth (\theta^1_\depth,\theta^2_\depth)}.
      \end{align}
      Then, in cases where \begin{itemize}
      \item
        $\sum_{s,s'} P^{z^1,z^2}_{a^1,\tilde a^2}(s'|s) b(s |
        \theta^1_\depth, \theta^2_\depth) >0$ for action $a^2$, and \item
        $\sum_{s,s',\hat z^2} P^{z^1,\hat z^2}_{a^1,\tilde a^2}(s'|s) b(s |
        \theta^1_\depth, \theta^2_\depth) >0$
        for some, but not all, other actions $\tilde a^2$ ($=0$ typically when $z^1$ and $\tilde a^2$ are incompatible),
      \end{itemize}
      we recognize the above function
      $f(x, y_1, \dots, y_I, z_1, \dots, z_J)$, which is not
      continuous.

      Such situations where $\nxt^1_c$ is not continuous thus may
      indeed happen.
    \end{proof}

    \begin{proposition}
      \labelT{lemma|NotLipschitzOccC}
      $\nxt^i_c(\occ_\depth,\vbeta_\depth)$ may be non continuous
      (thus non Lipschitz-continuous) w.r.t. $\occ^{c,1}_\depth$.
    \end{proposition}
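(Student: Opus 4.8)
The plan is to mirror the proof of \Cref{lemma|NotLipschitzBeta}, reusing the non-continuous rational function $f$ introduced there, but now reading the closed form of $\nxt^1_c$ (\Cshref{eq|occc1}) as a function of the conditional term $\occ^{c,1}_\depth$ with $\vbeta_\depth$ (and in particular $\beta^2_\depth$) held fixed, instead of as a function of $\beta^2_\depth$.

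The first key step is to note that the belief $b(s\mid\theta^1_\depth,\theta^2_\depth)$ appearing in \Cshref{eq|occc1} is obtained by HMM filtering from the joint \aoh{} alone, hence is independent of $\occ^{c,1}_\depth$. Therefore, once an input tuple $\langle\theta^1_\depth,a^1,z^1\rangle$ and an output tuple $\langle\theta^2_\depth,a^2,z^2\rangle$ are fixed, both the numerator and the denominator of $\nxt^1_c$ are \emph{linear} in the components $x_{\hat\theta^2_\depth}\eqdef\occ^{c,1}_\depth(\hat\theta^2_\depth\mid\theta^1_\depth)$ of the conditional distribution, so that
\[
  \nxt^1_c(\occ_\depth,\vbeta_\depth)(\theta^2_\depth,a^2,z^2\mid\theta^1_\depth,a^1,z^1)
  = \frac{A\,x_{\theta^2_\depth}}{\sum_{\hat\theta^2_\depth} C_{\hat\theta^2_\depth}\,x_{\hat\theta^2_\depth}},
\]
where $A\geq 0$ gathers the fixed dynamics terms attached to the output tuple and each $C_{\hat\theta^2_\depth}\geq 0$ gathers the dynamics terms summed over $a^2,z^2$ for the history $\hat\theta^2_\depth$ (so that $A\leq C_{\theta^2_\depth}$, being one of its summands).

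Next I would recognize exactly the structure of $f$: the numerator involves a single simplex variable, while the denominator is a non-negative combination of all of them. Exhibiting the discontinuity requires three opponent histories at $\depth$: one history $\theta^2_\depth$ with $A>0$ and $C_{\theta^2_\depth}>0$ (the role of $x$), a second $\hat\theta^2_\depth$ with $C_{\hat\theta^2_\depth}>0$ (the role of a $y_i$), and a third $\tilde\theta^2_\depth$ with $C_{\tilde\theta^2_\depth}=0$ (the role of a $z_j$), this last case arising precisely when $z^1$ is incompatible with $\tilde\theta^2_\depth$ under $a^1$, as in \Cref{lemma|NotLipschitzBeta}. Letting the mass of $\occ^{c,1}_\depth(\cdot\mid\theta^1_\depth)$ concentrate on $\tilde\theta^2_\depth$ along the two sequences
\[
  \left(\tfrac1n,\tfrac1{n^2},1-\tfrac1n-\tfrac1{n^2}\right)
  \quad\text{and}\quad
  \left(\tfrac1{n^2},\tfrac1n,1-\tfrac1n-\tfrac1{n^2}\right)
\]
(in coordinates $(x_{\theta^2_\depth},x_{\hat\theta^2_\depth},x_{\tilde\theta^2_\depth})$) drives the ratio to $A/C_{\theta^2_\depth}>0$ and to $0$ respectively, proving that $\nxt^1_c$ is discontinuous at the point where all the conditional mass sits on $\tilde\theta^2_\depth$, hence not Lipschitz-continuous \wrt $\occ^{c,1}_\depth$.

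As in \Cref{lemma|NotLipschitzBeta}, I expect the only delicate point to be justifying that a history $\tilde\theta^2_\depth$ with $C_{\tilde\theta^2_\depth}=0$ genuinely occurs: rather than building a fully detailed instance, I would rely on the same plausible assumptions used there (incompatibility of an observation with some opponent histories), a single such configuration being enough to rule out continuity.
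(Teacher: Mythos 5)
Your proposal is correct and takes essentially the same route as the paper: the paper's own proof of this proposition is exactly the one-line reduction you perform, namely re-running the argument of \Cref{lemma|NotLipschitzBeta} with the variables now being the components $\occ^{c,1}_\depth(\cdot \mid \theta^1_\depth)$ under fixed $\theta^1_\depth$ (and fixed $\vbeta_\depth$). You simply make explicit what the paper leaves implicit---the linear coefficients $A$ and $C_{\hat\theta^2_\depth}$ of the numerator and denominator, the need for a third history with vanishing denominator coefficient, and the two sequences exhibiting direction-dependent limits---while relying on the same ``plausible instance'' caveat the paper itself invokes.
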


    A similar proof as for \Cref{lemma|NotLipschitzBeta}
    applies, the variables corresponding to parameters
    $\occ^{c,1}_\depth(\theta^2_\depth|\theta^1_\depth)$ under fixed
    $\theta^1_\depth$.

\end{tcolorbox}

This leads us to our main result here regarding $Q^*$.

\label{proofLemQcc}

\lemQcc*

\begin{proof}
  \label{proof|lem|Qcc}
  Let us rewrite $Q^*_\depth(\occ_\depth, \vbeta_\depth)$ to look at
  its properties with respect to $\beta^1_\depth$:
  \begin{align}
    Q^*_\depth(\occ_\depth, \vbeta_\depth)
    & = r(\occ_\depth, \vbeta_\depth) + \gamma V^*_{\depth+1}( \nxt(\occ_\depth,\vbeta_\depth) )
    \\
    & = r(\occ_\depth, \vbeta_\depth) + \gamma \min_{\beta_{\depth+1:}^2}\left[ \nxt^1_m(\occ_\depth,\vbeta_\depth) \cdot \nu^2_{[\nxt^1_c(\occ_\depth,\vbeta_\depth), \beta^2_{\depth+1:}]} \right]
    \\
    & = \overbrace{r(\occ_\depth, \beta^1_\depth, \beta^2_\depth)}^{
        \substack{
          \text{linear in $\beta^1_\depth$}\\
          \text{(proof of \Cref{lem|occSufficient})}
        }
      }
      + \gamma
      \underbrace{ \min_{\beta_{\depth+1:}^2} \Big[ 
        \overbrace{\nxt^1_m(\occ_\depth,\beta^1_\depth,\beta^2_\depth)}^{
          \substack{
            \text{linear in $\beta^1_\depth$}\\
            \text{(\Cref{lem|T1mlin})}
          }
        } \cdot 
        \overbrace{\nu^2_{[\nxt^1_c(\occ_\depth,\beta^2_\depth), \beta_{\depth+1:}^2]}}^{
          \substack{
            \text{independent of $\beta^1_\depth$}\\
            \text{(\Cref{lem|T1cindep})}
          }
        }
        \Big]
      }_{
        \substack{
          \text{concave in $\beta^1_\depth$}\\
          \text{(as a concave combination of linear functions)}
        }
      }.
  \end{align}
  Combining a reward that is linear in $\beta^1_\depth$ and a term
  that is concave in $\beta^1_\depth$ (as a concave combination of
  linear functions), $Q^*_\depth(\occ_\depth, \vbeta_\depth)$ is
  concave in $\beta^1_\depth$ and (symmetrically) convex in
  $\beta^2_\depth$.
\end{proof}

\begin{tcolorbox}[breakable, enhanced]
  
  \uline{Addendum:} The following complementary property highlights
  the difference in nature between usual finite normal-form games and
  the {\em local} games encountered in each occupancy state.
  
  \begin{proposition}
    \labelT{prop|localGameNotBiLinear}
    Local game $Q^*_t(\occ_t,\beta^1_t,\beta^2_t)$ may not be
    bi-linear (in $\beta^1_t$ and $\beta^2_t$).
  \end{proposition}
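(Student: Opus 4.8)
The plan is to exhibit a concrete, minimal zs-POSG in which, for a suitably fixed $\beta^1_\depth$, the single-variable map $\beta^2_\depth \mapsto Q^*_\depth(\occ_\depth, \beta^1_\depth, \beta^2_\depth)$ is strictly convex, hence not affine. Since a bilinear function is affine in each argument separately, this alone refutes bilinearity. Note that by \Cref{lem|Qcc} we already know $Q^*_\depth$ is convex in $\beta^2_\depth$, so it suffices to rule out the degenerate affine case.

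First I would decompose $Q^*_\depth$ via \Cref{eq|localGame} into the instantaneous reward $r(\occ_\depth, \vbeta_\depth)$, which is bilinear in $(\beta^1_\depth, \beta^2_\depth)$ (proof of \Cref{lem|occSufficient}), and the discounted continuation $\gamma V^*_{\depth+1}(\nxt(\occ_\depth, \vbeta_\depth))$. As the reward term is bilinear, any failure of bilinearity must come from the continuation term. By \Cref{lem|occ|lin} the map $\beta^2_\depth \mapsto \nxt(\occ_\depth, \beta^1_\depth, \beta^2_\depth)$ is affine, while by \Cref{theo|ConvexConcaveV} $V^*_{\depth+1}$ is piecewise linear and convex w.r.t.\ $\occ^{m,2}_{\depth+1}$, i.e.\ a finite min/max of linear pieces and hence in general not globally affine. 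The composition of such a genuinely piecewise-linear map with an affine one is affine only when the affine image stays inside a single linearity region of $V^*_{\depth+1}$; I would therefore arrange for it to cross a kink.

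The simplest setting for this is $\depth = H-2$, where $V^*_{H-1}(\occ_{H-1})$ is the value of the terminal normal-form game $\max_{\beta^1_{H-1}}\min_{\beta^2_{H-1}} r(\occ_{H-1}, \beta^1_{H-1}, \beta^2_{H-1})$, whose breakpoints in occupancy space are explicit. I would choose the states, actions, observations, transition/observation kernel and reward so that, as $\beta^2_\depth$ sweeps its simplex with $\beta^1_\depth$ fixed and $\occ_\depth$ equal to the (fixed) initial occupancy state, the induced next occupancy state $\nxt(\occ_\depth, \beta^1_\depth, \beta^2_\depth)$ moves across a point where the optimal terminal security strategy switches, producing a V-shaped (strictly convex) profile. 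Evaluating $Q^*_\depth$ at the two endpoints of the sweep and at its midpoint, and verifying that the midpoint value lies strictly below the average of the endpoint values, then certifies non-affineness and thus non-bilinearity.

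The main obstacle is this last step: designing an explicit instance whose terminal value function is provably nonlinear precisely over the segment traced by $\nxt(\occ_\depth, \beta^1_\depth, \cdot)$, and then carrying out the small but error-prone computation of the beliefs $b(s\mid\vth_\depth)$, of $\nxt$, and of the terminal matrix-game value at the three test points to confirm the strict inequality. The discontinuity of $\nxt^1_c$ in $\beta^2_\depth$ established in \Cref{lemma|NotLipschitzBeta} indicates that such crossings are easy to trigger, but I would take care to keep all relevant histories reachable (so that the conditional terms and the example remain well defined) and to hold $\beta^1_\depth$ fixed throughout, so that only convexity in $\beta^2_\depth$, rather than the interaction between the two players' decision rules, is being tested.
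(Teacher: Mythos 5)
Your strategy is, in outline, exactly the paper's: exhibit an instance in which, with one player's decision rule held fixed, the continuation term $\gamma V^*_{\depth+1}(\nxt(\occ_\depth,\cdot))$ traces a kinked, piecewise-linear, non-affine profile as the other player's decision rule sweeps its simplex; since a bilinear map is affine in each argument separately, this refutes bilinearity. The paper instantiates this at $\depth=H-2$ with a two-stage ``sequential matching pennies'' ($H=2$, $\gamma=1$): player $1$ commits blindly at $t=0$, player $2$ tries to guess at $t=1$, and, parameterizing the decision rules by the probabilities $p_i$ of playing \emph{head}, player $2$'s best response switches at $p_1=\tfrac12$, so that
\begin{align*}
Q^*_0(b_0,\beta^1_0,\beta^2_0) = \underbrace{r(b_0,p_1)}_{=0} + \gamma\, V^*_1\bigl(\nxt(b_0,p_1)\bigr) = 2\,\bigl|p_1-\tfrac12\bigr|,
\end{align*}
which is kinked at $p_1=\tfrac12$ and hence not linear in $\beta^1_0$. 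The only (immaterial) difference is that the paper's kink is in player $1$'s variable, while you aim for strict convexity in player $2$'s; note that in the paper's gadget $\beta^2_0$ is payoff-irrelevant, so your variant would need the mirrored game (player $2$ commits first, player $1$ guesses), which works symmetrically.

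The genuine gap is that you never produce the witness. For an existence statement of the form ``$Q^*_t$ \emph{may} fail to be bilinear,'' the explicit counterexample \emph{is} the proof; everything you establish before that point (bilinearity of the reward term, linearity of $\nxt$ in $\beta^2_\depth$ via \Cref{lem|occ|lin}, piecewise-linear convexity of $V^*_{\depth+1}$ via \Cref{theo|ConvexConcaveV}, and the observation that composing a genuinely kinked map with an affine one can fail to be affine) is correct but only shows that a counterexample is \emph{plausible}, not that one exists: a priori the affine image could always happen to stay inside a single linearity region. You explicitly defer the construction and the verification (``the main obstacle''), so the proposition remains unproven as written. The missing step is not actually difficult---in the matching-pennies gadget the committing player is the only one whose action influences anything at $t=0$, so the transition, the beliefs $b(s\mid\vth)$, and the terminal game value are each one-line computations, and the best-response switch at probability $\tfrac12$ produces the kink automatically---but some such concrete instance, together with the three-point (endpoints and midpoint) check you describe, must be carried out for the argument to close.
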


  \begin{proof}
    \label{proof|prop|localGameNotBiLinear}
    Let us consider a sequential version of matching pennies:
    \begin{itemize}
    \item $\cS=\{s_i,s_h,s_t\}$, the {\em initial}, {\em head}, and
      {\em tail} states;
    \item $\cA^i=\{a_h,a_t\}$ for any player $i$, the {\em head} and
      {\em tail} actions;
    \item $\cZ^i=\{z_\emptyset\}$ for any player $i$, both being blind;
    \item $\PP{s}{a^1,a^2}{s'}{z^1,z^2} = \mathbb{1}_{s'=a^1}$, the
      next state being the head (resp. tail) state if $1$'s action is
      the head (resp. tail) action;
    \item
      $r(s,a^1,a^2) = \mathbb{1}_{s=s_i} \cdot ( 1 - 2 \cdot
      \mathbb{1}_{s'=a^2} )$,
      so that player $1$ gets: (i) $0$ at $t=0$, and (ii) $+1$
      (resp. $-1$) if player $2$ has not guessed at $t$ \zer{}
      previous action (at $t-1$) at any other $t$;
    \item $\gamma=1$; $h=2$.
    \end{itemize}
    Player 1 thus has to pick head or tail first (at $t=0$) and 2
    second (at $t=1$), trying to guess 1's pick.
  
    Let us then parameterize $i$'s strategy by \zer{} probability
    $p_i\in [0,1]$ of picking action $a_h$ at $i$'s only actual
    decision point ($t=0$ for 1, and $t=1$ for 2).
Player 2's best response to some $p_1$ is for example to set
    \begin{align*}
      p_2
      & =
      \begin{cases}
        0 & \text{if } p_1 \leq 0.5, \text{ and} \\
        1 & \text{if } p_1 > 0.5.
      \end{cases}
    \end{align*}
    The value of the local game at $t=0$ can thus be written as the
    following function of $p_1$:
\begin{align*}
      Q^*_0(b_0, \beta^1_0, \beta^2_0)
      = Q^*_0(b_0, p_1)
      & = \underbrace{r(b_0, p_1)}_{=0} + \underbrace{\gamma}_{=1} \cdot \underbrace{V^*_1(T(b_0, p_1))}_{=2 \cdot |p_1-0.5|}
      = 2 \cdot |p_1-0.5|.
    \end{align*}
    $Q^*_0(b_0, \beta^1_0, \beta^2_0)$ is thus not linear in
    $\beta^1_0$, which concludes the proof.
  \end{proof}
\end{tcolorbox}

\section{Properties and Approximation of Optimal Value Functions} 

\subsection{Properties of $V^*$}

\subsubsection{Linearity and Lipschitz-continuity of $\nxt (\occ_\depth, \beta^1_\depth, \beta^2_\depth)$ }
\label{proofLemOccLin}

\lemOccLin*

\begin{proof}
  \label{proof|lem|occ|lin}
  Let $\occ$ be an occupancy state at time $\depth$ and
  $\vbeta_\depth$ be a decision rule.
Then, as seen in the proof of \Cref{lem|occSufficient}, the
  next occupancy state $\occ' = \nxt(\occ,\vbeta_\depth)$ satisfies,
  for any $s'$ and $(\vth,\va,\vz)$:
  \begin{align*}
    \occ'(\vth,\va,\vz)
    & \eqdef Pr(\vth,\va,\vz | \occ, \beta^1_\depth, \beta^2_\depth) \\
& = \beta^1_\depth(\theta^1, a^1) \beta^2_\depth(\theta^2, a^2) \left[ \sum_{s', s\in \cS} \PP{s}{\va}{s'}{\vz} b(s| \vth) \right] \occ(\vth) .
  \end{align*}
$b(s|\vth)$ depending only on the model (transition function and initial belief),
the next occupancy state $\occ'$ thus evolves linearly w.r.t.
(i) {\em private} decision rules $\beta^1_\depth$ and $\beta^2_\depth$, and
(ii) the occupancy state $\occ$.

  The $1$-Lipschitz-continuity holds because each component of vector
  $\occ_\depth$ is distributed over multiple components of $\occ'$.
Indeed, let us view two occupancy states as vectors
  $\vx,\vy \in \reals^n$, and their corresponding next states under
  $\vbeta_\depth$ as $M \vx$ and $M \vy$, where
  $M \in \reals^{m\times n}$ is the corresponding transition matrix
  (\ie, which turns $\occ$ into
  $\occ' \eqdef \nxt(\occ_\depth,\vbeta_\depth)$).
Then,
  \begin{align*}
    \norm{M\vx - M\vy}_{1}
& \eqdef \sum_{j=1}^m \ \abs{ \sum_{i=1}^n M_{i,j} (x_i - y_i) } \\
    & \leq \sum_{j=1}^m  \sum_{i=1}^n \abs{M_{i,j} (x_i - y_i) }
    & \text{(convexity of $\abs{\cdot}$)} \\
    & = \sum_{j=1}^m  \sum_{i=1}^n M_{i,j} \abs{ x_i - y_i }
    & \text{($\forall {i,j},\ M_{i,j}\geq 0$)} \\
    & = \sum_{i=1}^n \underbrace{\sum_{j=1}^m M_{i,j}}_{=1} \abs{ x_i - y_i } 
    & \text{($M$ is a transition matrix)} \\
& \eqdef \norm{\vx-\vy}_1.
& \qedhere
  \end{align*}
\end{proof}

\subsubsection{Lipschitz-Continuity of $V^*$}
\label{proofLemVLinOcc}
\label{proofCorVLCOcc}
\label{proofLemNuLC}

The next two results demonstrate that, in the finite horizon setting,
$V^*$ is Lipschitz-continuous (LC) in occupancy space, which allows
defining LC upper and lower bound approximations.

\lemVLinOcc*

Note: This result in fact applies to any reward function of a
general-sum POSG with any number of agents (here $N$), \eg, to a
Dec-POMDP.
The following proof handles the general case (with
$\vbeta_\depth \eqdef \langle \beta^1_\depth, \dots, \beta^N_\depth
\rangle$, and
$\vbeta_\depth(\va|\vth) = \prod_{i=1}^N
\beta^i_\depth(a^i,\theta^1)$).

\begin{proof}
  \label{proof|lem|V|lin|occ}
  This property trivially holds for $\depth=H-1$ because
  \begin{align*}
    V_{H-1}(\occ_{H-1},\vbeta_{H-1:}) 
    & = r(\occ_{H-1},\vbeta_{H-1}) \\
    & = \sum_{s,\va} \left( \sum_\vth Pr(s,\va|\vth) \occ_{H-1}(\vth) \right) r(s,\va) \\
    & = \sum_{s,\va} \left( \sum_\vth b(s|\vth) \vbeta_\depth(\va|\vth) \occ_{H-1}(\vth) \right) r(s,\va) \\
    & = \sum_{s,\vth}  b(s | \vth) \occ_{H-1}(\vth)  \left( \sum_{\va} \vbeta_\depth(\va|\vth) r(s,\va) \right).
  \end{align*}
Now, let us assume that the property holds for
  $\depth+1 \in \{1 \twodots H-1\}$.
Then,
\begin{align*}
    V_\depth(\occ_\depth,\vbeta_{\depth:}) 
    & = \sum_{s,\va} \Big( \sum_\vth b(s| \vth) \vbeta_\depth(\va|\vth) \occ_\depth(\vth)  \Big) r(s, \va)
    + \gamma V_{\depth+1}\left( \nxt (\occ_\depth, \vbeta_\depth), \vbeta_{\depth+1:} \right) \\
    & = \sum_{s,\vth}  b(s| \vth) \occ_\depth(\vth)  \Big( \sum_{\va} \vbeta_\depth(\va|\vth) r(s, \va) \Big)
    + \gamma V_{\depth+1}\left( \nxt (\occ_\depth, \vbeta_\depth), \vbeta_{\depth+1:} \right) .
  \end{align*}
  As
  \begin{itemize}
  \item $\nxt(\occ_\depth,\vbeta_\depth)$ is linear in $\occ_\depth$
    {\footnotesize (\Cref{lem|occ|lin})} and
  \item $V_{\depth+1}(\occ_{\depth+1}, \vbeta_{\depth+1:})$ is
    linear in $\occ_{\depth+1}$ {\footnotesize (induction hypothesis)},
  \end{itemize}
  their composition,
  $V_{\depth+1} ( \nxt (\occ_\depth,\vbeta_\depth),
  \vbeta_{\depth+1:} )$,
  is also linear in $\occ_\depth$, and so is
  $V_\depth(\occ_\depth,\vbeta_{\depth:})$.
\end{proof}

\corVLCOcc*

\begin{proof}
  \label{proof|cor|V|LC|occ}
  At depth $\depth$, the value of any behavioral strategy
  $\vbeta_{\depth:}$ is bounded, independently of $\occ_\depth$, by
  \begin{align*}
    V^{\max}_\depth & \eqdef \h{H}{\depth}{\gamma} r_{\max}, \quad \text{where } r_{\max} \eqdef \max_{s,\va}r(s,\va),
    \text{ and } \\
    V^{\min}_\depth &\eqdef \h{H}{\depth}{\gamma}  r_{\min}, \quad \text{where } r_{\min} \eqdef \min_{s,\va}r(s,\va).
  \end{align*}
Thus, $V_{\vbeta_{\depth:}}$ being a linear function defined over a
  probability simplex ($\Occ_\depth$) (\cf \Cref{lem|V|lin|occ}) and
  bounded by $[V^{\min}_\depth,V^{\max}_\depth]$, we can apply
  \citetg{Horak-phd19} Lemma~3.5 (p.~33) to establish that it is also
  $\lt{\depth}$-LC, \ie,
  \begin{align*}
    \abs{V_{\vbeta_{\depth:}}(\occ) -V_{\vbeta_{\depth:}}(\occ')}
    & \leq \lt{\depth} \norm{\occ-\occ'}_{\p} \quad (\forall \occ, \occ'), \\
    \text{with }
    \lt{\depth}
    & = \frac{V^{\max}_{\depth}-V^{\min}_{\depth}}{2}.
  \end{align*}

  Considering now optimal solutions, this means that, at depth
  $\depth$ and for any $(\occ,\occ') \in \Occ_\depth$:
  \begin{align*}
    V^*_\depth(\occ) - V^*_\depth(\occ') 
    & = \max_{\beta^1_{\depth:}} \min_{\beta^2_{\depth:}} V_\depth(\occ, \beta^1_{\depth:}, \beta^2_{\depth:}) 
    - \max_{\beta'^1_{\depth:}} \min_{\beta'^2_{\depth:}} V_\depth(\occ', \beta'^1_{\depth:}, \beta'^2_{\depth:}) \\
& \leq \max_{\beta^1_{\depth:}} \min_{\beta^2_{\depth:}} \left[ V_\depth(\occ', \beta^1_{\depth:}, \beta^2_{\depth:}) + \lt{\depth} \norm{\occ-\occ'}_{\p} \right] 
    - \max_{\beta'^1_{\depth:}} \min_{\beta'^2_{\depth:}} V_\depth(\occ', \beta'^1_{\depth:}, \beta'^2_{\depth:}) \\
& = \lt{\depth} \norm{\occ-\occ'}_{\p}.
  \end{align*}
  Symmetrically,
  $V^*_\depth(\occ) - V^*_\depth(\occ') \geq -\lt{\depth}
  \norm{\occ-\occ'}_{\p}$, hence the expected result:
  \begin{align*}
    \abs{ V^*_\depth(\occ) - V^*_\depth(\occ') } 
    & \leq \lt{\depth} \norm{\occ-\occ'}_{\p}.
\qedhere
  \end{align*}
\end{proof}

As it will be used later, let us also present the following lemma.

\begin{lemma}
  \labelT{lem|nuLC}
  Let us consider $\depth \in \{0 \twodots H-1\}$, $\theta^1_\depth$, and
  $\delta^2_\depth$.
Then
  $\nu^2_{[\occ^{c,1}_{\depth},
    \delta^2_{\depth}]}(\theta^1_{\depth})$ is $\l_\depth$-LC in
  $\occ^{c,1}_\depth(\cdot|\theta^1_\depth)$.
      
  Equivalently, we will also write that
  $\nu^2_{[\occ^{c,1}_{\depth}, \delta^2_{\depth}]}$ is $\l_\depth$-LC
  in $\occ^{c,1}_\depth$ in {\em vector-wise} 1-norm, \ie:
  \begin{align*}
    \vabs{
      \nu^2_{[\occ^{c,1}_{\depth}, \delta^2_{\depth}]} - \nu^2_{[\tilde\occ^{c,1}_{\depth}, \delta^2_{\depth}]}
    }_1
    & \vleq \lt\depth \vnorm{
      \occ^{c,1}_{\depth} - \tilde\occ^{c,1}_{\depth}
    }_1,
  \end{align*}
  where (i) the absolute value of a vector is obtained by taking the
  absolute value of each component; and (ii) the vector-wise 1-norm of a matrix is a vector made of the
  1-norm of each of its component vectors.
\end{lemma}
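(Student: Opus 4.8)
The plan is to recognize that, once $\theta^1_\depth$ and Player~$2$'s (possibly recursive) strategy $\delta^2_\depth$ are fixed, $\nu^2_{[\occ^{c,1}_\depth,\delta^2_\depth]}(\theta^1_\depth)$ is the optimal value of a POMDP faced by Player~$1$ whose initial belief over the augmented state $(s,\theta^2_\depth)$ is governed by $\occ^{c,1}_\depth(\cdot\mid\theta^1_\depth)$. Concretely, I would first write, for a fixed best-response candidate $\beta^1_{\depth:}$,
\[
\sum_{\theta^2_\depth} \occ^{c,1}_\depth(\theta^2_\depth\mid\theta^1_\depth)\, g_{\theta^1_\depth,\theta^2_\depth}(\beta^1_{\depth:},\delta^2_\depth),
\]
where $g_{\theta^1_\depth,\theta^2_\depth}$ is the expected discounted return from $\depth$ on conditioned on the joint history $(\theta^1_\depth,\theta^2_\depth)$. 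The key observation is that $g$ depends on $\occ_\depth$ only through the belief $b(s\mid\theta^1_\depth,\theta^2_\depth)$, which is obtained by HMM filtering from the model alone (\cf proof of \Cref{lem|occSufficient}) and is therefore independent of $\occ^{c,1}_\depth$. Hence, for each fixed $\beta^1_{\depth:}$, this value is \emph{linear} in $\occ^{c,1}_\depth(\cdot\mid\theta^1_\depth)$, viewed as a point of the simplex $\Delta(\Theta^2_\depth)$.

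I would then bound this linear function exactly as in the proof of \Cref{cor|V|LC|occ}: any return accumulated over the $H-\depth$ remaining steps lies in $[V^{\min}_\depth,V^{\max}_\depth]=[\h{H}{\depth}{\gamma}\, r_{\min},\ \h{H}{\depth}{\gamma}\, r_{\max}]$. Applying \citetg{Horak-phd19} Lemma~3.5 to a linear function defined over a probability simplex and bounded by this interval yields that each such function is $\lt{\depth}$-LC in $\occ^{c,1}_\depth(\cdot\mid\theta^1_\depth)$, with $\lt{\depth}=\tfrac12 \h{H}{\depth}{\gamma}(r_{\max}-r_{\min})$, matching the constant of \Cref{cor|V|LC|occ}.

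Finally, since $\nu^2_{[\occ^{c,1}_\depth,\delta^2_\depth]}(\theta^1_\depth)=\max_{\beta^1_{\depth:}}$ of functions that are all $\lt{\depth}$-LC with the \emph{same} constant, it is itself $\lt{\depth}$-LC, a pointwise maximum preserving the common modulus of continuity. This establishes the scalar claim for each $\theta^1_\depth$. The vector-wise statement then follows by applying this bound component by component: the conditionals $\occ^{c,1}_\depth(\cdot\mid\theta^1_\depth)$ attached to distinct $\theta^1_\depth$ are separate distributions, so the per-component $1$-norm bounds assemble directly into $\vabs{ \nu^2_{[\occ^{c,1}_\depth,\delta^2_\depth]} - \nu^2_{[\tilde\occ^{c,1}_\depth,\delta^2_\depth]} }_1 \vleq \lt{\depth}\, \vnorm{ \occ^{c,1}_\depth - \tilde\occ^{c,1}_\depth }_1$.

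The main obstacle I anticipate is the first paragraph: cleanly justifying that conditioning on $\theta^2_\depth$ removes all dependence on $\occ^{c,1}_\depth$, leaving only the model-determined belief $b(s\mid\theta^1_\depth,\theta^2_\depth)$, and checking that the recursive strategy $\delta^2_\depth$ prescribes Player~$2$'s behavior as a function of $\theta^2_\depth$ (and its own internal randomization) alone, so that it introduces no hidden dependence on $\occ^{c,1}_\depth$ either. Once the value is shown to be a simplex-bounded linear function for each fixed $\beta^1_{\depth:}$, the remaining steps are routine adaptations of the arguments already used for \Cref{cor|V|LC|occ}.
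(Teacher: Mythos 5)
Your proposal is correct and follows essentially the same route as the paper's proof: both view $\nu^2_{[\occ^{c,1}_{\depth},\delta^2_{\depth}]}(\theta^1_\depth)$ as the optimal value of a POMDP over augmented states $\langle s,\theta^2\rangle$ whose initial belief factors into a model-determined HMM part times $\occ^{c,1}_\depth(\cdot|\theta^1_\depth)$, then combine linearity of fixed-strategy values, boundedness, \citetg{Horak-phd19} Lemma~3.5, and preservation of the common Lipschitz constant under the max over $\beta^1_{\depth:}$ (the paper delegates these last steps to the proof of \Cref{cor|V|LC|occ} via a footnote, which you spell out explicitly). The obstacle you flag---that conditioning on $\theta^2_\depth$ leaves only the model-determined belief---is exactly the factorization $b_{\theta^1_t}(s,\theta^2_t)=b^{\hmm}_{\theta^2_t,\theta^1_t}(s)\cdot\occ^{c,1}_t(\theta^2_t|\theta^1_t)$ that the paper's proof makes explicit.
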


\begin{proof}
  \label{proof|lem|nuLC}
For any $\theta^1_\depth$, $\occ^{c,1}_{\depth}$ and $\delta^2_{\depth}$ induce a POMDP for Player $1$
  from $\depth$ on,
where (i) the state at any $t\in \{\depth \twodots H-1\}$ corresponds to a pair
  $\langle s, \theta^2_t \rangle$, and (ii) the initial belief is derived from $\occ^{c,1}_{\depth}(\cdot|\theta^1_t)$.
The belief state at $t$ thus gives:
  \begin{align*}
    b_{\theta^1_t}(s, \theta^2_t)
    & \eqdef Pr(s, \theta^2_t | \theta^1_t) = \underbrace{Pr(s | \theta^2_t, \theta^1_t)}_{b^{\hmm}_{\theta^2_t, \theta^1_t}(s)}
    \cdot \underbrace{Pr(\theta^2_t | \theta^1_t)}_{\occ^{c,1}_t(\theta^2_t|\theta^1_t)}.
  \end{align*}
So,
  \begin{itemize}
  \item the value function of any behavioral strategy
    $\beta^1_{\depth:}$ is linear at $t$ in $b_{\theta^1_t}$, thus (in
    particular) in $\occ^{c,1}_t(\cdot|\theta^1_t)$; and
  \item the optimal value function is LC at $t$ also in
    $b_{\theta^1_t}$ (with the same depth-dependent upper-bounding
    Lipschitz constant $\l_t$ as in \Cref{cor|V|LC|occ}),\footnote{The
      proof process is similar. The only difference lies in the space
      at hand, but without any impact on the resulting formulas.} thus (in particular) in $\occ^{c,1}_t(\cdot|\theta^1_t)$.
  \end{itemize}

  Using $t=\depth$, the optimal value function is
  $\nu^2_{[\occ^{c,1}_{\depth},
    \delta^2_{\depth}]}(\theta^1_{\depth})$, which is thus
  $\l_\depth$-LC in $\occ^{c,1}_{\depth}(\cdot|\theta^1_\depth)$.
\end{proof}

\subsection{Bounding Approximations of $V^*$, $W^{1,*}$ and $W^{2,*}$} \label{app|derivingApproximations}

\subsubsection{$\upb{V}_\depth$ and $\protect\lob{V}_\depth$}
\label{app|derivingApproximationsV}

To find a form that could be appropriate for an upper bound approximation of $V^*_\depth$,
let us consider an \os{} $\occ_\depth$ and a single tuple $\langle { \tilde\occ_\depth, \nu^2_{[\tilde\occ_\depth^{c,1}, \beta^2_{\depth:}]} } \rangle$,
and define
$\zeta_\depth \eqdef \occ_\depth^{m,1}\tilde\occ_\depth^{c,1}$.
Then,
\begin{align*}
  V^*(\occ_\depth)
  & \leq V^*(\zeta_\depth) + \lt{\depth}  \norm{ \occ_\depth - \zeta_\depth }_1 
  & \text{(LC, \cf \Cref{cor|V|LC|occ})} \\
  & = V^*(\occ_\depth^{m,1} \tilde\occ_\depth^{c,1})
  + \lt{\depth}  \norm{ \occ_\depth - \zeta_\depth }_1 \\
  & \label{eq|upbV}
  \leq \occ_\depth^{m,1} \cdot \nu^2_{[\tilde\occ_\depth^{c,1}, \beta^2_{\depth:}]}
  + \lt{\depth}  \norm{ \occ_\depth - \occ_\depth^{m,1}\tilde\occ_\depth^{c,1} }_1. 
  & \text{(Cvx, \cf \Cref{theo|ConvexConcaveV})} \end{align*}
Notes:
\begin{itemize}
\item $\tilde\occ^{m,1}_\depth$ does not appear in the resulting
  upper bound, thus will not need to be specified.
\item For $\depth=H-1$,
  $\nu^2_{[\tilde\occ_\depth^{c,1}, \beta_{\depth:}^2]}$ is a simple
  function of $r$, $\tilde\occ_\depth^{c,1}$, $\beta_{\depth:}^2$, and
  the dynamics of the system, as described in Eq.~(9) of
  \citet{WigOliRoi-corr16}.
\end{itemize}

From this, we can deduce the following appropriate forms of upper and
(symmetrically) lower bound function approximations for $V^*_\depth$:
\begin{align*}
  \upb{V}_\depth(\occ_\depth)
  & = \min_{   \langle \tilde\occ_\depth^{c,1}, \upb\nu^2_\depth  \rangle  \in \upb{bagV} } \left[ \occ_\depth^{m,1} \cdot \upb\nu^2_\depth + \lt{\depth} \norm{ \occ_\depth - \occ_\depth^{m,1}\tilde\occ_\depth^{c,1} }_1 \right], \text{ and} \\
\lob{V}_\depth(\occ_\depth) 
  & = \max_{ \langle \tilde\occ_\depth^{c,2}, \lob\nu^1_\depth \rangle \in \lob{bagV} } \left[ \occ_\depth^{m,2} \cdot \lob\nu^1_\depth - \lt{\depth} \norm{ \occ_\depth - \occ_\depth^{m,2}\tilde\occ_\depth^{c,2} }_1 \right],
\end{align*}
which are respectively concave in $\occ^{m,1}_\depth$ and convex in
$\occ^{m,2}_\depth$, and which both exploit the Lipschitz continuity.

\subsubsection{$\upb{W}^1_\depth$ and $\protect\lob{W}^2_\depth$}
\label{app|derivingApproximationsW}

Note: We discuss all depths from $0$ to $H-1$, even though we do not
need these approximations at $\depth=H-1$.

Let us first see how concavity-convexity properties affect $W_\depth^{*,1}$.

\begin{lemma}
  \labelT{lemma|OptimalQValues}
  Considering that vectors $\nu^2_{[\occ_{H}^{c,1},\beta_{H:}^2]}$
  are null vectors, we have, for all $\depth \in \{0\twodots H-1\}$:
  \begin{align*}
    W_\depth^{1,*}(\occ_\depth,\beta^1_\depth) & = \min_{\beta^2_\depth, \langle \beta_{\depth+1:}^2, \nu^2_{{[T^1_c(\occ_\depth, \beta^2_\depth), \beta^2_{\depth+1:}]}} \rangle} \beta^1_\depth \cdot \Big[ {
      \vr(\occ_\depth, \cdot, \beta^2_\depth) 
      + \gamma \Nxt^1_m(\occ_\depth, \cdot, \beta^2_\depth) \cdot \nu^2_{{[T^1_c(\occ_\depth, \beta^2_\depth),\beta^2_{\depth+1:}]}}
    } \Big].
  \end{align*}
\end{lemma}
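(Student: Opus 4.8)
The plan is to start from the definition $W^{1,*}_\depth(\occ_\depth,\beta^1_\depth) = \min_{\beta^2_\depth} Q^*_\depth(\occ_\depth,\beta^1_\depth,\beta^2_\depth)$ and unfold $Q^*$ via \Cref{eq|localGame}, so that $W^{1,*}_\depth(\occ_\depth,\beta^1_\depth) = \min_{\beta^2_\depth} \left[ r(\occ_\depth,\vbeta_\depth) + \gamma V^*_{\depth+1}(\nxt(\occ_\depth,\vbeta_\depth)) \right]$. The core idea is then to rewrite the inner $V^*_{\depth+1}$ in its concave form from \Cref{theo|ConvexConcaveV}, evaluated at the successor occupancy state $\occ_{\depth+1} = \nxt(\occ_\depth,\vbeta_\depth)$, whose marginal and conditional components are $\nxt^1_m(\occ_\depth,\vbeta_\depth)$ and $\nxt^1_c(\occ_\depth,\vbeta_\depth)$, respectively.

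First I would apply \Cref{theo|ConvexConcaveV} to obtain $V^*_{\depth+1}(\nxt(\occ_\depth,\vbeta_\depth)) = \min_{\beta^2_{\depth+1:}} \left[ \nxt^1_m(\occ_\depth,\vbeta_\depth) \cdot \nu^2_{[\nxt^1_c(\occ_\depth,\vbeta_\depth), \beta^2_{\depth+1:}]} \right]$. The crucial observation, from \Cref{lem|T1cindep}, is that $\nxt^1_c$ does not depend on $\beta^1_\depth$, so I may write it as $\nxt^1_c(\occ_\depth,\beta^2_\depth)$; consequently each continuation vector $\nu^2_{[\nxt^1_c(\occ_\depth,\beta^2_\depth), \beta^2_{\depth+1:}]}$ is itself independent of $\beta^1_\depth$. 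I would substitute this back and merge the two nested minimizations over $\beta^2_\depth$ and $\beta^2_{\depth+1:}$ into a single one (legitimate because the reward term does not involve $\beta^2_{\depth+1:}$), yielding $\min_{\beta^2_\depth,\beta^2_{\depth+1:}} \left[ r(\occ_\depth,\beta^1_\depth,\beta^2_\depth) + \gamma\, \nxt^1_m(\occ_\depth,\beta^1_\depth,\beta^2_\depth) \cdot \nu^2_{[\nxt^1_c(\occ_\depth,\beta^2_\depth), \beta^2_{\depth+1:}]} \right]$.

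The last step is to factor $\beta^1_\depth$ out of the bracket. This uses two linearity facts: the reward $r(\occ_\depth,\beta^1_\depth,\beta^2_\depth)$ is linear in $\beta^1_\depth$ (proof of \Cref{lem|occSufficient}), and $\nxt^1_m(\occ_\depth,\beta^1_\depth,\beta^2_\depth)$ is linear in $\beta^1_\depth$ (\Cref{lem|T1mlin}). Combined with the $\beta^1_\depth$-independence of the $\nu^2$ vectors just noted, the whole bracket is linear in $\beta^1_\depth$, hence can be written as $\beta^1_\depth \cdot [\,\cdot\,]$ following the paper's convention for linear functions. This produces exactly the claimed formula, with the minimization ranging over pairs $\langle \beta^2_{\depth+1:}, \nu^2_{[\nxt^1_c(\occ_\depth,\beta^2_\depth),\beta^2_{\depth+1:}]}\rangle$. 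For the boundary case $\depth = H-1$, the stated convention that $\nu^2_{[\occ^{c,1}_H,\beta^2_{H:}]}$ is the null vector collapses the expression to the reward term alone.

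The main obstacle is the bookkeeping in the middle step: one must carefully invoke \Cref{lem|T1cindep} to guarantee that the continuation vectors $\nu^2$ carry no dependence on $\beta^1_\depth$ before attempting to pull $\beta^1_\depth$ out linearly. Without this independence the product $\nxt^1_m(\occ_\depth,\cdot,\beta^2_\depth) \cdot \nu^2$ would not be linear in $\beta^1_\depth$ and the factoring would fail; everything else is routine substitution and the standard merging of nested minima.
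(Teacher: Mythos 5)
Your proposal matches the paper's own proof essentially step for step: both start from $W^{1,*}_\depth = \min_{\beta^2_\depth} Q^*_\depth$, expand $V^*_{\depth+1}$ at the successor occupancy state via the concave representation of \Cref{theo|ConvexConcaveV}, invoke \Cref{lem|T1cindep} to make the continuation vectors $\beta^1_\depth$-independent, merge the nested minimizations, and finally factor out $\beta^1_\depth$ using the linearity of $r$ and $\nxt^1_m$ (\Cref{lem|occSufficient}, \Cref{lem|T1mlin}). The argument is correct and complete, including the $\depth=H-1$ boundary convention.
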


\begin{proof}
  \label{proof|lemma|OptimalQValues}
  Considering that vectors $\nu^2_{[\occ_{H}^{c,1},\beta^2_{H:}]}$
  are null vectors, we have, for all $\depth \in \{0 \twodots H-1\}$:
  \begin{align*}
     W_\depth^{1,*}(\occ_\depth,\beta^1_\depth) & = \min_{\beta^2_\depth}  Q^*_\depth(\occ_\depth, \beta^1_\depth, \beta^2_\depth) = \min_{\beta^2_\depth}  \left[
      r(\occ_\depth, \vbeta_\depth)
      + \gamma V^*_{\depth+1}( \nxt(\occ_\depth, \vbeta_\depth) )
    \right] \intertext{(Line below exploits \Cref{theo|ConvexConcaveV} (p.~\pageref{theo|ConvexConcaveV}) and $\nxt^1_c$'s independence from $\beta^1_\depth$ (\Cref{lem|T1cindep}).)}
    & = \min_{\beta^2_\depth}  \left[
      r(\occ_\depth, \vbeta_\depth)
      + \gamma \min_{\langle \beta_{\depth+1:}^2, \nu^2_{{[T^1_c(\occ_\depth, \beta^2_\depth), \beta^2_{\depth+1:}]}} \rangle}  \left[
        \nxt_m^1(\occ_\depth, \vbeta_\depth) \cdot \nu^2_{{[T^1_c(\occ_\depth, \beta^2_\depth), \beta^2_{\depth+1:}]}}
      \right]
    \right] \\
    & = \min_{\beta^2_\depth, \langle \beta_{\depth+1:}^2, \nu^2_{{[T^1_c(\occ_\depth, \beta^2_\depth), \beta^2_{\depth+1:}]}} \rangle} \left[
      r(\occ_\depth, \vbeta_\depth)
      + \gamma \nxt_m^1(\occ_\depth, \vbeta_\depth) \cdot \nu^2_{{[T^1_c(\occ_\depth, \beta^2_\depth), \beta^2_{\depth+1:}]}}
    \right] \intertext{(Line below exploits $r$ and $\nxt^1_m$'s linearity in $\beta^1_\depth$ (\Cref{lem|T1mlin}).)}
    & = \min_{\beta^2_\depth, \langle \beta_{\depth+1:}^2, \nu^2_{{[T^1_c(\occ_\depth, \beta^2_\depth), \beta^2_{\depth+1:}]}} \rangle} {\beta^1_\depth}^\t \! \cdot \Big[ {
      \vr(\occ_\depth, \cdot, \beta^2_\depth) 
      + \gamma \Nxt^1_m(\occ_\depth, \cdot, \beta^2_\depth) \cdot \nu^2_{{[T^1_c(\occ_\depth, \beta^2_\depth),\beta^2_{\depth+1:}]}}
    } \Big].
\qedhere
  \end{align*}
\end{proof}

Note that, since $V^*_H=0$, $\depth=H-1$ is a particular case which can be simply re-written:
\begin{align*}
   W_\depth^{1,*}(\occ_\depth,\beta^1_\depth) & = \min_{\beta^2_\depth} {\beta^1_\depth}^\t \! \cdot
   \vr(\occ_\depth, \cdot, \beta^2_\depth).
 \end{align*}

\begin{tcolorbox}[breakable, enhanced]

  \uline{Addendum:} The following complementary property is not
  directly used in the present work, but makes for a more complete
  table of properties (\Cref{tab|PropertyTable}).

    \begin{proposition}
      \labelT{lemma|Wconcave}
      $W^{1,*}_\depth(\occ_\depth,\beta^1_\depth)$ is concave in $\beta^1_\depth$.
    \end{proposition}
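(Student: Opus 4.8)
The plan is to recognize $W^{1,*}_\depth(\occ_\depth,\cdot)$ as a pointwise minimum of concave functions and to invoke the elementary fact that such a minimum is again concave. By definition, $W^{1,*}_\depth(\occ_\depth,\beta^1_\depth) = \min_{\beta^2_\depth} Q^*_\depth(\occ_\depth,\beta^1_\depth,\beta^2_\depth)$, so that, for a fixed $\occ_\depth$, the function $\beta^1_\depth \mapsto W^{1,*}_\depth(\occ_\depth,\beta^1_\depth)$ is the lower envelope of the family $\{ Q^*_\depth(\occ_\depth,\cdot,\beta^2_\depth) \}_{\beta^2_\depth}$ indexed by $2$'s decision rules. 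The key ingredient, already available, is \Cref{lem|Qcc}, which states that each member $Q^*_\depth(\occ_\depth,\cdot,\beta^2_\depth)$ of this family is concave in $\beta^1_\depth$.

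First I would fix two decision rules $\beta^1_\depth, \tilde\beta^1_\depth$ (each ranging over the convex set that is the product of one probability simplex per private history, so that their convex combination is again a valid decision rule) and a scalar $\lambda \in [0,1]$. For every opponent decision rule $\beta^2_\depth$, concavity of $Q^*_\depth(\occ_\depth,\cdot,\beta^2_\depth)$ from \Cref{lem|Qcc} together with the defining inequality $Q^*_\depth(\occ_\depth,\cdot,\beta^2_\depth) \geq W^{1,*}_\depth(\occ_\depth,\cdot)$ yields
\begin{align*}
  Q^*_\depth(\occ_\depth, \lambda\beta^1_\depth + (1-\lambda)\tilde\beta^1_\depth, \beta^2_\depth)
  & \geq \lambda\, Q^*_\depth(\occ_\depth, \beta^1_\depth, \beta^2_\depth) + (1-\lambda)\, Q^*_\depth(\occ_\depth, \tilde\beta^1_\depth, \beta^2_\depth) \\
  & \geq \lambda\, W^{1,*}_\depth(\occ_\depth, \beta^1_\depth) + (1-\lambda)\, W^{1,*}_\depth(\occ_\depth, \tilde\beta^1_\depth).
\end{align*}

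Second, since the right-hand side no longer depends on $\beta^2_\depth$, I would take the minimum over $\beta^2_\depth$ on the left, obtaining $W^{1,*}_\depth(\occ_\depth, \lambda\beta^1_\depth + (1-\lambda)\tilde\beta^1_\depth) \geq \lambda\, W^{1,*}_\depth(\occ_\depth, \beta^1_\depth) + (1-\lambda)\, W^{1,*}_\depth(\occ_\depth, \tilde\beta^1_\depth)$, which is exactly concavity in $\beta^1_\depth$. There is essentially no obstacle here: the argument is a one-line application of \Cref{lem|Qcc} and the stability of concavity under pointwise minima, and it requires no new computation. The only points worth a remark are that the domain of $\beta^1_\depth$ is convex (so convex combinations stay feasible) and that the $\min$ over $\beta^2_\depth$ is well defined (attained by compactness of the decision-rule space, or otherwise read as an infimum, for which the same conclusion holds).
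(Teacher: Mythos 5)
Your proof is correct and follows essentially the same route as the paper's: both reduce the claim to the fact that a pointwise minimum over $\beta^2_\depth$ of functions concave in $\beta^1_\depth$ (which is what \Cref{lem|Qcc} provides) is itself concave. The only cosmetic difference is that the paper proves this stability fact as an abstract lemma about concave-convex functions $f(x,y)$ and $g(x)=\min_y f(x,y)$, whereas you instantiate it directly using the bound $Q^*_\depth \geq W^{1,*}_\depth$; the content is identical.
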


    \begin{proof}
      \label{proof|lemma|Wconcave}
      Let $X$ and $Y$ be two convex domains, $f: X\times Y \mapsto \reals$ be a concave-convex function, and $g(x)\eqdef \min_{y\in Y} f(x,y)$.
Then, for any $x_1, x_2 \in X$, and any $\alpha \in [0,1]$,
      \begin{align}
        g(\alpha x_1 + (1-\alpha) x_2)
        & = min_y \underbrace{ f( \alpha x_1 + (1-\alpha) x_2, y ) }_{
          \geq \alpha f( x_1, y ) + (1-\alpha) f( x_2, y )
        }
        & \text{(concavity in $x$)}
        \\
        & \geq min_y \left[ \alpha f( x_1, y ) + (1-\alpha) f( x_2, y ) \right]
        \\
        & \geq min_y \alpha f( x_1, y ) +  min_y (1-\alpha) f( x_2, y )
        \\
        & = \alpha g( x_1 ) +  (1-\alpha) g( x_2 ).
        & \text{($\alpha\geq 0$)}
      \end{align}
      $g$ is thus concave in $x$.

      This result directly applies to the function at hand, proving its concavity in $\beta^1_\depth$.
    \end{proof}
  \end{tcolorbox}

To find a form that could be appropriate for an upper bound approximation of $W^{*,1}_\depth$,
let us now consider an \os{} $\occ_\depth$ and a single tuple $\langle { \tilde\occ_\depth, \tilde\beta^2_\depth, \nu^2_{[T^1_c(\tilde\occ_\depth, \tilde\beta^2_\depth), \tilde\beta_{\depth+1:}^2]} } \rangle$.
Then,
\begin{align}
  W_\depth^{1,*}(\occ_\depth,\beta^1_\depth) & = \min_{\beta^2_\depth}  \left[
    r(\occ_\depth, \vbeta_\depth)
    + \gamma V^*_{\depth+1}( \nxt(\occ_\depth, \vbeta_\depth) )
  \right]
  \nonumber \\
& \leq   
  r(\occ_\depth, \beta^1_\depth, \tilde\beta^2_\depth)
  + \gamma  V^{BR,1}_{\depth+1}( \nxt(\occ_\depth, \beta^1_\depth, \tilde\beta^2_\depth) | \tilde\beta_{\depth+1:}^2)
  \nonumber
  \qquad \text{\small (Use 
    $\tilde\beta^2_\depth$ \& $\tilde\beta^2_{\depth+1:}$ instead of mins)} \intertext{\small (where $V^{BR,1}_{\depth+1}( \nxt(\occ_\depth, \beta^1_\depth, \tilde\beta^2_\depth) | \tilde\beta_{\depth+1:}^2)$ is the value of 1's best response to $\tilde\beta^2_{\depth+1:}$ if in $\nxt(\occ_\depth, \beta^1_\depth, \tilde\beta^2_\depth)$)}
& = r(\occ_\depth, \beta^1_\depth, \tilde\beta^2_\depth)
  + \gamma 
  \nxt_m^1(\occ_\depth, \beta^1_\depth, \tilde\beta^2_\depth) \cdot \underbrace{ \nu^2_{[\nxt^1_c(\occ^{c,1}_\depth, \tilde\beta^2_\depth), \tilde\beta_{\depth+1:}^2]} } \nonumber
  \qquad \text{\small (Lem.~3 of \citet{WigOliRoi-corr16})} \\
  & \leq r(\occ_\depth, \beta^1_\depth, \tilde\beta^2_\depth)
  + \gamma 
  \nxt_m^1(\occ_\depth, \beta^1_\depth, \tilde\beta^2_\depth) \cdot \Big( {
    \nu^2_{[\nxt^1_c(\tilde\occ^{c,1}_\depth, \tilde\beta^2_\depth), \tilde\beta_{\depth+1:}^2]} }
  \qquad  \text{\small (\Cshref{lem|nuLC}: $\lt{\depth+1}$-LC}
  \\
  & \qquad \qquad {
    + \lt{\depth+1} \cdot \vnorm{ \nxt^1_c(\occ^{c,1}_\depth, \tilde\beta^2_\depth) - \nxt^1_c(\tilde\occ^{c,1}_\depth, \tilde\beta^2_\depth) }_1
  } \Big)
  \qquad \qquad \text{\small  of $\nu^2_{[\nxt^1_c(\occ^{c,1}_\depth, \tilde\beta^2_\depth), \tilde\beta_{\depth+1:}^2]}$)}
  \nonumber \\
  & \label{eq|WsingleTerm}
  = {\beta^1_\depth}^\t \! \cdot \Big[
  r(\occ_\depth, \cdot, \tilde\beta^2_\depth)
  + \gamma 
  \nxt_m^1(\occ_\depth, \cdot, \tilde\beta^2_\depth) \cdot \Big( {
    \nu^2_{[\nxt^1_c(\tilde\occ^{c,1}_\depth, \tilde\beta^2_\depth), \tilde\beta_{\depth+1:}^2]} }
  \qquad  \text{\small (Linearity in $\beta^1_\depth$)}
  \\
  & \qquad \qquad {
    + \lt{\depth+1} \cdot \vnorm{ \nxt^1_c(\occ^{c,1}_\depth, \tilde\beta^2_\depth) - \nxt^1_c(\tilde\occ^{c,1}_\depth, \tilde\beta^2_\depth) }_1
  } \Big)
  \Big]
  \nonumber \\
  & = {\beta^1_\depth}^\t \! \cdot \Big[
  r(\occ_\depth, \cdot, \tilde\beta^2_\depth)
  + \gamma 
  \nxt_m^1(\occ_\depth, \cdot, \tilde\beta^2_\depth) \cdot {
    \nu^2_{[\nxt^1_c(\tilde\occ^{c,1}_\depth, \tilde\beta^2_\depth), \tilde\beta_{\depth+1:}^2]} }
  \qquad \text{\small (Alternative writing)}
  \\
  & \qquad \qquad {
    + \gamma \lt{\depth+1} \cdot \norm{ \nxt(\occ_\depth, \cdot, \tilde\beta^2_\depth) - 
      \nxt_m^1(\occ_\depth, \cdot, \tilde\beta^2_\depth) \nxt^1_c(\tilde\occ^{c,1}_\depth, \tilde\beta^2_\depth) }_1
  }
  \Big]
  \nonumber
\end{align}

From this, we can deduce the following appropriate forms of (i) upper bounding approximation for $W^{1,*}_\depth$ and (ii) (symmetrically) of lower bound approximation for
$W^{2,*}_\depth$:
\begin{align*}
  \upb{W}^1_\depth(\occ_\depth, \beta^1_\depth) 
  & = \min_{ \langle \tilde\occ^{c,1}_\depth, \beta^2_\depth, \upb\nu^2_{\depth+1} \rangle \in \upb{bagW}^1_\depth
    }{\beta^1_\depth}^\t \cdot \Big[ {
      \vr(\occ_\depth, \cdot, \beta^2_\depth)
      + \gamma \Nxt^1_m(\occ_\depth, \cdot, \beta^2_\depth) \cdot \upb\nu^2_{\depth+1}
    } \\
    & \qquad \qquad
    + \gamma \lt{\depth+1} \cdot \norm{ \nxt(\occ_\depth, \cdot, \beta^2_\depth) - \Nxt^1_m(\occ_\depth, \cdot, \beta^2_\depth) \nxt^1_c(\tilde\occ^{c,1}_\depth, \beta^2_\depth) }_1
    \Big],
    \text{ and} \\
\lob{W}^2_\depth(\occ_\depth, \beta^2_\depth) 
  & = \max_{ \langle \tilde\occ^{c,2}_\depth, \beta^1_\depth, \lob\nu^1_{\depth+1} \rangle \in \lob{bagW}^2_\depth
    }{\beta^2_\depth}^\t \cdot \Big[ {
      \vr(\occ_\depth, \beta^1_\depth, \cdot)
      + \gamma \Nxt^2_m(\occ_\depth, \beta^1_\depth, \cdot) \cdot \lob\nu^1_{\depth+1}
    } \\
    & \qquad \qquad
    - \gamma \lt{\depth+1} \cdot \norm{ \nxt(\occ_\depth, \beta^1_\depth, \cdot) - \Nxt^2_m(\occ_\depth, \beta^1_\depth, \cdot) \nxt^2_c(\tilde\occ^{c,2}_\depth, \beta^1_\depth) }_1
    \Big],
\end{align*}
where $\upb\nu^2_{\depth+1}$ and $\lob\nu^1_{\depth+1}$ respectively upper and lower bound the actual vectors associated to the players' future strategies (resp. of $2$ and $1$).

Again, $\depth=H-1$ is a particular case where only the reward term is preserved.

\subsection{Related Operators}

\subsubsection{Selection Operator: Solving for $\beta^1_\depth$ as an LP} \label{sec|getLP}

\begin{proposition}
  \labelT{prop|bilinearValue}
  Using now a distribution $\delta^2_\depth$ over tuples 
  $w = \langle \tilde\occ^{c,1}_\depth, \beta^2_\depth, \upb\nu^2_{\depth+1} \rangle \in
  \upb{bagW}^1_\depth$,
  the corresponding upper-bounding value for ``profile''
  $\langle \beta^1_\depth, \delta^2_\depth \rangle$ when in
  $\occ_\depth$ can be written as an expectancy:
  \begin{align*}
    {\beta^1_\depth}^\t \cdot M^{\occ_\depth} \cdot \delta^2_\depth,
  \end{align*}
  where $M^{\occ_\depth}$ is an $|\Theta^1_\depth \times \cA^1| \times |\upb{bagW}^1_\depth|$ matrix.
\end{proposition}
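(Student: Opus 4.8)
The plan is to show that, for each fixed tuple $w = \langle \tilde\occ^{c,1}_\depth, \beta^2_\depth, \upb\nu^2_{\depth+1}\rangle \in \upb{bagW}^1_\depth$, the corresponding term inside the minimum defining $\upb{W}^1_\depth(\occ_\depth,\beta^1_\depth)$ (\Cref{eq|Wupb}) is \emph{linear} in $\beta^1_\depth$, hence expressible as ${\beta^1_\depth}^\t \cdot c_w$ for a vector $c_w$ indexed by $(\theta^1_\depth,a^1) \in \Theta^1_\depth \times \cA^1$ that does not depend on $\beta^1_\depth$. Collecting these vectors as the columns of a matrix $M^{\occ_\depth}$ (so that $M^{\occ_\depth}_{(\cdot,w)} = c_w$) then immediately yields the announced dimensions $|\Theta^1_\depth \times \cA^1| \times |\upb{bagW}^1_\depth|$ together with the identity $\upb{W}^1_\depth(\occ_\depth,\beta^1_\depth) = \min_w {\beta^1_\depth}^\t \cdot M^{\occ_\depth}_{(\cdot,w)}$ already invoked in \Cref{eq|LP}; the statement for the mixed profile will follow by taking the expectation over $\delta^2_\depth$.

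First I would dispatch the reward and value terms, which are routine. The reward $r(\occ_\depth, \beta^1_\depth, \beta^2_\depth)$ is linear in $\beta^1_\depth$ (proof of \Cref{lem|occSufficient}), and since $\nxt^1_m(\occ_\depth,\beta^1_\depth,\beta^2_\depth)$ is linear in $\beta^1_\depth$ (\Cref{lem|T1mlin}) while $\upb\nu^2_{\depth+1}$ is a fixed vector, the term $\gamma\,\nxt^1_m(\occ_\depth,\beta^1_\depth,\beta^2_\depth)\cdot\upb\nu^2_{\depth+1}$ is likewise linear. Concretely, \Cref{eq|occm1} shows that the $(\theta^1_\depth,a^1,z^1)$-component of $\nxt^1_m$ factorizes as $\beta^1_\depth(\theta^1_\depth,a^1)$ times a quantity independent of $\beta^1_\depth$, so contracting $\nxt^1_m$ against any fixed vector produces coefficients indexed only by $(\theta^1_\depth,a^1)$ (summing out $z^1$).

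The main obstacle is the Lipschitz term $\gamma\lt{\depth+1}\norm{\nxt(\occ_\depth,\cdot,\beta^2_\depth) - \nxt^1_m(\occ_\depth,\cdot,\beta^2_\depth)\,\nxt^1_c(\tilde\occ^{c,1}_\depth,\beta^2_\depth)}_1$, in which $\beta^1_\depth$ occurs inside a $1$-norm and is therefore \emph{a priori} only convex, not linear. The key is that $\nxt^1_c$ is independent of $\beta^1_\depth$ (\Cref{lem|T1cindep}): writing $\nxt(\occ_\depth,\beta^1_\depth,\beta^2_\depth) = \nxt^1_m(\occ_\depth,\beta^1_\depth,\beta^2_\depth)\,\nxt^1_c(\occ^{c,1}_\depth,\beta^2_\depth)$ lets the argument of the norm factor as $\nxt^1_m(\occ_\depth,\beta^1_\depth,\beta^2_\depth)\cdot\big[\nxt^1_c(\occ^{c,1}_\depth,\beta^2_\depth) - \nxt^1_c(\tilde\occ^{c,1}_\depth,\beta^2_\depth)\big]$. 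Because the marginal $\nxt^1_m$ is a \emph{nonnegative} vector for every admissible $\beta^1_\depth$, each of its weights can be pulled out of the absolute values, collapsing the $1$-norm into $\nxt^1_m(\occ_\depth,\beta^1_\depth,\beta^2_\depth)$ contracted with the constant vector-wise norm $\vnorm{\nxt^1_c(\occ^{c,1}_\depth,\beta^2_\depth) - \nxt^1_c(\tilde\occ^{c,1}_\depth,\beta^2_\depth)}_1$ (this is exactly the ``alternative writing'' step already carried out in \Cref{eq|WsingleTerm}). Since that latter vector does not depend on $\beta^1_\depth$ and $\nxt^1_m$ is linear, the term is linear in $\beta^1_\depth$, and the same factorize-and-contract argument packages it into coefficients indexed by $(\theta^1_\depth,a^1)$, thereby completing $c_w$. (Rows corresponding to histories $\theta^1_\depth$ improbable under $\occ^{m,1}_\depth$ vanish, as every summand carries a factor $\occ_\depth(\theta^1_\depth,\cdot)$.)

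Finally, interpreting the upper-bounding value of $\langle\beta^1_\depth,\delta^2_\depth\rangle$ as the $\delta^2_\depth$-average of the per-tuple bounds, I would conclude $\sum_{w} \delta^2_\depth(w)\,{\beta^1_\depth}^\t\!\cdot c_w = {\beta^1_\depth}^\t\!\cdot M^{\occ_\depth}\!\cdot\delta^2_\depth$, which is the claimed expectancy.
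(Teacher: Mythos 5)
Your proof is correct and follows essentially the same route as the paper's: the paper's own proof simply starts from the right-hand side of (\ref{eq|WsingleTerm}), whose derivation already contains your key step---factoring $\nxt = \nxt^1_m\,\nxt^1_c$, using $\nxt^1_c$'s independence of $\beta^1_\depth$ (\Cref{lem|T1cindep}) and $\nxt^1_m$'s nonnegativity and linearity (\Cref{lem|T1mlin}) to collapse the $1$-norm into a contraction of $\nxt^1_m$ against a constant vector-wise norm---and then takes the $\delta^2_\depth$-weighted sum exactly as you do. The only cosmetic difference is that you re-derive that ``alternative writing'' step from \Cref{eq|Wupb} rather than citing it.
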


\begin{proof}
  \label{proof|prop|bilinearValue}
  From the right-hand side term in (\ref{eq|WsingleTerm}), the
  upper-bounding value associated to $\occ_\depth$, $\beta^1_\depth$ and
  a tuple
  $\langle \tilde\occ^{c,1}_\depth, \beta^2_\depth, \upb\nu^2_{\depth+1} \rangle \in
  \upb{bagW}^1_\depth$ can be written:
  \begin{align*}
    & {\beta^1_\depth}^\t \! \cdot \Big[
  r(\occ_\depth, \cdot, \beta^2_\depth)
  + \gamma 
  \nxt_m^1(\occ_\depth, \cdot, \beta^2_\depth) \cdot \Big( \upb\nu^2_{\depth+1}
{
    + \lt{\depth+1} \cdot \vnorm{ \nxt^1_c(\occ^{c,1}_\depth, \tilde\beta^2_\depth) - \nxt^1_c(\tilde\occ^{c,1}_\depth, \tilde\beta^2_\depth) }_1
  } \Big)
  \Big].
  \end{align*}
  Using now a distribution $\delta^2_\depth$ over tuples 
  $w = \langle \tilde\occ^{c,1}_\depth, \beta^2_\depth, \upb\nu^2_{\depth+1} \rangle \in
  \upb{bagW}^1_\depth$,
  the corresponding upper-bounding value for ``profile''
  $\langle \beta^1_\depth, \delta^2_\depth \rangle$ when in
  $\occ_\depth$ can be written as an expectancy:
  {\scalefont{.92}
  \begin{align*}
    & \sum_{w \in \upb{W}^1_\depth}
    {\beta^1_\depth}^\t \cdot  \Big[
    \vr(\occ_\depth, \cdot, \beta^2_\depth[w])
    + \gamma \Nxt_m^1(\occ_\depth, \cdot, \beta^2_\depth[w]) \cdot \Big( \upb\nu^2_{\depth+1}[w]
+ \lt{\depth+1} 
    \cdot \vnorm{ \nxt^1_c(\occ^{c,1}_\depth, \beta^2_\depth[w]) - \nxt^1_c(\tilde\occ^{c,1}_\depth[w], \beta^2_\depth[w]) }_1
    \Big)
    \Big] \cdot \delta^2_\depth(w)
    \intertext{(where $x[w]$ denotes the field $x$ of tuple $w$)}
    & = {\beta^1_\depth}^\t \cdot M^{\occ_\depth} \cdot \delta^2_\depth,
  \end{align*}
  }
  where $M^{\occ_\depth}$ is an $|\Theta^1_\depth \times \cA^1| \times |\upb{bagW}^1_\depth|$ matrix.
\end{proof}

For implementation purposes, using \Cshref{eq|reward,eq|occm1} (to develop respectively $r(\cdot,\cdot,\cdot)$ and
$\Nxt^1_m(\cdot,\cdot,\cdot)$), we can derive the expression of a
component, \ie, the upper-bounding value if $a^1$ is applied in
$\theta^1_\depth$ while $w$ is chosen:
\begin{align*}
  M^{\occ_\depth}_{(\langle \theta^1_\depth, a^1\rangle, w)}  & \eqdef 
  \vr(\occ_\depth, \cdot, \beta^2_\depth[w])
  + \gamma \Nxt_m^1(\occ_\depth, \cdot, \beta^2_\depth[w]) \cdot \Big( \upb\nu^2_{\depth+1}[w] +  \lt{\depth+1} \cdot
  \vnorm{  \nxt^1_c(\occ^{c,1}_\depth, \beta^2_\depth[w]) - \nxt^1_c(\tilde\occ^{c,1}_\depth[w], \beta^2_\depth[w]) }_1
  \Big) \\
  & =
  { 
    \sum_{s,\theta^2_\depth, a^2} \occ_\depth(\vth_\depth) b(s | \vth_\depth)
    \beta^2_\depth[w](a^2|\theta^2) r(s,\va)
  } \\
  & \qquad + \gamma \sum_{z^1} {
    \left[
      \sum_{\theta^2_\depth,a^2} \beta^2_\depth[w](a^2 | \theta^2_\depth)  \sum_{s,s',z^2} P^{\vz}_{\va}(s'|s) 
      b(s | \vth_\depth) \occ_\depth (\vth_\depth)
    \right]
  } \cdot \Big( \upb\nu^2_{\depth+1}[w](\theta^1_\depth, a^1, z^1) \\ & \qquad + 
  \lt{\depth+1} \cdot
  \vnorm{  \nxt^1_c(\occ^{c,1}_\depth, \beta^2_\depth[w]) - \nxt^1_c(\tilde\occ^{c,1}_\depth[w], \beta^2_\depth[w]) }_1(\theta^1_\depth, a^1, z^1)
  \Big) \\
  & = \sum_{\theta^2_\depth} \occ_\depth(\vth_\depth)
  \sum_{a^2} \beta^2_\depth[w](a^2|\theta^2_\depth) \\
  & \qquad \cdot \Bigg(
  { 
    \sum_{s}  b(s | \vth_\depth) r(s,\va)
  }
  + \gamma \sum_{z^1} {
    \left[
      \sum_{s,s',z^2} P^{\vz}_{\va}(s'|s) 
      b(s | \vth_\depth)
    \right]
  }
  \cdot \Big( \upb\nu^2_{\depth+1}[w](\theta^1_\depth, a^1, z^1) \\ & \qquad + \lt{\depth+1} \cdot
  \vnorm{  \nxt^1_c(\occ^{c,1}_\depth, \beta^2_\depth[w]) - \nxt^1_c(\tilde\occ^{c,1}_\depth[w], \beta^2_\depth[w]) }_1(\theta^1_\depth, a^1, z^1)
  \Big) \Bigg).
\end{align*}

Then, solving
$\max_{\beta^1_\depth} \upb{W}^1_\depth(\occ_\depth, \beta^1_\depth)$
can be rewritten as solving a zero-sum game where pure strategies are:
\begin{itemize}
\item for Player $1$, the choice of not $1$, but $|\Theta^1_\depth|$ actions (among $|\cA^1|$) and,
\item for Player $2$, the choice of $1$ element of $\upb{bagW}^1_\depth$.
\end{itemize}
One can view it as a Bayesian game with one type per history
$\theta^1_\depth$ for $1$, and a single type for $2$.

With our upper bound approximation, $\max_{\beta^1_\depth} \upb{W}^1_\depth(\occ_\depth,\beta^1_\depth)$
can thus be solved as the \ifextended{linear program}{LP}:
\begin{align*}
& \begin{array}{l@{\ }l@{\ }ll}
      \displaystyle
      \max_{\beta_\depth^1,v}
      v
      \quad \text{s.t. } & \text{(i)}
      & \forall w \in \upb{bagW}^1_\depth, & v \leq {\beta_\depth^1}^\t \! \cdot M^{\occ_\depth}_{(\cdot,w)}
      \\
      & \text{(ii)}
      & \forall \theta_\depth^1 \in \Theta_\depth^1,
      & {\displaystyle \sum_{a^1}} \beta_\depth^1(a^1|\theta_\depth^1)
        = 1,
    \end{array}
    \intertext{whose dual LP is given by}
& \begin{array}{l@{\ }l@{\ }ll}
        \displaystyle
        \min_{\delta^2_\depth,v}
        v
        \quad \text{s.t. }
        & \text{(i)}
        & \forall (\theta^1_\depth, a^1 ) \in \Theta^1_\depth\times \cA^1, & v \geq  M^{\occ_\depth}_{((\theta^1_\depth,a^1),\cdot)} \cdot \delta^2_\depth
        \\
        & \text{(ii)}
        &
        & {\displaystyle \sum_{w \in \upb{bagW}^1_\depth}} \! \delta^2_\depth( w )
          = 1.
      \end{array}
    \end{align*}
As can be noted, $M^{\occ_\depth}$'s columns corresponding to
$0$-probability histories $\theta^1_\depth$ in $\occ^{m,1}_\depth$ are
empty, so that the corresponding decision rules (for these histories)
are not relevant and can be set arbitrarily.
The actual implementation thus ignores these histories, whose
corresponding decision rules also do not need to be stored.

\subsubsection{Strategy Induced by $\delta^2_\depth$}

$\delta^2_{\depth}$, as a distribution over tuples in
$\upb{bagW}^1_{\depth}$, induces a recursively-defined strategy for
$2$ as
(left) a mixture of behavioral \dr{}s at $\depth$,
and (right) a mixture of other mixture strategies for $\depth+1$ on:
\begin{align*}
  \beta^2_\depth[\delta^2_\depth] \eqdef \sum_{\tilde\beta^2_\depth \in \upb{bagW}^1_{\depth}} \delta^2_\depth(\tilde\beta^2_\depth) \cdot \tilde\beta^2_\depth,
  & \qquad \text{and} \qquad
  \delta^2_{\depth+1}[\delta^2_\depth] \eqdef \sum_{\tilde\delta^2_{\depth+1}  \in \upb{bagW}^1_\depth} \delta^2_\depth(\tilde\delta^2_{\depth+1}) \cdot \tilde\delta^2_{\depth+1},
\end{align*}
until reaching the horizon.
$\delta^2_\depth$ needs to be stored as this strategy will play a key
role in the following.

For $\depth\geq 1$, both $\upb{V}_\depth$ and $\upb{W}^1_{\depth-1}$ rely
essentially on the same information and are strongly related, so that
we will discuss them together.
$\upb{bagV}_\depth$ contains tuples
$\langle { \occ^{c,1}_\depth, \langle \delta^2_{\depth}, \upb\nu^2_\depth
  \rangle
} \rangle$,
and $\upb{bagW}^1_{\depth-1}$ (for $\depth\geq 1$) related tuples
$\langle { \occ^{c,1}_{\depth-1}, \beta^2_{\depth-1}, \langle \delta^2_{\depth}, \upb\nu^2_\depth
  \rangle } \rangle$.

\Cref{fig|dag} represents (in rectangular nodes) the elements of
$\upb{bagW}^1$ reachable from a given element of $\upb{bagV}_0$ (the
ellipsoid root node).
The children of any internal node at level/depth $\depth$ (including
the root) are the nodes corresponding to the elements $w_{\depth+1}$
in the support of $\delta^2_\depth$ (\ie, the set $\supp(\delta^2_\depth)$ of
  elements with non-zero probability in distribution $\delta^2_\depth$).
Level $\depth=H-1$ corresponds to the leaves of this graph.

As can be observed, it is directed and acyclic.
On can thus extract a behavioral strategy from some $\delta^2_\depth$
through a recursive process or, better, dynamic programming (to avoid
repeating the same computations when the same internal node is reached
through various branches).

\begin{figure}
  \centerline{
    \resizebox{1.\linewidth}{!}{
      \def\sScript{.7} \def\sTiny{0.1} 

\newcommand{\nodeW}[2]{
  $\begin{array}{@{}c@{}c@{}l@{}}
     \langle
     & \occ^{c,1}_{\depth-1}, \beta^2_{\depth-1}
     &,  \\ & \langle \delta^2_\depth, \upb\nu^2_\depth \rangle
     & \rangle_{#2}^{#1}
   \end{array}
   $
}
\newcommand{\nodeWLast}[2]{
  $\begin{array}{@{}c@{}c@{}l@{}}
     \langle
     & \occ^{c,1}_{\depth-1}, \beta^2_{\depth-1}
     &,  \\ & -
     & \rangle_{#2}^{#1}
   \end{array}
   $
}

\colorlet{lightgray}{gray!35}

\begin{tikzpicture}

  \def\radA{1cm} \def\radB{4cm} \def\radC{7.5cm}
  \def\myshape{rectangle} 

\node [ellipse, draw, scale=1] (n0) at (0,0) {$\langle \occ^{c,1}_\depth, \delta^2_\depth, \upb\nu^2_\depth \rangle_0$};
  \node [scale=\sScript] at (\radC,0) {$\depth=0$};
  
\node [below = 1cm of n0] (n1dots) { $\cdots$ };
  \node [\myshape, draw, scale=\sScript, left = \radB of n1dots] (n11) { \nodeW{1}{1} };
  \node [\myshape, draw, scale=\sScript, left = \radA of n1dots] (n12) { \nodeW{2}{1} };
  \node [\myshape, draw, scale=\sScript, right = \radB of n1dots] (n1n) { \nodeW{n_1}{1} };
  \node [\myshape, draw, scale=\sScript, right = \radA of n1dots] (n1n1) { \nodeW{n_1-1}{1} };
\node [below = 1cm of n0, xshift=\radC, scale=\sScript] {$\depth=1$};
  
  \draw[->] (n0) -- (n11);
  \draw[->] (n0) -- (n12);
  \draw[->] (n0) -- (n1n);
  \draw[->] (n0) -- (n1n1);
  
\node [below = 2cm of n1dots] (n2dots) { $\cdots$ };
\node [\myshape, draw, scale=\sScript, left = \radB of n2dots] (n21) { \nodeW{1}{2} };
  \node [\myshape, draw, scale=\sScript, left = \radA of n2dots] (n22) { \nodeW{2}{2} };
  \node [\myshape, draw, scale=\sScript, right = \radB of n2dots] (n2n) { \nodeW{n_2}{2} };
  \node [\myshape, draw, scale=\sScript, right = \radA of n2dots] (n2n1) { \nodeW{n_2-1}{2} };
\node [below = 2cm of n1dots, xshift=\radC, scale=\sScript] {$\depth=2$};
  
  \draw[->, lightgray] (n12) -- (n21);
\draw[->, lightgray] (n12) -- (n2n1);

  \draw[->, lightgray] (n1n1) -- (n21);
\draw[->, lightgray] (n1n1) -- (n2n);

\draw[->, lightgray] (n1n) -- (n22);
  \draw[->, lightgray] (n1n) -- (n2n);
  \draw[->, lightgray] (n1n) -- (n2n1);
  
  \draw[->] (n11) -- (n21);
  \draw[->] (n11) -- (n22);
  \draw[->] (n11) -- (n2n1);

  \node [below of = n21] (n21vdots) { $\vdots$ };
  \node [below of = n2dots] (n2dotsvdots) { $\vdots$ };
  \node [below of = n2n] (n2nvdots) { $\vdots$ };
  \node [below of = n2dots, xshift=\radC, scale=\sScript] {$\vdots$};
  
\node [below = .7cm of n2dotsvdots] (nH2dots) { $\cdots$ };
  \node [\myshape, draw, scale=\sScript, left = \radB of nH2dots] (nH21) { \nodeW{1}{H-2} };
  \node [\myshape, draw, scale=\sScript, left = \radA of nH2dots] (nH22) { \nodeW{2}{H-2} };
  \node [\myshape, draw, scale=\sScript, right = \radB of nH2dots] (nH2n) { \nodeW{n_{H-2}}{H-2} };
  \node [\myshape, draw, scale=\sScript, right = \radA of nH2dots] (nH2n1) { \nodeW{n_{H-2}-1}{H-1} };
\node [below = .7cm of n2dotsvdots, xshift=\radC, scale=\sScript] {$\depth=H-2$};
  
\node [below = 2cm of nH2dots] (nH1dots) { $\cdots$ };
  \node [\myshape, draw, scale=\sScript, left = \radB of nH1dots] (nH11) { \nodeWLast{1}{H-1} };
  \node [\myshape, draw, scale=\sScript, left = \radA of nH1dots] (nH12) { \nodeWLast{2}{H-1} };
  \node [\myshape, draw, scale=\sScript, right = \radB of nH1dots] (nH1n) { \nodeWLast{n_{H-1}}{H-1} };
  \node [\myshape, draw, scale=\sScript, right = \radA of nH1dots] (nH1n1) { \nodeWLast{n_{H-1}-1}{H-1} };
\node [below = 2cm of nH2dots, xshift=\radC, scale=\sScript] {$\depth=H-1$};
  
\draw[->, lightgray] (nH22) -- (nH12);
  \draw[->, lightgray] (nH22) -- (nH1n1);

  \draw[->, lightgray] (nH2n1) -- (nH11);
\draw[->, lightgray] (nH2n1) -- (nH1n);

\draw[->, lightgray] (nH2n) -- (nH12);
\draw[->, lightgray] (nH2n) -- (nH1n1);
  
  \draw[->] (nH21) -- (nH11);
\draw[->] (nH21) -- (nH1n1);

\end{tikzpicture}
     }
  }
  \caption{DAG structure of the recursively defined strategy induced
    by $\delta^2_0$. Each $\delta^2_\depth$ (part of a tuple $w_\depth$) is a
    probability distribution over elements $w_{\depth+1}$, thus
    inducing 2 other probability distributions: (i) one over decision rules $\beta^2_\depth$, and (ii) the other over probability distributions
    $\delta^2_{\depth+1}$.}
  \label{fig|dag}
\end{figure}

\subsubsection{Upper Bounding $\nu^2_{[\occ^{c,1}_\depth,\delta^2_\depth]}$}
\label{sec|compute|nu}

Adding a new complete tuple to $\upb{bagW}^1_\depth$ requires a new
vector $\upb\nu^2_{\depth}$ that upper bounds the vector
$\nu^2_{[\occ^{c,1}_\depth,\delta^2_\depth]}$ associated to the
strategy induced by $\delta^2_\depth$.
We can obtain one in a recursive manner (not solving the induced
POMDP).

\begin{restatable}{proposition}{propRecNu}
  \labelT{prop|rec|nu}
For each $\delta^2_\depth$ obtained as the solution of the
  aforementioned (dual) LP in $\occ_\depth$, and each
  $\theta^1_\depth$,
  $\nu^2_{[\occ^{c,1}_{\depth},
    \delta^2_{\depth}]}(\theta^1_\depth)$ is upper bounded by a value
  $\upb\nu^2_{\depth}(\theta^1_\depth)$ that depends on vectors
  $\upb\nu^2_{\depth+1}$ in the support of $\delta^2_\depth$.
In particular, if $\theta^1_\depth \in \supp(\occ^{m,1}_\depth)$, we
  have:
  \begin{align*}
    \upb\nu^2_{\depth}(\theta^1_\depth)
    & \eqdef \frac{1}{\occ^{1}_{\depth,m}(\theta^1_\depth)} \max_{a^1 \in \cA^1} M^{\occ_\depth}_{((\theta^1_\depth, a^1), . )} \cdot \delta^2_\depth.
  \end{align*}
\end{restatable}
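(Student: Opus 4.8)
The plan is to read $\nu^2_{[\occ^{c,1}_{\depth}, \delta^2_{\depth}]}(\theta^1_\depth)$ as the value, from information state $\theta^1_\depth$, of Player~$1$'s best response against the \emph{frozen} recursive strategy $\delta^2_\depth$ in the POMDP that $\delta^2_\depth$ and $\occ^{c,1}_\depth$ induce for Player~$1$ (as set up in \Cref{lem|nuLC}), and then to prove the bound by backward induction on $\depth$. Since Player~$2$'s strategy is fixed, this value obeys the ordinary Bellman optimality recursion: at $\theta^1_\depth \in \supp(\occ^{m,1}_\depth)$, Player~$1$ selects the action $a^1$ maximizing the sum of the immediate expected reward and the discounted continuation value $\nu^2_{[\occ^{c,1}_{\depth+1}, \delta^2_{\depth+1}]}$ at the successor histories $(\theta^1_\depth, a^1, z^1)$, where $\delta^2_{\depth+1}$ is the continuation strategy induced by $\delta^2_\depth$ and $\occ^{c,1}_{\depth+1}$ is the conditional reached through $\nxt$.

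First I would dispatch the base case $\depth = H-1$: only the reward term survives in $M^{\occ_\depth}$, so $\max_{a^1} M^{\occ_\depth}_{((\theta^1_\depth,a^1),\cdot)}\cdot\delta^2_\depth$ equals $\occ^{m,1}_\depth(\theta^1_\depth)$ times the best immediate reward Player~$1$ can secure at $\theta^1_\depth$ against the mixture $\beta^2_{H-1}[\delta^2_{H-1}]$, i.e. $\occ^{m,1}_\depth(\theta^1_\depth)\,\nu^2_{[\occ^{c,1}_{H-1},\delta^2_{H-1}]}(\theta^1_\depth)$; dividing by the marginal gives the formula with equality.

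For the inductive step I would assume $\upb\nu^2_{\depth+1}$ upper bounds $\nu^2_{[\occ^{c,1}_{\depth+1},\delta^2_{\depth+1}]}$ componentwise and then bound each column of $M^{\occ_\depth}$. Fixing a tuple $w = \langle \tilde\occ^{c,1}_\depth, \beta^2_\depth, \upb\nu^2_{\depth+1}\rangle$ in $\supp(\delta^2_\depth)$, the single-term derivation culminating in \Cref{eq|WsingleTerm} exhibits $M^{\occ_\depth}_{((\theta^1_\depth,a^1),w)}$ as $\occ^{m,1}_\depth(\theta^1_\depth)$ times an upper bound on the value of playing $a^1$ at $\theta^1_\depth$ against $\beta^2_\depth[w]$ and then continuing with the strategy whose value is the stored $\upb\nu^2_{\depth+1}$: the inductive hypothesis controls that stored vector, while the $\lt{\depth+1}$-Lipschitz term (\Cref{lem|nuLC}) absorbs the discrepancy between the stored conditional $\tilde\occ^{c,1}_\depth[w]$ and the actual $\occ^{c,1}_\depth$. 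Averaging over $\delta^2_\depth$ and using linearity of the value in Player~$2$'s mixing, $\sum_w \delta^2_\depth(w)\,M^{\occ_\depth}_{((\theta^1_\depth,a^1),w)}$ upper bounds $\occ^{m,1}_\depth(\theta^1_\depth)$ times the value of the $\langle a^1,\delta^2_\depth\rangle$ branch. Because, for a frozen opponent, Player~$1$'s best response decomposes across information states with no coupling between distinct $\theta^1_\depth$, I can push the maximization inside history by history; dividing by $\occ^{m,1}_\depth(\theta^1_\depth)$ then yields exactly the claimed $\upb\nu^2_{\depth}(\theta^1_\depth)$, so it upper bounds $\nu^2_{[\occ^{c,1}_{\depth},\delta^2_{\depth}]}(\theta^1_\depth)$, the aggregate $\occ^{m,1}_\depth \cdot \upb\nu^2_\depth \geq \occ^{m,1}_\depth \cdot \nu^2_{[\occ^{c,1}_\depth,\delta^2_\depth]}$ being consistent with \Cref{theo|ConvexConcaveV}.

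The hard part will be the bookkeeping that links $M^{\occ_\depth}$ — which was \emph{built from the min-based upper bound} $\upb{W}^1_\depth$ — to the \emph{max}-based per-history best-response value against the now-fixed mixture $\delta^2_\depth$: one must justify both that the per-history contributions decompose additively (so the maximization can be carried out independently at each $\theta^1_\depth$) and that the Lipschitz slack is oriented so as to preserve the upper bound despite the conditional mismatch. The marginal/conditional factorization via the division by $\occ^{m,1}_\depth(\theta^1_\depth)$, together with the separate treatment of $\theta^1_\depth \notin \supp(\occ^{m,1}_\depth)$ (for which only \emph{some} upper bound is required, not the explicit formula), is the remaining delicate point.
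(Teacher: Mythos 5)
Your proposal is correct and follows essentially the same route as the paper's own proof: both read $\nu^2_{[\occ^{c,1}_{\depth},\delta^2_{\depth}]}(\theta^1_\depth)$ as the optimal value of the POMDP induced by freezing $\delta^2_\depth$, apply one step of Bellman optimality (a max over $a^1$ only, at the fixed $\theta^1_\depth$), decompose the expectation over the tuples $w\in\supp(\delta^2_\depth)$, invoke \Cref{lem|nuLC} so the Lipschitz slack absorbs the mismatch between the actual and stored next-step conditionals, use the stored vectors $\upb\nu^2_{\depth+1}[w]$ as upper bounds on the continuation values, and identify the resulting expression with the columns $M^{\occ_\depth}_{((\theta^1_\depth,a^1),\cdot)}$. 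The only difference is presentational: you package the argument as an explicit backward induction with base case $\depth=H-1$, whereas the paper treats the validity of the stored $\upb\nu^2_{\depth+1}$ vectors as a standing, recursively maintained invariant and derives the bound in one chain of (in)equalities.
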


\begin{proof}
  \label{proof|prop|rec|nu}

For a newly derived $\delta^2_\depth$, as
  $\nu^2_{[\occ^{c,1}_\depth, \delta^2_\depth]}(\theta^1_\depth)$
  is the value of $1$'s best action ($\in \cA^1$) if $1$ (i) observes $\theta^1_\depth$ while in $\occ^{c,1}_\depth$ and (ii) $2$ plays $\delta^2_\depth$, we have:
\begin{align}
    \hspace{1cm}
    & \hspace{-1cm} \nu^2_{[\occ^{c,1}_\depth, \delta^2_\depth]}(\theta^1_\depth) \eqdef V^\star_{[\occ^{c,1}_\depth, \delta^2_\depth]}(\theta^1_\depth) \qquad \text{(optimal POMDP value function)}
    \nonumber \\
& = \max_{\beta^1_{\depth:}} \E\left[
      \sum_{t=\depth}^H \gamma^{t-\depth} R_t
      \mid \beta^1_{\depth:}, \theta^1_\depth, \occ^{c,1}_\depth, \delta^2_\depth
    \right]
    \nonumber \\
    & = \max_{a^1} \E\left[
      R_\depth
      +
      \gamma \max_{\beta^1_{\depth+1:}}
      \E\left[
        \sum_{t=\depth+1}^H \gamma^{t-(\depth+1)} R_t
        \mid \beta^1_{\depth+1:}, \langle \theta^1_\depth, a^1, Z^1 \rangle, \occ^{c,1}_{\depth+1}, \delta^2_{\depth+1}
      \right] 
      \mid a^1, \theta^1_\depth, \occ^{c,1}_\depth, \delta^2_\depth
    \right] \\
    & = \max_{a^1} \E\left[
      R_\depth
      + \gamma V^\star_{[\occ^{c,1}_{\depth+1}, \delta^2_{\depth+1}]}( \theta^1_\depth, a^1, Z^1 )
      \mid a^1, \theta^1_\depth, \occ^{c,1}_\depth, \delta^2_\depth
    \right] \\
    & = \max_{a^1} \sum_{w, \theta^2_\depth, a^2, z^1}
    \underbrace{Pr(w, \theta^2_\depth, z^1, a^2 \mid a^1, \theta^1_\depth, \occ^{c,1}_\depth, \delta^2_\depth)} \\
    & \qquad \cdot
    \left(
      r(\vth_\depth,\va_\depth)
      + \gamma \nu^2_{[\occ^{c,1}_{\depth+1}, \delta^2_{\depth+1}[w]]}( \theta^1_\depth, a^1, z^1 )
    \right)
    \qquad \text{\small (where $\occ^{c,1}_{\depth+1}=T^1_c(\occ^{c,1}_\depth, \beta^2_\depth[w])$
      {\scriptsize (lem.~\refpage{lem|T1cindep})})}\\
    & = \max_{a_1}  
    \sum_{w, \theta^2_\depth, a^2, z^1}
    \underbrace{Pr(w | \delta^2_\depth)}
    \cdot \underbrace{Pr(\theta^2_\depth | \theta^1_\depth, \occ^{c,1}_\depth)}
    \cdot \underbrace{Pr( a^2 | \beta^2_\depth[w], \theta^2_\depth)}
    \cdot \underbrace{Pr( z^1 | \vth_\depth, \va_\depth)} \\
    & \qquad \cdot
    \left(
      r(\vth_\depth, \va) 
      + \gamma \nu^2_{[\occ^{c,1}_{\depth+1}, \delta^2_{\depth+1}[w]]}(\theta^1_\depth, a^1, z^1)
    \right) \\
& = \max_{a_1}  \sum_w \delta^2_\depth(w)
    \sum_{\theta^2_\depth} \occ^{c,1}_\depth(\theta^2_\depth|\theta^1_\depth)
    \sum_{a^2} \beta^2_\depth[w](a^2|\theta^2_\depth) \\
    & \qquad \cdot
    \left(
      r(\vth_\depth, \va) 
      + \gamma \sum_{z^1} Pr(z^1|\vth_\depth,\va) \underbrace{\nu^2_{[\occ^{c,1}_{\depth+1}, \delta^2_{\depth+1}[w]]}(\theta^1_\depth, a^1, z^1)}
    \right) \\
    \intertext{then, as $\nu^2_{[\occ^{c,1}_{\depth+1}, \delta^2_{\depth+1}[w]]}$ is $\l_{\depth+1}$-LC in (any) $\occ^{c,1}_{\depth+1}$ (\Cref{lem|nuLC}),}
    & \leq \max_{a_1} \sum_w \delta^2_\depth(w)
    \sum_{\theta^2_\depth} \occ^{c,1}_\depth(\theta^2_\depth|\theta^1_\depth)
    \sum_{a^2} \beta^2_\depth[w](a^2|\theta^2_\depth) \cdot
    \Bigg(
    r(\vth_\depth, \va)     
    \\
    & + \gamma \sum_{z^1} Pr(z^1|\vth_\depth,\va)
      \left[
        \overbrace{
          \underbrace{\nu^2_{[\tilde\occ^{c,1}_{\depth+1}[w], \delta^2_{\depth+1}[w]]}(\theta^1_\depth, a^1, z^1)}
          + \l_{\depth+1} \vnorm{\occ^{c,1}_{\depth+1} - \tilde\occ^{c,1}_{\depth+1}[w]}_1(\theta^1_\depth, a^1, z^1)
        }
      \right]
    \Bigg) \\
    & \leq \max_{a_1} \sum_w \delta^2_\depth(w)
    \sum_{\theta^2_\depth} \occ^{c,1}_\depth(\theta^2_\depth|\theta^1_\depth)
    \sum_{a^2} \beta^2_\depth[w](a^2|\theta^2_\depth) \cdot
    \Bigg(
    \underbrace{ r(\vth_\depth, \va) }
    \\
    & \quad
    + \gamma \sum_{z^1} \underbrace{Pr(z^1|\vth_\depth,\va)}
    \left[
      \overbrace{ \upb\nu^2_{\depth+1}[w](\theta^1_\depth, a^1, z^1) }
      +  \l_{\depth+1} \vnorm{\occ^{c,1}_{\depth+1} - \tilde\occ^{c,1}_{\depth+1}[w]}_1(\theta^1_\depth, a^1, z^1)
    \right]
    \Bigg) 
    \\
    & = \max_{a_1} \sum_w \delta^2_\depth(w)
    \sum_{\theta^2_\depth} \occ^{c,1}_\depth(\theta^2_\depth|\theta^1_\depth)
    \sum_{a^2} \beta^2_\depth[w](a^2|\theta^2_\depth) 
    \cdot
    \Bigg(
      \overbrace{\sum_s b(s|\vth_\depth) r(s, \va)}
    \\
    & \quad
      + \gamma \sum_{z^1} \left( \overbrace{ \sum_s b(s|\vth_\depth) \underbrace{Pr(z^1|s,\va)} } \right)
      \cdot \Big[ \upb\nu^2_{\depth+1}[w](\theta^1_\depth, a^1, z^1)
     \\
    & \quad
    + \l_{\depth+1} \underbrace{ \vnorm{\occ^{c,1}_{\depth+1} - \tilde\occ^{c,1}_{\depth+1}[w]}_1(\theta^1_\depth, a^1, z^1) }
    \Big]
    \Bigg)
    \\
    & = \max_{a_1} \sum_w \delta^2_\depth(w)
    \sum_{\theta^2_\depth} \occ^{c,1}_\depth(\theta^2_\depth|\theta^1_\depth)
    \sum_{a^2} \beta^2_\depth[w](a^2|\theta^2_\depth)
    \label{eq|nuNotInSupport} \\
    & \quad
    \cdot
    \Bigg(
    \sum_s b(s|\vth_\depth) r(s, \va)     
    + \gamma \sum_{z^1} \left( \overbrace{ \sum_{s, s', z^2} b(s|\vth_\depth) P^{\vz}_{\va}(s'|s) } \right)
    \cdot \Big[ \upb\nu^2_{\depth+1}[w](\theta^1_\depth, a^1, z^1)
    \nonumber \\
    & \quad 
    + \l_{\depth+1}
    \overbrace{ \vnorm{ \nxt^1_c(\occ^{c,1}_\depth, \beta^2_\depth[w]) - \nxt^1_c(\tilde\occ^{c,1}_\depth[w], \beta^2_\depth[w]) }_1(\theta^1_\depth, a^1, z^1) }
    \Big] \Bigg)
    \nonumber \\
    & = \frac{1}{\occ^{1}_{\depth,m}(\theta^1_\depth)} \max_{a^1 \in \cA^1} M^{\occ_\depth}_{((\theta^1_\depth, a^1), . )} \cdot \delta^2_\depth.
    \nonumber
\end{align}

\end{proof}

\subsubsection{Pruning $\upb{V}_\depth$}
\label{app|pruningV}

The following key theorem allows reusing usual POMDP $\max$-planes
pruning techniques in our setting (reverting them to handle
$\min$-planes upper bound approximations).

\begin{restatable}[Proof in \extCshref{app|pruningV}]{theorem}{lemPruningV}
  \labelT{lem|pruningV}
  \IfAppendix{{\em (originally stated on
      page~\pageref{lem|pruningV})}}{}
Let $P$ be a $\min$-planes pruning operator (inverse of $\max$-planes pruning for POMDPs), and
$\langle \occ^{c,1}_\depth, \upb\nu^2_\depth \rangle \in \upb{bagV}_\depth$.
If $P$ correctly identifies $\upb\nu^2_\depth$ as non-dominated (or
  resp. dominated) under fixed $\occ^{c,1}_\depth$, then
  $\langle \occ^{c,1}_\depth, \upb\nu^2_\depth \rangle$ is non-dominated
  (or resp. dominated) in $\Occ_\depth$.
\end{restatable}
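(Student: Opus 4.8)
The plan is to reduce the ($\min$-surfaces) dominance question on all of $\Occ_\depth$ to an ordinary ($\min$-planes) one on the slice of occupancy states sharing the tested element's conditional. Write the tested element as $w=\langle \occ^{c,1}_\depth,\upb\nu^2_\depth\rangle$ and abbreviate the surface contributed by any bag element $w'=\langle\tilde\occ^{c,1}_\depth,\upb\mu^2_\depth\rangle\in\upb{bagV}_\depth$ by $g_{w'}(\occ_\depth)\eqdef \occ^{m,1}_\depth\cdot\upb\mu^2_\depth+\lt\depth\norm{\occ_\depth-\occ^{m,1}_\depth\tilde\occ^{c,1}_\depth}_1$, so that $\upb{V}_\depth=\min_{w'}g_{w'}$. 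The key observation is that on the slice $\occ_\depth=\occ^{m,1}_\depth\occ^{c,1}_\depth$ (occupancy states whose conditional is $w$'s own conditional $\occ^{c,1}_\depth$), nonnegativity of $\occ^{m,1}_\depth$ lets the Lipschitz norm factorize as $\norm{\occ^{m,1}_\depth\occ^{c,1}_\depth-\occ^{m,1}_\depth\tilde\occ^{c,1}_\depth}_1=\sum_{\theta^1_\depth}\occ^{m,1}_\depth(\theta^1_\depth)\,\norm{\occ^{c,1}_\depth(\cdot|\theta^1_\depth)-\tilde\occ^{c,1}_\depth(\cdot|\theta^1_\depth)}_1$. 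Hence $g_{w'}$ restricted to this slice is \emph{linear} in $\occ^{m,1}_\depth$, equal to $\occ^{m,1}_\depth\cdot\hat\mu^2_\depth$ with induced plane $\hat\mu^2_\depth(\theta^1_\depth)\eqdef \upb\mu^2_\depth(\theta^1_\depth)+\lt\depth\norm{\occ^{c,1}_\depth(\cdot|\theta^1_\depth)-\tilde\occ^{c,1}_\depth(\cdot|\theta^1_\depth)}_1$; for $w$ itself the correction vanishes and its plane is $\upb\nu^2_\depth$. The operator $P$ ``under fixed $\occ^{c,1}_\depth$'' is then exactly $\min$-planes pruning applied to the family $\{\upb\nu^2_\depth\}\cup\{\hat\mu^2_\depth\}_{w'\neq w}$.

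For the non-dominated case, correctness of $P$ provides a witness marginal $\occ^{m,1}_\depth$ with $\occ^{m,1}_\depth\cdot\upb\nu^2_\depth<\occ^{m,1}_\depth\cdot\hat\mu^2_\depth$ for every $w'\neq w$. I would evaluate all surfaces at the full occupancy state $\occ_\depth=\occ^{m,1}_\depth\occ^{c,1}_\depth$ and invoke the linearization: $g_w(\occ_\depth)=\occ^{m,1}_\depth\cdot\upb\nu^2_\depth<\occ^{m,1}_\depth\cdot\hat\mu^2_\depth=g_{w'}(\occ_\depth)$ for all $w'\neq w$. Thus $w$ is the strict unique minimizer at $\occ_\depth$ and is non-dominated in $\Occ_\depth$.

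For the dominated case, correctness of $P$ gives $\occ^{m,1}_\depth\cdot\upb\nu^2_\depth\ge\min_{w'\neq w}\occ^{m,1}_\depth\cdot\hat\mu^2_\depth$ for \emph{every} marginal. I would take an arbitrary $\occ_\depth\in\Occ_\depth$ with marginal $m$ and conditional $c$, and let $w'$ (with conditional $\tilde\occ^{c,1}_\depth$, vector $\upb\mu^2_\depth$, induced plane $\hat\mu^2_\depth$) attain $\min_{w''\neq w}m\cdot\hat\mu^2_\depth$. The per-history triangle inequality $\norm{c(\cdot|\theta^1_\depth)-\tilde\occ^{c,1}_\depth(\cdot|\theta^1_\depth)}_1\le\norm{c(\cdot|\theta^1_\depth)-\occ^{c,1}_\depth(\cdot|\theta^1_\depth)}_1+\norm{\occ^{c,1}_\depth(\cdot|\theta^1_\depth)-\tilde\occ^{c,1}_\depth(\cdot|\theta^1_\depth)}_1$ yields $g_{w'}(\occ_\depth)\le m\cdot\hat\mu^2_\depth+\lt\depth\sum_{\theta^1_\depth}m(\theta^1_\depth)\norm{c(\cdot|\theta^1_\depth)-\occ^{c,1}_\depth(\cdot|\theta^1_\depth)}_1\le m\cdot\upb\nu^2_\depth+\lt\depth\sum_{\theta^1_\depth}m(\theta^1_\depth)\norm{c(\cdot|\theta^1_\depth)-\occ^{c,1}_\depth(\cdot|\theta^1_\depth)}_1=g_w(\occ_\depth)$, the middle step being the slice-domination bound. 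Hence $\min_{w'\neq w}g_{w'}(\occ_\depth)\le g_w(\occ_\depth)$ for every $\occ_\depth$, so $w$ is dominated in $\Occ_\depth$. The main obstacle is precisely this last chain: the Lipschitz corrections line up only because the slice is taken at $w$'s own conditional, so that $g_w$ carries exactly the ``missing'' distance $\norm{c-\occ^{c,1}_\depth}$ that the triangle inequality introduces for the competitor; the symmetric claim for $\lob{V}_\depth$ (and $\max$-surfaces) then follows by reversing all inequalities.
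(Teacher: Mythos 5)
Your proof is correct and takes essentially the same route as the paper's: both reduce the test to $\min$-planes on the slice of occupancy states sharing the tested element's conditional (where each competitor's surface becomes its Lipschitz-corrected plane), treat the non-dominated case by noting that a witness marginal in that slice is a witness in $\Occ_\depth$, and handle the dominated case by picking, for an arbitrary occupancy state, the competitor that dominates at the same marginal on the slice and extending its domination via the per-history, marginal-weighted triangle inequality. The only difference is cosmetic: you invoke the triangle inequality directly, whereas the paper reaches the same bound through a reverse-triangle inequality followed by a two-case analysis.
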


\begin{proof}
  \label{proof|lem|pruningV}
  We will demonstrate that:
  \begin{itemize}
  \item if $P$ shows that a vector $\nu^2_\depth$ (associated to
    $\occ_\depth$) is dominated {\em under fixed $\occ_\depth^{c,1}$}
    by a $\min$-planes upper bound relying only on other vectors
    $\tilde \nu^2_\depth$, then this vector is dominated in the whole
    space $\Occ$;
\item else, the vector $\nu^2_\depth$ is useful at least around
    $\xi_\depth = (\xi_\depth^{m,1}, \occ_\depth^{c,1})$, where
    $\xi_\depth^{m,1}$ is the domination point returned by $P$.
  \end{itemize}
Note: The following is simply showing that, if the linear part is dominated by a $\min$-planes approximation for a given conditional term $\occ_\depth^{c,1}$, then the Lipschitz generalization in the space of conditional terms is also dominated since $\l$ is constant.

Given a matrix $M=(m_{i,j})$, let $\vnorm{M}_1$ denote the column vector whose $i$th component is $\norm{m_{i,\cdot}}_1$.
Here, such matrices will correspond to conditional terms,
$\vnorm{ \occ_\depth^{c,1} - \tilde \occ_\depth^{c,1} }_1$ denoting
the vector whose component for \aoh{} $\theta^1_\depth$ is
$\norm{ \occ_\depth^{c,1}(\cdot | \theta_{\depth}^1) - \tilde
\occ_\depth^{c,1}( \cdot | \theta_{\depth}^1) }_1$ (where $\occ_\depth^{c,1}( \cdot | \theta_{\depth}^1)$ may also be denoted $\occ_\depth^{c,1}(\theta_{\depth}^1)$ for brevity).

Let us assume that the vector $\nu^2_\depth$ (associated to $\occ_\depth^{c,1}$) is dominated under $\occ_\depth^{c,1}$, \ie, $\forall \xi_\depth^{m,1}$,
\begin{align*}
  (\xi_\depth^{m,1})^\top \cdot ( \nu^2_\depth + \lt{\depth} \overbrace{\vnorm{ \occ_\depth^{c,1} - \occ_\depth^{c,1} }_1}^\text{$0$})
  & \geq \min_{\tilde \nu^2_\depth, \tilde \occ_\depth^{c,1}}  \left[ (\xi_\depth^{m,1})^{\top} \cdot ( \tilde \nu^2_\depth + \lt{\depth} \vnorm{ \occ_\depth^{c,1} - \tilde \occ_\depth^{c,1} }_1 ) \right] .
\intertext{We will show that, $\forall \xi_\depth = (\xi_\depth^{m,1},\xi_\depth^{c,1})$,}
(\xi_\depth^{m,1})^{\top} \cdot ( \nu^2_\depth + \lt{\depth} \vnorm{ \xi_\depth^{c,1} - \occ_\depth^{c,1} }_1 )
  & \geq \min_{\tilde \nu^2_\depth, \tilde \occ_\depth^{c,1}}  \left[ (\xi_\depth^{m,1})^{\top} \cdot ( \tilde \nu^2_\depth + \lt{\depth} \vnorm{ \xi_\depth^{c,1} - \tilde \occ_\depth^{c,1} }_1 ) \right].
\intertext{Let $\xi_\depth$ be an occupancy state. First, remark that $\exists \langle \tilde \nu^2_\depth, \tilde \occ_\depth^{c,1} \rangle$  such that}
(\xi_\depth^{m,1})^{\top} \cdot ( \nu^2_\depth + \lt{\depth} \overbrace{\vnorm{ \occ_\depth^{c,1} - \occ_\depth^{c,1} }_1 }^\text{$0$})
  & \geq (\xi_\depth^{m,1})^{\top} \cdot ( \tilde \nu^2_\depth + \lt{\depth} \vnorm{ \occ_\depth^{c,1} - \tilde \occ_\depth^{c,1} }_1 ).
\end{align*}

For the sake of clarity, let us introduce the following functions (where $\xx$, $\yy$, and $\zz$ will denote conditional terms for player $1$):
\begin{align}
g(x)
  & \eqdef \sum_{\theta^1_\depth} \xi_\depth^{m,1}(\theta^1_\depth) \cdot (\nu^2_{\yy}(\theta^1_\depth) + \lt{\depth} \norm{ \yy(\theta^1_\depth) - \xx(\theta^1_\depth)}_1)
  \nonumber \\
  & = g(\yy) + \lt{\depth} (\xi_\depth^{m,1})^{\top} \cdot \vnorm{\yy - \xx}_1, \intertext{and}
  h(\xx) 
  & \eqdef \sum_{\theta^1_\depth} \xi_\depth^{m,1}(\theta^1_\depth) \cdot (\nu^2_{\zz}(\theta^1_\depth) + \lt{\depth} \norm{\zz(\theta^1_\depth) - \xx(\theta^1_\depth)}_1)
  \nonumber \\
  & = h(\zz) + \lt{\depth} (\xi_\depth^{m,1})^{\top} \cdot \vnorm{\zz - \xx}_1.
  \nonumber
\intertext{Let us assume that $g(\yy) \geq h(\yy)$, and show that $g \geq h$. First,}
g(\xx) &= g(\yy) + \lt{\depth} (\xi_\depth^{m,1})^{\top} \cdot \vnorm{\xx-\yy}_1
  \nonumber \\
  & \geq h(\yy) + \lt{\depth} (\xi_\depth^{m,1})^{\top} \cdot \vnorm{\xx-\yy}_1
  \nonumber \\
  & = h(\zz) + \lt{\depth} (\xi_\depth^{m,1})^{\top} \cdot ( \vnorm{\yy-\zz}_1 + \vnorm{\xx-\yy}_1 )
  \nonumber \\
  &\geq h(\zz) + \lt{\depth} (\xi_\depth^{m,1})^{\top} \cdot \left( \vnorm{\yy-\zz}_1 + \abs{ \vnorm{\xx-\zz}_1 - \vnorm{\zz-\yy}_1  } \right).
  \label{eq|azerty}
\end{align}
Now, $\forall \theta^1_\depth$, if $\norm{\xx(\theta^1_\depth) - \zz(\theta^1_\depth)}_1 - \norm{\zz(\theta^1_\depth) - \yy(\theta^1_\depth)}_1 \geq 0$, then
\begin{align}
  & \norm{\yy(\theta^1_\depth)  - \zz(\theta^1_\depth) }_1 + \left|\norm{\xx(\theta^1_\depth) - \zz(\theta^1_\depth)}_1 - \norm{\zz(\theta^1_\depth) - \yy(\theta^1_\depth)}_1 \right|
  \nonumber \\
  & =  \cancel{\norm{\yy(\theta^1_\depth) - \zz(\theta^1_\depth)}_1} + \norm{\xx(\theta^1_\depth) - \zz(\theta^1_\depth) }_1 - \cancel{\norm{\zz(\theta^1_\depth) - \yy(\theta^1_\depth) }_1}
  \nonumber \\
  & = \norm{\xx(\theta^1_\depth) - \zz(\theta^1_\depth)}_1,
  \label{eq|uiop}
\intertext{else,}
& \norm{\yy(\theta^1_\depth)  - \zz(\theta^1_\depth) }_1 + \left|\norm{\xx(\theta^1_\depth) - \zz(\theta^1_\depth)}_1 - \norm{\zz(\theta^1_\depth) - \yy(\theta^1_\depth)}_1 \right| \\
  & =  2 \norm{\yy(\theta^1_\depth) - \zz(\theta^1_\depth)}_1 - \norm{\xx(\theta^1_\depth) - \zz(\theta^1_\depth)}_1
  \nonumber \\
  & \geq  \norm{\xx(\theta^1_\depth) - \zz(\theta^1_\depth)}_1.
  \label{eq|qsdf}
\intertext{Finally, coming back to (\ref{eq|azerty}):} g(\xx)
  & \geq h(\zz) + \lt{\depth} (\xi_\depth^{m,1})^{\top} \cdot \left( \vnorm{\yy-\zz}_1 + \abs{ \vnorm{\xx-\zz}_1 - \vnorm{\zz-\yy}_1 } \right)
  \nonumber \\
  & \geq h(\zz) + \lt{\depth} (\xi_\depth^{m,1})^{\top} \cdot \norm{\xx(\theta^1_\depth) - \zz(\theta^1_\depth)}_1
  \qquad \qquad \text{(from (\ref{eq|uiop}+\ref{eq|qsdf}))}
  \nonumber \\
  & \geq h(\xx).
  \nonumber
\intertext{With $x=\xi_\depth^{c,1}$, $y=\occ_\depth^{c,1}$ and $z=\tilde\occ_\depth^{c,1}$, this gives:}
g(\xi_\depth^{c,1}) & = \sum_{\theta^1_\depth} \xi_\depth^{m,1}(\theta^1_\depth) (\nu^2_\depth(\theta^1_\depth) + \lt{\depth} \norm{ \occ_\depth^{c,1}(\theta^1_\depth) - \xi_\depth^{c,1}(\theta^1_\depth)}_1) \\
  \geq h(\xi_\depth^{c,1}) & = \sum_{\theta^1_\depth} \xi_\depth^{m,1}(\theta^1_\depth) (\tilde \nu^2_\depth(\theta^1_\depth) + \lt{\depth} \norm{\tilde \occ_\depth^{c,1}(\theta^1_\depth) - \xi_\depth^{c,1}(\theta^1_\depth)}_1).
\end{align}
This shows that $\nu^2_\depth$ is dominated for every $(\xi_\depth^{m,1},\xi_\depth^{c,1})$, where both  $\xi_\depth^{m,1}$ and $\xi_\depth^{c,1}$ are arbitrary.
Therefore, one can prune a vector $\nu^2_\depth$ using $P$ applied in the space where $\occ_\depth^{c,1}$ is fixed.

As a consequence, some properties of $P$ are preserved in its extension to zsPOSGs:
\begin{itemize}
\item If $P$ correctly identifies $\nu^2_\depth$ as non-dominated at
  $\occ^{c,1}_\depth$, then
  $\langle \nu^2_\depth, \occ^{c,1}_\depth \rangle$ is
  non-dominated in $\Occ_\depth$. \\
  That is, if $P$ does not induce false negatives, neither does its
  extension to zsPOSGs.
\item If $P$ correctly identifies $\nu^2_\depth$ as dominated at
  $\occ^{c,1}_\depth$, then
  $\langle \nu^2_\depth, \occ^{c,1}_\depth \rangle$ is dominated
  in $\Occ_\depth$. \\
  That is, if $P$ does not induce false positives, neither does its
  extension to zsPOSGs.
\end{itemize}

\end{proof}

\subsubsection{About Improbable Histories} \label{app|improbableAOHs}

When solving the LP for $\beta^1_\depth$ in some \os $\occ_\depth$,
the resulting \dr{} is optimized for the \aoh{}s
$\theta^1_\depth \in \supp(\occ^{m,1}_\depth)$ only (and otherwise
random).
Also, as mentioned in \Cref{prop|rec|nu}, the corresponding vector
$\upb\nu^2_\depth$ can be obtained as a by-product of the LP if
restricted to the same \aoh{}s, and has not very relevant values for
other \aoh{}s because of the non-optimized decisions.

As a consequence, and in a view to save on time and memory, we prefer
not computing and storing \dr{}s $\beta^1_\depth$ and vectors
$\upb\nu^2_\depth$ outside the support of the $\occ^{m,1}_\depth$.
The missing values for some \aoh{}s $\tilde\theta^1_\depth$ outside
$\supp(\occ^{m,1}_\depth)$ may be required when computing new LPs, but
can then be replaced by \begin{itemize}
\item any probability distribution over actions for
  $\beta^1_\depth(\cdot|\tilde\theta^1_\depth)$ (because solving the
  LP would have led to a random choice anyway), and \item a generic upper bound such as $V^{\max}_\depth$ (\cf Proof of
  \extCref{cor|V|LC|occ}).
\end{itemize}
$V^{\max}_\depth$ is a gross (but conservative) overestimation, thus
far from informative, which impedes the convergence of the algorithm.
We now present the two approaches we considered as a replacement.

\paragraph{Initialization-based upper bound [$\upb\nu_{\text{init}}$] }
This first approach uses the vectors computed when initializing
$\upb{W}^1_\depth$ by solving the POMDP relaxation of the zs-POSG
obtained by making the opponent always select actions uniformly at
random.
This is in fact not a valid upper bound for a given $\delta^2_\depth$,
because this opponent's strategy used for the initialization may, at
least considering some \aoh{}s $\theta^1_\depth$, be better than the
current strategy defined by $\delta^2_\depth$. Yet, this upper bound turns out to give satisfying results in most of
our experiments.
We denote this heuristic $\upb\nu_{\text{init}}$.

\paragraph{bMDP upper bound  [$\upb\nu_{\text{bMDP}}$] }
This second approach is a ``belief MDP'' heuristic approximation based
on computing \begin{enumerate}
\item the optimal value function $V^*_{\textsc{mdp}}$ for the
  (finite-horizon) MDP relaxation of the POMDP obtained for
  $\delta^2_\depth$, \\ \hspace*{-\leftmargin}
  then, for a given \aoh{}
  $ \theta_{\depth+1}^1 \eqdef \langle \theta^1_\depth, a^1, z^1
  \rangle$, $\beta^2_\depth$ and $\occ^{c,1}_\depth$,
\item $b_{|\theta^1_{\depth+1}}$, the probability distribution over
  states given $ \theta_{\depth+1}^1$, $\beta^2_\depth$ and $\occ^{c,1}_\depth$, and \item the weighted sum
  $\sum_s b_{|\theta^1_{\depth+1}}(s) \cdot V^*_{\textsc{mdp}}(s)$.
\end{enumerate}
The above-mentioned belief is obtained with:
\begin{align*}
  \hspace{2cm}
  & \hspace{-2cm} Pr(s_{\depth+1} \mid \occ^{c,1}_\depth, \langle \theta^1_\depth, a^1, z^1 \rangle, \beta^2_\depth) \\
  & = \sum_{\theta^2_\depth,a^2,z^2} \sum_{s_\depth}
  Pr(s_{\depth+1}, s_\depth, \langle \theta^2_\depth, a^2, z^2 \rangle \mid \occ^{c,1}_\depth, \langle \theta^1_\depth, a^1, z^1 \rangle, \beta^2_\depth) \intertext{(by Law of total probability)}
  & = \sum_{\theta^2_\depth,a^2,z^2} \sum_{s_\depth}
  \frac{
    \overbrace{Pr(s_{\depth+1}, s_\depth, \langle \theta^2_\depth, a^2, z^2 \rangle, z^1 \mid \occ^{c,1}_\depth, \langle \theta^1_\depth, a^1 \rangle, \beta^2_\depth)}^{X}
  }{
    \sum_{s_{\depth+1}, s_\depth, \langle \theta^2_\depth, a^2, z^2 \rangle} Pr(s_{\depth+1}, s_\depth, \langle \theta^2_\depth, a^2, z^2 \rangle, z^1 \mid \occ^{c,1}_\depth, \langle \theta^1_\depth, a^1 \rangle, \beta^2_\depth)
  } \intertext{(by Bayes' Theorem), where:} X
  & = Pr(s_{\depth+1}, s_\depth, \langle \theta^2_\depth, a^2, z^2 \rangle, z^1 \mid \occ^{c,1}_\depth, \langle \theta^1_\depth, a^1 \rangle, \beta^2_\depth) \\
  & = Pr(s_{\depth+1}, z^1, z^2 \mid s_\depth, a^1, a^2)
  \cdot Pr(s_\depth, \langle \theta^2_\depth, a^2 \rangle \mid \occ^{c,1}_\depth, \langle \theta^1_\depth, a^1 \rangle, \beta^2_\depth) \\
  & = Pr(s_{\depth+1}, z^1, z^2 \mid s_\depth, a^1, a^2)
  \cdot Pr(a^2\mid \theta^2_\depth, \beta^2_\depth)
  \cdot Pr(s_\depth, \langle \theta^2_\depth \rangle \mid \occ^{c,1}_\depth, \langle \theta^1_\depth, a^1 \rangle) \\
  & = Pr(s_{\depth+1}, z^1, z^2 \mid s_\depth, a^1, a^2)
  \cdot Pr(a^2\mid \theta^2_\depth, \beta^2_\depth)
  \cdot Pr(s_\depth \mid \theta^1_\depth, \theta^2_\depth ) 
  \cdot Pr( \theta^2_\depth \mid \occ^{c,1}_\depth, \langle \theta^1_\depth, a^1 \rangle) \\
  & = \underbrace{ Pr(s_{\depth+1}, z^1, z^2 \mid s_\depth, a^1, a^2) }_{ \PP{s_\depth}{a^1,a^2}{s_{\depth+1}}{z^1,z^2} }
  \cdot \underbrace{Pr(a^2\mid \theta^2_\depth, \beta^2_\depth)}_{ \beta^2_\depth(a^2 \mid \theta^2_\depth) }
  \cdot \underbrace{ Pr(s_\depth \mid \theta^1_\depth, \theta^2_\depth ) }_{ Pr(s_\depth \mid \theta^1_\depth, \theta^2_\depth ) }
  \cdot \underbrace{ Pr( \theta^2_\depth \mid \occ^{c,1}_\depth,  \theta^1_\depth) }_{ \occ^{c,1}_\depth( \theta^2_\depth \mid  \theta^1_\depth) } \\
  & = \underbrace{\PP{s_\depth}{a^1,a^2}{s_{\depth+1}}{z^1,z^2}}_{O(z^1,z^2 \mid s_{\depth+1}, a^1, a^2) \cdot T(s_{\depth+1}|s_\depth, a^1, a^2)}
  \cdot \beta^2_\depth(a^2 \mid \theta^2_\depth)
  \cdot Pr(s_\depth \mid \theta^1_\depth, \theta^2_\depth ) 
  \cdot \occ^{c,1}_\depth( \theta^2_\depth \mid  \theta^1_\depth).
\end{align*}
We denote this heuristic $\upb\nu_{\text{bMDP}}$.
  
As a consequence, for some $\occ^{c,1}_\depth$ and $\delta^2_\depth$,
the computed $\upb\nu^2_\depth$ remains a valid upper bound of the true
vector $\nu^2_{[\occ^{c,1}_\depth, \delta^2_\depth]}$, even if
the strategy extracted from $\delta^2_\depth$ replaces unspecified
\dr{}s by any probability distribution.
In particular, the computed upper bound at $\depth=0$ remains valid.

\subsubsection{Strategy Conversion}
\label{app|stratExtraction}

As discussed in \Cref{sec|stratExtraction}, no effort is required to
extract a solution strategy for a player from the lower bound (for
$1$) or the upper bound (for $2$), but that strategy is in an unusual
recursive form.
We will here see (in the finite horizon setting) how to derive a
(unique) equivalent behavioral strategy $\beta^i_{0:}$ using
realization weights \citep{KolMegSte-stoc94} in intermediate steps.
To that end, we first define these realization weights in the case of
a behavioral strategy (rather than for a mixed strategy
as done by \citeauthor{KolMegSte-stoc94}) and present some useful
properties.

\paragraph{About Realization Weights}

Let us denote $rw^i(a^i_0, z^i_1, a^i_1, \dots, a^i_\depth)$ the {\em
  realization weight} (RW) of sequence
$a^i_0, z^i_1, a^i_1, \dots, a^i_\depth$ under strategy
$\beta^i_{0:}$, defined as
\begin{align}
  rw^i(a^i_0, z^i_1, a^i_1, \dots, a^i_\depth)
  & \eqdef \prod_{t=0}^\depth  \beta^i_{0:}(a^i_t | a^i_0, z^i_1, a^i_1, \dots, z^i_t) \\
  & = rw^i(a^i_0, z^i_1, a^i_1, \dots, a^i_{\depth-1}) \cdot \beta^i_{0:}(a^i_\depth | \underbrace{a^i_0, z^i_1, a^i_1, \dots, z^i_\depth}_{\theta^i_\depth}).
  \intertext{This definition already leads to useful results such as:}
  \beta^i_{0:}(a^i_\depth | \theta^i_\depth)
  & = \frac{
    rw^i(\theta^i_{\depth-1}, a^i_{\depth-1}, z^i_\depth, a^i_\depth)
  }{
    rw^i(\theta^i_{\depth-1}, a^i_{\depth-1})
  },
  \label{eq|betaFromRw}
\intertext{and}
  \forall z^i_\depth, \quad
  rw^i(\theta^i_{\depth-1}, a^i_{\depth-1})
  & = rw^i(\theta^i_{\depth-1}, a^i_{\depth-1}) \cdot \underbrace{\sum_{a^i_\depth} \beta(a^i_\depth | \theta^i_{\depth-1}, a^i_{\depth-1}, z^i_\depth )}_{=1} \\
  & = \sum_{a^i_\depth} rw^i(\theta^i_{\depth-1}, a^i_{\depth-1}) \cdot \beta(a^i_\depth | \theta^i_{\depth-1}, a^i_{\depth-1}, z^i_\depth ) \\
  & = \sum_{a^i_\depth} rw^i(\theta^i_{\depth-1}, a^i_{\depth-1}, z^i_\depth, a^i_\depth).
  \label{eq|RWsFromFullLengthRWs}
\end{align}

We now extend \citeauthor{KolMegSte-stoc94}'s definition by introducing
{\em conditional realization weights}, where the realization weight of a
{\em suffix sequence} is ``conditioned'' on a {\em prefix sequence}:
\begin{align}
  rw^i(\underbrace{a^i_\depth, \dots, a^i_{\depth'}}_{\text{suffix seq.}} | \underbrace{a^i_0, \dots, z^i_\depth}_{\text{prefix seq.}})
  & \eqdef \prod_{t=\depth}^{\depth'}  \beta^i_{0:}(a^i_t | a^i_0, \dots, z^i_\depth, a^i_\depth, \dots, z^i_t)
  \label{eq|rw|def} \\
  & = \beta^i_{0:}(a^i_{\depth} | a^i_0, \dots, z^i_\depth)   \cdot rw^i(a^i_{\depth+1}, \dots, a^i_{\depth'} | a^i_0, \dots, z^i_{\depth+1}).
  \label{eq|rw|rec|beta}
\end{align}
As can be noted, this definition only requires the knowledge of a
partial strategy $\beta^i_{\depth:}$ rather than a complete strategy
$\beta^i_{0:}$.

\paragraph{Mixing Realization Weights}

Let $\depth'\geq \depth+1$, and $rw^i[w]$ denote the realization
weights of some element $w$ at $\depth+1$.
Then, for some $\delta^i_\depth$, we have
\begin{align}
  rw[\delta^i_\depth](a^i_{\depth+1}, \dots, a^i_{\depth'} | a^i_0, \dots, z^i_{\depth+1})
  & = \sum_w \delta^i_\depth(w) \cdot rw[w](a^i_{\depth+1}, \dots, a^i_{\depth'} | a^i_0, \dots, z^i_{\depth+1}).
  \label{eq|rw|rec|delta}
\end{align}

\paragraph{From $w^i_0$ to $\beta^i_{0:}$}

\SetKwFunction{FExtract}{${\text{\bf Extract}}$}
\SetKwFunction{FRecGetRWMix}{${\text{\bf RecGetRWMix}}$}
\SetKwFunction{FRecGetRWCat}{${\text{\bf RecGetRWCat}}$}

\begin{algorithm}\caption{Extracting $\beta^i_{0:}$ from $w^i_0$}
  \label{alg|extractingBeta}
  
  \DontPrintSemicolon

  \Fct{\FExtract{$w^i_0$}}{
    \tcc{Step 1., keeping only $rw(\theta^i_{0:H-1})$ for all $\theta^i_{0:H-1}$}
    $\left( rw(\theta^i_{0:H-1}) \right)_{\theta^i_{0:H-1}} \gets$ \FRecGetRWMix{$0 , w^i_0$}\;
    
    \tcc{Step 2.}
    \For{$t=H-2, \dots, 0$}{
      \ForAll{$\theta^i_{0:t}, a^i_{t}$}{
        $z^i_{t+1} \gets z^i$ s.t. $\beta_t(\cdot|\theta^i_{0:t}, a^i_t, z^i)$ is defined\;
        $rw(\theta^1_{0:t}, a^i_t) \gets \sum_{a^i_{t+1}} rw(\theta^i_{0:t}, a^i_t, z^i_{t+1},a^i_{t+1} | - )$ \label{line|RWsFromFullLengthRWs}
      }
    }
    
    \tcc{Step 3.}
    \For{$t=H-1, \dots, 0$}{
      \ForAll{$\theta^i_{0:t}, a^i_t$}{
        $\beta^i_t(a^i_t | \theta^i_{0:t}) \gets  \frac{
          rw^i(\theta^i_{0:t-1}, a^i_{t-1}, z^i_t, a^i_t)
        }{
          rw^i(\theta^i_{0:t-1}, a^i_{t-1})
        }
        $
        \label{line|betaFromRw}
      }
    }
    \Return{$\beta^i_{0:}$}
  }
  
  \Fct{\FRecGetRWMix{$t , w = \langle \beta^i_t, \delta^i_t \rangle $}}{
    \For{$w'$ s.t. $\delta^i_t(w')>0$}{
      $rwCat[w'] \gets$ \FRecGetRWCat{$t,w'$} }
    \ForAll{$(a^i_0, \dots, a^i_{H-1})$}{
      $ rwMix[w](a^i_{t}, \dots, a^i_{H-1} | a^i_0, \dots, z^i_{t}) \gets \sum_{w'} {
        \delta^i_t(w')
        \cdot rwCat[w'](a^i_{t+1}, \dots, a^i_{H-1} | a^i_0, \dots, z^i_{t+1})
      }
      $
      \label{line|rw|rec|beta}
    }
    \Return{$rwMix[w]$}
  }

  \Fct{\FRecGetRWCat{$t , w = \langle \beta^i_t, \delta^i_t \rangle $}}{
    
    \eIf{$t=H-1$}{
      \ForAll{$(a^i_0, \dots, a^i_{H-1})$}{
        $ rwCat[w](a^i_{H-1} | a^i_0, \dots, z^i_{H-1})
        \gets  \beta^i_t(a^i_{H-1} | a^i_0, \dots, z^i_{H-1}) $
      }
    }{
      $rwMix[w]
      \gets $ \FRecGetRWMix{$t,w$}\;   \ForAll{$(a^i_0, \dots, a^i_{H-1})$}{
        $ rwCat[w](a^i_{t}, \dots, a^i_{H-1} | a^i_0, \dots, z^i_{t}) \gets  \beta^i_t(a^i_t|a^i_0, \dots, z^i_t)
        \cdot rwMix[w](a^i_{t+1}, \dots, a^i_{H-1} | a^i_0, \dots, z^i_{t+1})
        $
        \label{line|rw|rec|delta}
      }
    }        
    \Return{$rwCat[w]$}
  }
\end{algorithm}

Using the above results, function \FExtract in
\Cref{alg|extractingBeta} derives a behavioral strategy $\beta^i_{0:}$
equivalent to the recursive strategy induced by some tuple $w^i_0$ in
3 steps as follows:
\begin{enumerate}
\item{\bf From $w^i_0$ to $rw(\theta^i_{0:H-1}, a^i_{H-1})$
    ($\forall (\theta^i_{0:H-1}, a^i_{H-1})$) ---}
These (classical) realization weights are obtained by recursively
  going through the directed acyclic graph describing the recursive
  strategy, computing {\em full length} (conditional) realization
  weights $rw(\theta^i_{t:H-1}, a^i_{H-1} | \theta^i_{0:t})$ (for $t=H-1$ down
  to $0$).
  
  When in a leaf node, at depth $H-1$, the initialization is given by
  \Cref{eq|rw|def} when $\depth=\depth'=H-1$:
  \begin{align*}
    rw^i(a^i_{H-1} | a^i_0, \dots, z^i_{H-1})
    & \eqdef \prod_{t={H-1}}^{{H-1}}  \beta^i(a^i_t | a^i_0, \dots, z^i_t) \\
    & =  \beta^i(a^i_{H-1} | a^i_0, \dots, z^i_{H-1}).
  \end{align*}
  Then, in the backward phase, we can compute full length realization weights
  $rw(\theta^i_{t+1:H-1}, a^i_{H-1} | \theta^i_{0:t})$ with
  increasingly longer suffixes (thus shorter prefixes) using (i) \Cref{eq|rw|rec|delta} (in function \FRecGetRWMix,
  \cref{line|rw|rec|delta}) to ``mix'' several strategies using the
  distribution $\delta^i_t$ attached to the current $w$, and
(ii) \Cref{eq|rw|rec|beta}, with $\depth'=H-1$, (in function
  \FRecGetRWCat, \cref{line|rw|rec|beta}) to concatenate the
  behavioral decision rule $\beta^i_t$ attached to the current $w$ in
  front of the strategy induced by the distribution $\delta^i_t$ also
  attached to $w$.
Note: Memoization can here be used to avoid repeating the same
  computations.
\item{\bf Retrieving (classical) realization weights $rw(\theta^i_{0:t}, a^i_t | -)$ ($\forall t$) ---}
We can now compute realization weights
  $rw(\theta^i_{0:t}, a^i_t | -)$ for all $t$'s using
  \Cref{eq|RWsFromFullLengthRWs} (\cref{line|RWsFromFullLengthRWs}).
\item{\bf Retrieving behavioral decision rules $\beta^i_t$ ---}
Applying \Cref{eq|betaFromRw} (\cref{line|betaFromRw}) then provides
  the expected behavioral decision rules.
\end{enumerate}

In practice, lossless compressions are used to reduce the
dimensionality of the occupancy state (\cf \Cref{sec|XP|algorithms}),
which are currently lost in the current implementation of the conversion.
Ideally, one would like to preserve compressions whenever possible or
at least retrieve them afterwards, and possibly identify further
compressions in the solution strategy.

\section{HSVI for zs-POSGs}
\label{proofLemMaxRadius}
\label{proofLemThr}

This section presents
results that help
(i) tune zs-OMG-HSVI's radius parameter $\rho$, ensuring that
trajectories will always stop,
and (ii) then demonstrate the finite time convergence of this
algorithm.

\subsection{Algorithm}

\subsubsection{Setting $\radius$}
\label{sec|settingRadius}

\begin{restatable}[Proof in \extCshref{proofLemMaxRadius}]{proposition}{lemMaxRadius}
  \labelT{lem|MaxRadius}
  \IfAppendix{{\em (originally stated on
      page~\pageref{lem|MaxRadius})}}{}
Bounding $\lt{\depth}$ by $\l^{\infty} = \frac{1}{2} \frac{1}{1-\gamma} \left[ r_{\max} - r_{\min} \right]$
  when $\gamma<1$, and noting that
\begin{align}
    \label{eq|thr}
    \thr(\depth)
    & = 
      \gamma^{-\depth}\epsilon - 2 \radius \l^\infty \frac{\gamma^{-\depth}-1}{1-\gamma}
      \qquad (\text{ or } \epsilon - \radius (r_{\max}-r_{\min})  (2H + 1 - \depth) \depth \quad \text{ if } \gamma=1\ ),
  \end{align}
  one can ensure positivity of the threshold at any
  $\depth \in 1 \twodots H-1$ by enforcing $0  < \radius < \frac{1-\gamma}{2\l^\infty}\epsilon$ (or $0  < \radius <\frac{\epsilon}{(r_{\max}-r_{\min}) (H + 1)H}$  if $\gamma=1$).
\end{restatable}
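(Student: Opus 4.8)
The plan is to start from the definition of the threshold in \Cref{eq|def|thr}, namely $\thr(\depth) = \gamma^{-\depth}\epsilon - \sum_{i=1}^\depth 2\radius\lt{\depth-i}\gamma^{-i}$, and to replace the depth-dependent Lipschitz constants $\lt{\depth-i}$ by the uniform overestimate $\l^\infty$. First I would recall from \Cref{cor|V|LC|occ} that $\lt{\depth} = \tfrac12 \h{H}{\depth}{\gamma}(r_{\max}-r_{\min})$ and that, when $\gamma<1$, $\h{H}{\depth}{\gamma} = \frac{1-\gamma^{H-\depth}}{1-\gamma} \le \frac{1}{1-\gamma}$, whence $\lt{\depth} \le \l^\infty$ for every $\depth$. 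Since each $\lt{\depth-i}$ enters $\thr$ with a negative sign, substituting $\l^\infty$ yields a lower bound on $\thr(\depth)$; evaluating the geometric sum $\sum_{i=1}^\depth\gamma^{-i} = \frac{\gamma^{-\depth}-1}{1-\gamma}$ then produces exactly the closed form displayed in the statement, $\gamma^{-\depth}\epsilon - 2\radius\l^\infty\frac{\gamma^{-\depth}-1}{1-\gamma}$.

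Next, for $\gamma<1$ I would impose positivity of this expression. Rearranging $\gamma^{-\depth}\epsilon > 2\radius\l^\infty\frac{\gamma^{-\depth}-1}{1-\gamma}$ (legitimate since $\gamma^{-\depth}-1>0$ for $\depth\ge1$) gives the depth-dependent requirement $\radius < \frac{(1-\gamma)\epsilon}{2\l^\infty}\cdot\frac{\gamma^{-\depth}}{\gamma^{-\depth}-1}$. The key observation is that $\frac{\gamma^{-\depth}}{\gamma^{-\depth}-1} = \frac{1}{1-\gamma^{\depth}} > 1$ for all $\depth\ge1$, so replacing this factor by the lower bound $1$ produces a single, depth-independent sufficient condition $0 < \radius < \frac{1-\gamma}{2\l^\infty}\epsilon$ that simultaneously guarantees $\thr(\depth)>0$ at every $\depth\in 1\twodots H-1$.

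For the undiscounted case $\gamma=1$ I would repeat the computation with $\lt{\depth} = \tfrac12(H-\depth)(r_{\max}-r_{\min})$ and $\gamma^{-i}=1$, so that $\thr(\depth) = \epsilon - \radius(r_{\max}-r_{\min})\sum_{i=1}^\depth (H-\depth+i)$. Upper-bounding the arithmetic sum $\sum_{i=1}^\depth(H-\depth+i) = \frac{\depth(2H+1-\depth)}{2}$ by $\depth(2H+1-\depth)$ gives the stated lower-bounding expression $\epsilon - \radius(r_{\max}-r_{\min})(2H+1-\depth)\depth$. It then remains to maximise $\depth(2H+1-\depth)$ over the admissible integer depths: this concave quadratic in $\depth$ has its vertex at $\depth = H+\tfrac12$, hence is increasing on $\{1,\dots,H-1\}$ and attains its maximum $(H-1)(H+2) = H^2+H-2 \le H(H+1)$ at $\depth = H-1$. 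Requiring $\radius(r_{\max}-r_{\min})H(H+1) < \epsilon$, i.e.\ $0 < \radius < \frac{\epsilon}{(r_{\max}-r_{\min})(H+1)H}$, therefore enforces positivity for all $\depth$.

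The only real subtlety, and the step I would treat most carefully, is the uniformity over $\depth$: a single $\radius$ must make $\thr$ positive at every depth at once. In both regimes this reduces to controlling one depth-dependent factor by its extremum over $1\twodots H-1$ --- the monotone factor $\frac{1}{1-\gamma^{\depth}}$, bounded below by $1$ when $\gamma<1$, and the quadratic $\depth(2H+1-\depth)$, maximised at $\depth=H-1$ when $\gamma=1$. Everything else is the routine evaluation of a geometric (resp.\ arithmetic) sum together with the sign bookkeeping that makes the $\l^\infty$-substitution a genuine lower bound on $\thr$, so no deeper difficulty arises.
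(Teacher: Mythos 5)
Your proposal is correct and follows essentially the same route as the paper's proof: evaluate the geometric sum $\sum_{i=1}^\depth \gamma^{-i}=\frac{\gamma^{-\depth}-1}{1-\gamma}$ (resp.\ the arithmetic sum when $\gamma=1$) in closed form, rearrange the positivity condition into a depth-dependent bound on $\radius$, and make it uniform by bounding the depth-dependent factor ($\frac{1}{1-\gamma^\depth}>1$ for $\gamma<1$, and the quadratic $(2H+1-\depth)\depth$ maximised at $\depth=H-1$ with $(H-1)(H+2)\leq H(H+1)$ for $\gamma=1$). The only cosmetic divergence is in the $\gamma=1$ case, where you keep the halved constant $\lt{\depth}=\tfrac12(H-\depth)(r_{\max}-r_{\min})$ from \Cref{cor|V|LC|occ} and absorb a factor $2$ by over-bounding the sum, whereas the paper plugs in the unhalved constant $(H-\depth)(r_{\max}-r_{\min})$ directly; both yield the same displayed expression and the same sufficient condition on $\radius$.
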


\begin{proof}
  \label{proof|lem|MaxRadius}
  \uline{Let us first consider the case $\gamma<1$.}\\
We have (for $\depth \in \{ 1 \twodots H-1 \}$):
  \begin{align*}
    \thr(\depth)
    & = \gamma^{-\depth}\epsilon - \sum_{i=1}^\depth 2 \radius \l^\infty \gamma^{-i} \\
    & = \gamma^{-\depth}\epsilon - 2 \radius \l^\infty \sum_{i=1}^\depth \gamma^{-i} \\
    & = \gamma^{-\depth}\epsilon - 2 \radius \l^\infty \left( \gamma^{-1} + \gamma^{-2} + \cdots + \gamma^{-\depth} \right) \\
    & = \gamma^{-\depth}\epsilon - 2 \radius \l^\infty \gamma^{-1} \left( \gamma^{0} + \gamma^{-1} + \cdots + \gamma^{-(\depth-1)} \right) \\
    & = \gamma^{-\depth}\epsilon - 2 \radius \l^\infty \gamma^{-1} \frac{\gamma^{-\depth}-1}{\gamma^{-1}-1} \\
    & = \gamma^{-\depth}\epsilon - 2 \radius \l^\infty \frac{\gamma^{-\depth}-1}{1-\gamma}.
  \end{align*}
  Then, let us derive the following equivalent inequalities:
  \begin{align*}
    0
    & < \thr(\depth) \\ 2 \radius \l^\infty \frac{\gamma^{-\depth}-1}{1-\gamma}
    & < \gamma^{-\depth}\epsilon \\
\radius
    & < \frac{1}{2\l^\infty} \frac{1-\gamma}{\gamma^{-\depth}-1} \gamma^{-\depth} \epsilon \\
\radius
    & < \frac{1}{2\l^\infty} \frac{1-\gamma}{1-\gamma^\depth}  \epsilon.
\end{align*}
  To ensure positivity of the threshold for any $\depth \geq 1$, one
  thus just needs to set $\radius$ as a positive value smaller than
  $\frac{1-\gamma}{2\l^\infty}\epsilon$.

  \uline{Let us now consider the case $\gamma=1$.}\\
We have (for $\depth\in \{1,\dots,H-1\}$):
  \begin{align*}
    \thr(\depth)
    & \eqdef \epsilon - \sum_{i=1}^\depth 2 \radius \l_{\depth-i} \\
    & = \epsilon - \sum_{i=1}^\depth 2 \radius (H-(\depth-i))\cdot(r_{\max}-r_{\min}) \\
    & = \epsilon - 2 \radius (r_{\max}-r_{\min}) \left[ \depth(H-\depth) + \sum_{i=1}^\depth i \right]  \\
    & = \epsilon - 2 \radius (r_{\max}-r_{\min}) \left[ \depth H - \depth^2 + \frac{1}{2}\depth (\depth+1) \right]  \\
    & = \epsilon - 2 \radius (r_{\max}-r_{\min}) \left[  (H+\frac{1}{2}) \depth - \frac{1}{2} \depth^2 \right]  \\
    & = \epsilon - \radius (r_{\max}-r_{\min}) \left[ (2H+1) \depth - \depth^2 \right]  \\
    & = \epsilon - \radius (r_{\max}-r_{\min}) \left[ (2H + 1 - \depth) \depth \right].
  \end{align*}
  Then, let us derive the following equivalent inequalities:
  \begin{align*}
    0
    & < \thr(\depth) \\ 
    \radius (r_{\max}-r_{\min}) (2H + 1 - \depth) \depth
    & < \epsilon
    \qquad \qquad \qquad \text{(holds when $\depth=0$ and $\depth=H+1$)}
    \\
    \radius 
    & < \frac{\epsilon}{(r_{\max}-r_{\min}) (2H + 1 - \depth) \depth}
    \quad \text{(when $\depth \in \{0 \twodots H+1\}$).}
  \end{align*}
  The function
  $f: \depth \mapsto \frac{\epsilon}{(r_{\max}-r_{\min}) (2H + 1 -
    \depth) \depth}$
  reaches its minimum (for $\depth \in (0,H+1)$) when
  $\depth=H+\frac{1}{2}$.
To ensure positivity of the threshold for any $\depth \in \{1 \twodots H-1 \}$, one
  thus just needs to set $\radius$ as a positive value smaller than
  $\frac{\epsilon}{(r_{\max}-r_{\min}) (H + 1)H}$.
\end{proof}

\subsection{Finite-Time Convergence}

\subsubsection{Convergence Proof}
\label{sec|ConvergenceProof}

Proving the finite-time convergence of zs-OMG-HSVI to an error-bounded
solution requires some preliminary lemmas.

\begin{lemma}
  \labelT{lemma|OMG-HSVIContraction}
  Let $(\occ_0,\dots,\occ_{\depth+1})$ be a full trajectory generated by zs-OMG-HSVI and $\vbeta_\depth$ the joint behavioral \dr that induced the last transition, \ie, $\occ_{\depth+1}= T(\occ_\depth, \vbeta_\depth)$.
Then, after updating $\upb{W}^1_\depth$ and $\lob{W}^2_\depth$, we have that $\upb{W}^1_\depth(\occ_\depth,\beta^1_\depth) - \lob{W}^2_\depth(\occ_\depth,\beta^2_\depth) \leq \gamma \thr(\depth+1)$.
\end{lemma}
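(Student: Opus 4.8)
The plan is to squeeze the gap out of the two tuples that the backward update pass has just inserted, and to arrange for the instantaneous‑reward contributions to cancel. Write $\vbeta_\depth=\langle\upb\beta^1_\depth,\lob\beta^2_\depth\rangle$. By construction of the update routine, $\upb{bagW}^1_\depth$ now contains the tuple $\langle\occ^{c,1}_\depth,\lob\beta^2_\depth,\upb\nu^2_{\depth+1}\rangle$ and, symmetrically, $\lob{bagW}^2_\depth$ contains $\langle\occ^{c,2}_\depth,\upb\beta^1_\depth,\lob\nu^1_{\depth+1}\rangle$. First I would evaluate the minimum defining $\upb{W}^1_\depth$ (\Cref{eq|Wupb}) at its own freshly‑added tuple, and the maximum defining $\lob{W}^2_\depth$ at its own. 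Because each tuple's stored conditional term is exactly the true $\occ^{c,1}_\depth$ (resp. $\occ^{c,2}_\depth$), the Lipschitz penalty vanishes: by \Cref{lem|T1cindep}, $\nxt^1_c(\occ_\depth,\lob\beta^2_\depth)=\nxt^1_c(\occ^{c,1}_\depth,\lob\beta^2_\depth)$, hence $\nxt(\occ_\depth,\cdot,\lob\beta^2_\depth)=\nxt^1_m(\occ_\depth,\cdot,\lob\beta^2_\depth)\,\nxt^1_c(\occ^{c,1}_\depth,\lob\beta^2_\depth)$ and the norm is $0$. Using $\nxt^1_m(\occ_\depth,\upb\beta^1_\depth,\lob\beta^2_\depth)=\occ^{m,1}_{\depth+1}$ and $\nxt^2_m(\occ_\depth,\upb\beta^1_\depth,\lob\beta^2_\depth)=\occ^{m,2}_{\depth+1}$, this yields
\begin{align*}
\upb{W}^1_\depth(\occ_\depth,\upb\beta^1_\depth) & \le r(\occ_\depth,\upb\beta^1_\depth,\lob\beta^2_\depth)+\gamma\,\occ^{m,1}_{\depth+1}\cdot\upb\nu^2_{\depth+1},\\
\lob{W}^2_\depth(\occ_\depth,\lob\beta^2_\depth) & \ge r(\occ_\depth,\upb\beta^1_\depth,\lob\beta^2_\depth)+\gamma\,\occ^{m,2}_{\depth+1}\cdot\lob\nu^1_{\depth+1}.
\end{align*}

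Subtracting cancels the identical reward terms and leaves $\upb{W}^1_\depth(\occ_\depth,\upb\beta^1_\depth)-\lob{W}^2_\depth(\occ_\depth,\lob\beta^2_\depth)\le\gamma\big[\occ^{m,1}_{\depth+1}\cdot\upb\nu^2_{\depth+1}-\occ^{m,2}_{\depth+1}\cdot\lob\nu^1_{\depth+1}\big]$. The next step is to identify each scalar product as a value at $\occ_{\depth+1}$. By \Cref{prop|rec|nu}/\Cref{eq|nu}, $\occ^{m,1}_{\depth+1}\cdot\upb\nu^2_{\depth+1}=\sum_{\theta^1_{\depth+1}}\max_{a^1}M^{\occ_{\depth+1}}_{((\theta^1_{\depth+1},a^1),\cdot)}\cdot\delta^2_{\depth+1}$, which by LP duality between \Cref{eq|LP,eq|DLP} equals the value of the upper local game at $\occ_{\depth+1}$. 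Since the same update call inserted $\langle\occ^{c,1}_{\depth+1},\langle\delta^2_{\depth+1},\upb\nu^2_{\depth+1}\rangle\rangle$ into $\upb{bagV}_{\depth+1}$ with conditional term $\occ^{c,1}_{\depth+1}$, its Lipschitz‑free evaluation at $\occ_{\depth+1}$ equals $\occ^{m,1}_{\depth+1}\cdot\upb\nu^2_{\depth+1}$; invoking the standard HSVI monotonicity invariant (upper bounds never increase, lower bounds never decrease, both staying valid), this newly inserted element attains the minimum defining $\upb{V}_{\depth+1}(\occ_{\depth+1})$. Thus $\occ^{m,1}_{\depth+1}\cdot\upb\nu^2_{\depth+1}=\upb{V}_{\depth+1}(\occ_{\depth+1})$ and, symmetrically, $\occ^{m,2}_{\depth+1}\cdot\lob\nu^1_{\depth+1}=\lob{V}_{\depth+1}(\occ_{\depth+1})$, giving $\upb{W}^1_\depth(\occ_\depth,\upb\beta^1_\depth)-\lob{W}^2_\depth(\occ_\depth,\lob\beta^2_\depth)\le\gamma\big[\upb{V}_{\depth+1}(\occ_{\depth+1})-\lob{V}_{\depth+1}(\occ_{\depth+1})\big]$.

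It then remains to bound the width at the trajectory's endpoint, $\upb{V}_{\depth+1}(\occ_{\depth+1})-\lob{V}_{\depth+1}(\occ_{\depth+1})\le\thr(\depth+1)$. When $\depth+1=H-1$, the exact one‑shot game solved at \cref{alg|oneShot} closes the gap, so the difference is $0\le\thr(H-1)$; otherwise $\occ_{\depth+1}$ is the node at which exploration halts, where the width has fallen to at most its threshold. Here I would also use the algebraic identity $\gamma\,\thr(\depth+2)=\thr(\depth+1)-2\radius\,\lt{\depth+1}\le\thr(\depth+1)$ (read off \Cref{eq|def|thr}) to propagate the bound backwards along the trajectory. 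This last paragraph is the main obstacle: reconciling ``$\occ_{\depth+1}$ was updated'' (which, in the control flow, requires its \emph{pre}‑update width to exceed the threshold) with ``$\occ_{\depth+1}$ has width $\le\thr(\depth+1)$'' demands careful bookkeeping of the update‑then‑descend order, exploiting that the greedy decision rules drive the trajectory into the descendant realizing the worst gap, so that convergence below forces the width at $\occ_{\depth+1}$ under $\thr(\depth+1)$. The secondary delicate point is the monotonicity argument in the second paragraph that upgrades the single‑tuple inequality to the equality $\occ^{m,1}_{\depth+1}\cdot\upb\nu^2_{\depth+1}=\upb{V}_{\depth+1}(\occ_{\depth+1})$; everything else is the routine tuple evaluation and reward cancellation above.
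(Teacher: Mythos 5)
Your skeleton is exactly the paper's: evaluate $\upb{W}^1_\depth$ and $\lob{W}^2_\depth$ at the tuples freshly inserted by the trajectory's last update (so that the Lipschitz penalty vanishes, the stored conditional term being the true $\occ^{c,1}_\depth$, resp.\ $\occ^{c,2}_\depth$), subtract so that the common reward term $r(\occ_\depth,\upb\beta^1_\depth,\lob\beta^2_\depth)$ cancels, and then bound $\gamma\big[\occ^{m,1}_{\depth+1}\cdot\upb\nu^2_{\depth+1}-\occ^{m,2}_{\depth+1}\cdot\lob\nu^1_{\depth+1}\big]$ by $\gamma\,\thr(\depth+1)$. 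Up to the reward cancellation your argument is correct and matches the paper's proof line by line.

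The problem is how you finish, and it is a genuine gap rather than a cosmetic one. First, your ``monotonicity'' step is invalid: inserting $\langle\occ^{c,1}_{\depth+1},\langle\delta^2_{\depth+1},\upb\nu^2_{\depth+1}\rangle\rangle$ into $\upb{bagV}_{\depth+1}$ only yields $\upb{V}_{\depth+1}(\occ_{\depth+1})\leq\occ^{m,1}_{\depth+1}\cdot\upb\nu^2_{\depth+1}$ (an older bag element may realize the min), and no invariant about bounds only tightening can upgrade this to the equality you claim. Worse, this inequality points the wrong way for your route: combining $\occ^{m,1}_{\depth+1}\cdot\upb\nu^2_{\depth+1}\geq\upb{V}_{\depth+1}(\occ_{\depth+1})$ (and its symmetric counterpart) with a width bound $\upb{V}_{\depth+1}(\occ_{\depth+1})-\lob{V}_{\depth+1}(\occ_{\depth+1})\leq\thr(\depth+1)$ gives no upper bound at all on the scalar-product gap you must control. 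The paper instead reads $\occ^{m,1}_{\depth+1}\cdot\upb\nu^2_{\depth+1}$ and $\occ^{m,2}_{\depth+1}\cdot\lob\nu^1_{\depth+1}$ directly as the values attached to the terminal point by the update itself (by \Cref{prop|rec|nu} and the proof of \Cref{lemma|ComparisonVAndW} they are exactly the optimal values of the primal/dual LPs solved at $\occ_{\depth+1}$), so their difference is precisely the post-update gap at the end of the trajectory, and it closes with one observation: that gap is below $\thr(\depth+1)$ at the terminal \os{} of every trajectory---either the trajectory was interrupted there because the width had already fallen under the threshold, or $\depth+1=H-1$ and the exact game solved at \cref{alg|oneShot} makes both LP values equal the Nash value of $r(\occ_{H-1},\cdot,\cdot)$, so the gap is $\leq 0<\thr(H-1)$ (your handling of this subcase is fine). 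Second, the identity $\gamma\,\thr(\depth+2)=\thr(\depth+1)-2\radius\lt{\depth+1}$ is machinery from the proof of \Cref{thm|termination}, not of this lemma, and the appeal to ``the greedy decision rules drive the trajectory into the descendant realizing the worst gap'' is not an argument the lemma needs or that the control flow supports. In short, you correctly located the delicate point---reconciling ``$\upb{bagW}^1_\depth$ was fed by an update at $\occ_{\depth+1}$'' with ``the width at $\occ_{\depth+1}$ is under threshold''---but you did not close it; by your own admission the proof is incomplete exactly where the paper's one-line invocation of the trajectory-termination criterion does the work.
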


\begin{proof}
  \label{proof|lemma|OMG-HSVIContraction}
  By definition,
  \begin{align*}
    \upb{W}^1_\depth (\occ_\depth, \beta^1_\depth)
    & = \min_{ \substack{
        \langle \tilde\occ^{c,1}_\depth, \tilde{\beta}^2_\depth, \upb\nu^2_{\depth+1} \rangle \\
        \in \upb{bagW}^1_\depth
      }
    } 
      \beta^1_\depth \cdot \Big(
        r(\occ_\depth, \cdot, \tilde{\beta}^2_\depth)
        + \gamma \Nxt^1_m(\occ_\depth,\cdot, \tilde{\beta}^2_\depth) \cdot \Big[  \upb\nu^2_{\depth+1} 
+  \lt{\depth+1} \vnorm{ \nxt^1_c(\occ^{c,1}_\depth, \beta^2_\depth) - \nxt^1_c(\tilde\occ^{c,1}_\depth, \beta^2_\depth) }_1
        \Big]
    \Big).
  \end{align*}
  Therefore, after the update ($\beta^2_\depth$ and $\beta^1_\depth$
  being added to their respective bags ($\upb{bagW}^1_\depth$ and
  $\lob{bagW}^2_\depth$) along with vectors $\upb\nu^2_{\depth+1}$ and
  $\lob\nu^1_{\depth+1}$),
  \begin{align*}
    \upb{W}^1_\depth (\occ_\depth, \beta^1_\depth)
    & \leq
    \beta^1_\depth \cdot \left[
      r(\occ_\depth, \cdot, \beta^2_\depth)
      + \gamma \Nxt^1_m(\occ_\depth,\cdot, \beta^2_\depth) \cdot \upb\nu^2_{\depth+1} 
    \right], \text{ and} \\
    \lob{W}^2_\depth (\occ_\depth, \beta^2_\depth)
    & \geq
    \beta^2_\depth \cdot \left[
      r(\occ_\depth, \beta^1_\depth, \cdot)
      + \gamma \Nxt^2_m(\occ_\depth, \beta^1_\depth, \cdot) \cdot \lob\nu^1_{\depth+1} 
    \right].
\intertext{Then,}
\upb{W}^1_\depth (\occ_\depth, \beta^1_\depth) - \lob{W}^2_\depth(\occ_\depth,\beta^2_\depth)
    & \leq \left[ \cancel{r(\occ_\depth, \beta^1_\depth, \beta^2_\depth)} + \gamma T_m^1(\occ_\depth, \vbeta_\depth) \cdot \upb\nu^2_{\depth+1} \right] - \left[ \cancel{r(\occ_\depth, \beta^1_\depth, \beta^2_\depth)} +  \gamma T_m^2(\occ_\depth,\vbeta_\depth) \cdot \lob\nu^1_{\depth+1} \right]\\
    & = \gamma \left[ \upb{V}(T(\occ_\depth,\vbeta_\depth)) - \lob{V}(T(\occ_\depth,\vbeta_\depth)) \right] \\
    & \leq \gamma \thr(\depth+1)
    \qquad \qquad \text{(Holds at the end of any trajectory.)}
    \qedhere
  \end{align*}
\end{proof}

\begin{lemma}[Monotonic evolution of $\upb{W}^1_\depth$ and $\lob{W}^2_\depth$]
  \labelT{lemma|DecreaseFunctions}
Let $K\upb{W}^1_\depth$ and $K\lob{W}^2_\depth$ be the approximations
  after an update at $\occ_\depth$ with behavioral \dr
  $\langle \upb\beta^1_\depth, \lob\beta^2_\depth \rangle$
  (respectively associated to vectors $\upb\nu^2_{\depth+1}$ and
  $\lob\nu^1_{\depth+1}$).
Let also $K^{(n+1)}\upb{W}^1_\depth$ and $K^{(n+1)}\lob{W}^2_\depth$ be
  the same approximations after $n$ other updates (in various \os{}s).
Then,
\begin{align*}
\max_{\beta^1_\depth} K^{(n+1)} \upb{W}^1_\depth(\occ_\depth,\beta^1_\depth)
    & \leq \max_{\beta^1_\depth} K\upb{W}^1_\depth(\occ_\depth,\beta^1_\depth)
      \leq \upb{W}^1_\depth(\occ_\depth, \upb\beta^1_\depth) \quad \text{ and} \\
    \min_{\beta^2_\depth} K^{(n+1)} \lob{W}^2_\depth(\occ_\depth,\beta^2_\depth)
    & \geq \min_{\beta^2_\depth} K\lob{W}^2_\depth(\occ_\depth, \beta^2_\depth)
      \geq \lob{W}^2_\depth(\occ_\depth, \lob\beta^2_\depth).
  \end{align*}
\end{lemma}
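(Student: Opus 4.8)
The plan is to reduce the entire statement to one structural fact: by its definition in \Cref{eq|Wupb}, $\upb{W}^1_\depth(\occ_\depth,\cdot)$ is the pointwise minimum (a lower envelope) of the family of functions $\beta^1_\depth \mapsto {\beta^1_\depth}^\t \cdot [\dots]$ indexed by the tuples $w \in \upb{bagW}^1_\depth$, and symmetrically $\lob{W}^2_\depth(\occ_\depth,\cdot)$ is the pointwise maximum (an upper envelope) of the corresponding family over $\lob{bagW}^2_\depth$. Since each per-tuple function depends only on its own tuple and on $\occ_\depth$, and not on the rest of the bag, enlarging the bag can only push the lower envelope down and the upper envelope up. The lemma is then a direct consequence of the monotonicity of $\min$/$\max$ under set inclusion, combined with the fact that $\upb\beta^1_\depth$ (resp. $\lob\beta^2_\depth$) is the greedy maximizer (resp. minimizer) of the pre-update approximation, as produced by the selection LP of \Cref{eq|LP}.

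First I would record the elementary monotonicity fact: if $B \subseteq B'$ and $g_B(\beta)\eqdef\min_{w\in B} f_w(\beta)$, then $g_{B'}(\beta)\le g_B(\beta)$ for every $\beta$, because the minimum over the larger index set $B'$ ranges over at least the same functions. Next I would argue that the three bags involved are nested. Writing $B_0$ for the content of $\upb{bagW}^1_\depth$ just before the update at $\occ_\depth$, $B_1$ for its content immediately after (which, by \crefrange{alg|UpdateFunction}{alg|UpdateFunction|end}, is $B_0$ augmented with one new tuple carrying $\lob\beta^2_\depth$ and $\upb\nu^2_{\depth+1}$), and $B_{n+1}$ for its content after the $n$ subsequent updates (each of which can only insert tuples), we have $B_0 \subseteq B_1 \subseteq B_{n+1}$. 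Pruning does not break this, since a pruning test removes only tuples it certifies as dominated, which by definition leaves the envelope unchanged. Applying the monotonicity fact therefore yields, pointwise in $\beta^1_\depth$,
\[
  K^{(n+1)}\upb{W}^1_\depth(\occ_\depth,\beta^1_\depth)
  \le K\upb{W}^1_\depth(\occ_\depth,\beta^1_\depth)
  \le \upb{W}^1_\depth(\occ_\depth,\beta^1_\depth),
\]
and taking $\max_{\beta^1_\depth}$, which preserves inequalities, gives the first two terms of the claimed chain.

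For the final inequality I would use that $\upb\beta^1_\depth$ is, by construction (\ref{alg|greedP1}, which solves $\lp{\upb{W}^1_\depth}(\occ_\depth)$, \Cref{eq|LP}), a maximizer of the pre-update approximation $\upb{W}^1_\depth(\occ_\depth,\cdot)$; hence $\max_{\beta^1_\depth}\upb{W}^1_\depth(\occ_\depth,\beta^1_\depth)=\upb{W}^1_\depth(\occ_\depth,\upb\beta^1_\depth)$, and combining with the pointwise bound above closes the chain $\max_{\beta^1_\depth}K\upb{W}^1_\depth(\occ_\depth,\beta^1_\depth)\le\upb{W}^1_\depth(\occ_\depth,\upb\beta^1_\depth)$. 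The statement for $\lob{W}^2_\depth$ follows by the mirror-image argument: it is an upper envelope, so inserting tuples raises it pointwise, the monotone operator is now $\min_{\beta^2_\depth}$, and $\lob\beta^2_\depth$ is the greedy minimizer from \ref{alg|greedP2}. The only step demanding care, and the main (mild) obstacle, is the bookkeeping justifying $B_0\subseteq B_1\subseteq B_{n+1}$: one must confirm that every update, including those carried out at other occupancy states and in the $\depth=H-1$ boundary case, interacts with $\upb{bagW}^1_\depth$ only by insertion, and that pruning is value-preserving. Once this is granted, the analytic content is purely envelope monotonicity.
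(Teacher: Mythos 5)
Your proposal is correct and follows essentially the same route as the paper's own proof: you use the fact that each update only inserts tuples into the bags, so the lower envelope $\upb{W}^1_\depth$ (a pointwise $\min$ over the bag) can only decrease pointwise, and you close the chain with the observation that $\upb\beta^1_\depth$ is the greedy maximizer of the pre-update approximation from the selection LP. The only cosmetic difference is that you make the bag-nesting bookkeeping ($B_0\subseteq B_1\subseteq B_{n+1}$, pruning being value-preserving) explicit, whereas the paper writes out the enlarged-bag formula for $K\upb{W}^1_\depth$ and summarizes the subsequent updates as ``the approximation can only be refined''; the analytic content is identical.
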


\begin{proof}
  \label{proof|lemma|DecreaseFunctions}
  Starting from the definition,
\begin{align*} 
    & \max_{\beta^1_\depth} K\upb{W}^1_\depth(\occ_\depth,\beta^1_\depth) \\
    & = \max_{\beta^1_\depth} \hspace{-.5cm} \min_{ \substack{ 
        \langle \tilde\occ^{c,1}_\depth, \beta^2_\depth, \upb\nu^2_{\depth+1} \rangle \in \\
        \upb{bagW}^1_{\depth} \cup \{ \langle \occ^{c,1}_\depth, \lob\beta^2_\depth, \upb\nu^2_{\depth+1} \rangle \}
      }}
    \beta^1_\depth \cdot \bigg[
    r(\occ_\depth,\cdot,\beta^2_\depth)
    + \gamma T_m^1(\occ_\depth,\cdot,\beta^2_\depth) \cdot \Big( 
    \upb\nu^2_{\depth+1}
    +  \lt{\depth+1} \vnorm{ \nxt^1_c(\occ^{c,1}_\depth, \beta^2_\depth) - \nxt^1_c(\tilde\occ^{c,1}_\depth, \beta^2_\depth) }_1
    \Big)
    \bigg] \\
    & \leq \max_{\beta^1_\depth} \min_{\langle \tilde\occ^{c,1}_\depth, \beta^2_\depth, \upb\nu^2_{\depth+1} \rangle \in \upb{bagW}^1_{\depth}}
    \beta^1_\depth \cdot \bigg[
      r(\occ_\depth,\cdot,\beta^2_\depth)
      + \gamma T_m^1(\occ_\depth,\cdot,\beta^2_\depth) \cdot \Big(
    \upb\nu^2_{\depth+1}
    + \lt{\depth+1} \vnorm{ \nxt^1_c(\occ^{c,1}_\depth, \beta^2_\depth) - \nxt^1_c(\tilde\occ^{c,1}_\depth, \beta^2_\depth) }_1
    \Big)
    \bigg] \\
    & = \max_{\beta^1_\depth} \upb{W}^1_\depth(\occ_\depth,\beta^1_\depth) \\
    & = \upb{W}^1_\depth(\occ_\depth, \upb\beta^1_\depth).
  \end{align*}

  Then, this upper bound approximation can only be refined, so that,
  for any $n \in {\mathbf N}$,
  \begin{align*} 
    \forall \beta^1_\depth, \quad
    K^{(n+1)}\upb{W}^1_\depth(\occ_\depth,\beta^1_\depth)
    & \leq K\upb{W}^1_\depth(\occ_\depth,\beta^1_\depth), \\
    \text{thus, }
    \min_{\beta^1_\depth} K^{(n+1)}\upb{W}^1_\depth(\occ_\depth,\beta^1_\depth)
    & \leq \min_{\beta^1_\depth} K\upb{W}^1_\depth(\occ_\depth,\beta^1_\depth).
  \end{align*}

  The expected result thus holds for $\upb{W}^1_\depth$, and symmetrically for $\lob{W}^2_\depth$.
\end{proof}

\begin{lemma}
  \labelT{lemma|ComparisonVAndW}
  After updating, in order, $\upb{W}^1_\depth$ and $\upb{V}_\depth$, we have $K\upb{V}_\depth(\occ_\depth) \leq \max_{\beta^1_\depth} K\upb{W}^1_\depth(\occ_\depth,\beta^1_\depth)$.

  After updating, in order, $\lob{W}^2_\depth$ and $\lob{V}_\depth$, we have $K\lob{V}_\depth(\occ_\depth) \geq \min_{\beta^2_\depth} K\lob{W}^2_\depth(\occ_\depth,\beta^2_\depth)$.
\end{lemma}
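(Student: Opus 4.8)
The plan is to read off both sides of the inequality from the single update performed at depth $\depth$, and to recognize the stored vector $\upb\nu^2_\depth$ as a best-response value against the distribution $\delta^2_\depth$ returned by the dual LP.

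First I would evaluate the freshly updated $K\upb{V}_\depth$ at $\occ_\depth$ itself. The depth-$\depth$ update appends to $\upb{bagV}_\depth$ the tuple $\langle \occ^{c,1}_\depth, \langle \delta^2_\depth, \upb\nu^2_\depth \rangle \rangle$, where $\occ^{c,1}_\depth$ is the conditional term of $\occ_\depth$ and where $\langle \delta^2_\depth, \upb\nu^2_\depth \rangle$ is obtained by solving the dual LP on the already updated $\upb{W}^1_\depth = K\upb{W}^1_\depth$. Since $\upb{V}_\depth(\occ_\depth)$ is a minimum over surfaces and, for this particular surface, the Lipschitz penalty is $\lt{\depth} \norm{\occ_\depth - \occ^{m,1}_\depth \occ^{c,1}_\depth}_1 = \lt{\depth} \norm{\occ_\depth - \occ_\depth}_1 = 0$ (using $\occ_\depth = \occ^{m,1}_\depth \occ^{c,1}_\depth$), I obtain
\[
  K\upb{V}_\depth(\occ_\depth) \leq \occ^{m,1}_\depth \cdot \upb\nu^2_\depth .
\]

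Next I would rewrite the right-hand side as a best response. Plugging in the componentwise expression of $\upb\nu^2_\depth$ from \Cref{eq|nu} and weighting by $\occ^{m,1}_\depth(\theta^1_\depth)$ (columns of $M^{\occ_\depth}$ attached to $0$-probability histories being null), and using the bilinear reading of \Cref{prop|bilinearValue},
\[
  \occ^{m,1}_\depth \cdot \upb\nu^2_\depth
  = \sum_{\theta^1_\depth} \max_{a^1} M^{\occ_\depth}_{((\theta^1_\depth, a^1), \cdot)} \cdot \delta^2_\depth
  = \max_{\beta^1_\depth} {\beta^1_\depth}^\t \cdot M^{\occ_\depth} \cdot \delta^2_\depth ,
\]
the last equality holding because a decision rule maximizes the value by selecting, for each history independently, the action with the largest entry.

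The crux is the final step. Since $\delta^2_\depth$ solves the dual LP $\dlp{\upb{W}^1_\depth}(\occ_\depth)$, it is an optimal minimizing strategy for Player $2$ in the matrix game with payoff $M^{\occ_\depth}$; hence Player $1$'s best-response value against it equals the game value, and by LP duality (von Neumann's minimax theorem \citep{Neu-ma28}) this value is $\max_{\beta^1_\depth} \min_{\delta} {\beta^1_\depth}^\t M^{\occ_\depth} \delta = \max_{\beta^1_\depth} K\upb{W}^1_\depth(\occ_\depth, \beta^1_\depth)$, the inner minimum over mixtures coinciding with the minimum over the elements of $\upb{bagW}^1_\depth$ that defines $K\upb{W}^1_\depth(\occ_\depth, \cdot)$. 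Chaining the three displays gives $K\upb{V}_\depth(\occ_\depth) \leq \max_{\beta^1_\depth} K\upb{W}^1_\depth(\occ_\depth, \beta^1_\depth)$, and the statement for $\lob{V}_\depth$ and $\lob{W}^2_\depth$ follows by the mirror argument (swapping $\max \leftrightarrow \min$, Players $1 \leftrightarrow 2$, and the superscripts $1 \leftrightarrow 2$). The main obstacle is precisely this last step: I must justify carefully that the dual-optimal $\delta^2_\depth$ makes the best-response value $\occ^{m,1}_\depth \cdot \upb\nu^2_\depth$ equal to the game value $\max_{\beta^1_\depth} K\upb{W}^1_\depth$, which hinges on the exact primal–dual correspondence between \Cref{eq|LP} and \Cref{eq|DLP}.
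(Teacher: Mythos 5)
Your proof is correct and takes essentially the same route as the paper's: the paper likewise combines the two facts that (i) $\max_{\beta^1_\depth} K\upb{W}^1_\depth(\occ_\depth,\beta^1_\depth) = \occ^{m,1}_\depth \cdot \upb\nu^2_\depth$, read off from the dual-LP solution and \Cref{eq|nu}, and (ii) $K\upb{V}_\depth(\occ_\depth) \leq \occ^{m,1}_\depth \cdot \upb\nu^2_\depth$, because the freshly added tuple $\langle \occ^{c,1}_\depth, \langle \delta^2_\depth, \upb\nu^2_\depth \rangle \rangle$ incurs zero Lipschitz penalty at $\occ_\depth$ itself. Your write-up merely makes explicit the best-response/minimax justification of identity (i), which the paper asserts tersely, so there is no substantive difference.
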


\begin{proof}
  \label{proof|lemma|ComparisonVAndW}
  After updating $\upb{bagW}^1_\depth$, the algorithm computes
  (\Cref{alg|zsPOSGwithLP+VWs+},
  \Cref{line|computeDelta}) a new solution
  $\upb\delta^2_\depth$ of the dual LP (at $\occ^1_\depth$) and the
  associated vector $\upb\nu^2_\depth$, so that
  \begin{align*}
    \max_{\beta^1_\depth} K \upb{W}^1_\depth(\occ^1_\depth, \beta^1_\depth)
    & = \occ^{m,1}_\depth \cdot \upb\nu^2_\depth.
\intertext{This vector will feed $\upb{bagV}_\depth$ along with
      $\occ^1_\depth$, so that}
    K \upb{V}_\depth(\occ_\depth)
    & \leq \occ^{m,1}_\depth \cdot \upb\nu^2_\depth. 
\intertext{As a consequence,}
    K \upb{V}_\depth(\occ_\depth)
    & \leq \max_{\beta^1_\depth} K \upb{W}^1_\depth(\occ_\depth, \beta^1_\depth).
    \end{align*}

    The symmetric property holds for $K\lob{V}_\depth$ and $K\lob{W}^2_\depth$, which concludes the proof.
\end{proof}

\thmTermination*

\begin{proof}
  \label{proof|thm|termination}
We will prove by induction from $\depth=H$ to $0$, that the
  algorithm stops expanding \os{}s at depth $\depth$ after finitely
  many iterations (/trajectories).

First, by definition of horizon $H$, no \os $\occ_H$ is
  ever expanded.
The property thus holds at $\depth=H$.

  Let us now assume that the property holds at depth $\depth+1$ after $N_{\depth+1}$ iterations.
By contradiction, let us assume that the algorithm generates infinitely many trajectories of length $\depth+1$.
Then, because $\Occ_\depth \times {\cal B}_\depth$ is compact, after
  some time the algorithm will have visited
  $\langle \occ_\depth, \vbeta_\depth \rangle$, then, some iterations
  later, $\langle \occ_\depth^{'}, \vbeta_\depth^{'} \rangle$, such
  that $\norm{\occ_\depth - \occ_\depth^{'} }_1 \leq \rho$.
Let us also note the corresponding terminal \os{s} (because
  trajectories beyond iteration $N_{\depth+1}$ do not go further)
  $\occ_{\depth+1} = T(\occ_\depth, \beta_\depth)$ and
  $\occ_{\depth+1}^{'} = T(\occ_\depth^{'}, \vbeta_\depth^{'})$.

Now, we show that the second trajectory should not have happened, \ie, $\upb{V}(\occ_\depth^{'}) - \lob{V}(\occ_\depth^{'}) \leq \thr(\depth)$.

  Combining the previous lemmas,
  \begin{align*}
    \upb{V}(\occ_\depth^{'})
    & \leq \upb{V}(\occ_\depth) + \lt{\depth} \norm{\occ_\depth - \occ_\depth^{'}}_1
    \qquad \qquad \text{(By Lipschitz-Continuity)}\\
    & \leq \max_{\tilde{\beta}^1_\depth} \upb{W}^1_\depth(\occ_\depth,\tilde{\beta}^1_\depth) + \lt{\depth} \norm{\occ_\depth - \occ_\depth^{'}}_1
    \qquad \qquad \text{(\Cref{lemma|ComparisonVAndW})} \\
    & \leq \upb{W}^1_\depth(\occ_\depth,\beta^1_\depth) + \lt{\depth} \norm{\occ_\depth - \occ_\depth^{'}}_1
    \qquad \qquad \text{(\Cref{lemma|DecreaseFunctions})} \\
    & = \upb{W}^1_\depth(\occ_\depth,\beta^1_\depth) + \lt{\depth} \rho.
\intertext{Symmetrically, we also have}
    \lob{V}(\occ_\depth^{'})
    & \geq \lob{W}^2_\depth(\occ_\depth,\beta^2_\depth) - \lt{\depth} \rho.
\intertext{Hence,}
    \upb{V}(\occ_\depth^{'}) - \lob{V}(\occ_\depth^{'})
    & \leq \left( \upb{W}^1_\depth (\occ_\depth, \beta^1_\depth) + \lt{\depth} \rho \right)
    - \left( \lob{W}^2_\depth (\occ_\depth,\beta^2_\depth) - \lt{\depth} \rho \right) \\
    & = \left( \upb{W}^1_\depth (\occ_\depth, \beta^1_\depth) - \lob{W}^2_\depth (\occ_\depth,\beta^2_\depth) \right)
    + 2 \lt{\depth} \rho  \\
    & \leq \gamma \thr(\depth+1) + 2 \lt{\depth} \rho
    \qquad \qquad \text{(\Cref{lemma|OMG-HSVIContraction})} \\
    & = \gamma \left( \gamma^{-(\depth+1)} \epsilon - \sum_{i=1}^{\depth+1} 2 \radius \l_{\depth+1-i} \gamma^{-i}  \right) + 2 \lt{\depth} \rho \\
    & = \gamma^{-\depth} \epsilon - \sum_{i=1}^{\depth+1} 2 \radius \l_{\depth+1-i} \gamma^{-i+1} + 2 \lt{\depth} \rho \\
    & = \gamma^{-\depth} \epsilon - \sum_{j=0}^{\depth} 2 \radius \l_{\depth-j} \gamma^{-j} + 2 \lt{\depth} \rho \\
    & = \gamma^{-\depth} \epsilon - \cancel{ 2 \radius \l_{\depth-0} \gamma^{-0} } - \sum_{j=1}^{\depth} 2 \radius \l_{\depth-j} \gamma^{-j} + \cancel{ 2 \lt{\depth} \rho } = \thr(\depth).
  \end{align*}
  Therefore, $\occ^{'}_\depth$ should not have been expanded. This shows that the algorithm will generate only a finite number of
  trajectories of length $\depth$.
\end{proof}

\subsubsection{Handling Infinite Horizons}
\label{proofLemFiniteTrials}

\lemFiniteTrials*

\begin{proof}{(detailed version)}
  \label{proof|lem|finiteTrials}
  Since $W$ is the largest possible width, any trajectory stops in the
  worst case at depth $\depth$ such that
  \begin{align*}
\thr(\depth) & < \WUL \\
\gamma^{-\depth}\epsilon - 2 \radius \l^\infty \frac{\gamma^{-\depth}-1}{1-\gamma}
    & < \WUL 
    & \text{(from \Cshref{eq|thr})} \\
\gamma^{-\depth} \epsilon - 2 \radius \l^\infty \frac{\gamma^{-\depth}}{1-\gamma} 
    - 2 \radius \l^\infty \frac{-1}{1-\gamma}
    & < \WUL \\
\gamma^{-\depth} \underbrace{\left(\epsilon - \frac{2 \radius \l^\infty}{1-\gamma} \right)}_{>0 \quad \text{(\Cshref{lem|MaxRadius})}}
    & < \WUL - \frac{2 \radius \l^\infty }{1-\gamma}  \\
\gamma^{-\depth} 
    & < \frac{
      \WUL - \frac{2 \radius \l^\infty }{1-\gamma}
    }{
      \epsilon - \frac{2 \radius \l^\infty}{1-\gamma}
    }\\
\exp(-\depth\ln(\gamma)) 
    & < \exp\left(\ln\left(\frac{
          \WUL - \frac{2 \radius \l^\infty }{1-\gamma}
        }{
          \epsilon - \frac{2 \radius \l^\infty}{1-\gamma}
        }\right)\right)\\
-\depth\ln(\gamma)
    & < \ln\left(\frac{
          \WUL - \frac{2 \radius \l^\infty }{1-\gamma}
        }{
          \epsilon - \frac{2 \radius \l^\infty}{1-\gamma}
        }\right)\\
\depth\ln(\gamma)
    & > \ln\left(\frac{
          \epsilon - \frac{2 \radius \l^\infty}{1-\gamma}
        }{
          \WUL - \frac{2 \radius \l^\infty }{1-\gamma}
        }\right)\\
\depth 
    & <
    \log_{\gamma}\left(\frac{
        \epsilon - \frac{2 \radius \l^\infty}{1-\gamma}
      }{
        \WUL - \frac{2 \radius \l^\infty }{1-\gamma}
      }\right).
\qedhere
  \end{align*}
\end{proof}

Even if the problem horizon is infinite, trajectories will thus have bounded length.
Then, everything beyond this {\em effective} horizon will rely on the
upper- and lower-bound initializations and the corresponding
strategies.

\section{Experiments}
\label{sec|Experiments}

This section provides (i) information regarding the benchmark problems at hand in
\Cref{tab|BenchmarkNumberStatesActionsObservations} and (ii) supplemental experimental results.

\begin{table}
    \caption{Number of states/actions/observations for each benchmark problem}
    \centering
    \label{tab|BenchmarkNumberStatesActionsObservations}
\begin{tabular}{l|cc@{ }cc@{ }c}
      \toprule
& $\mathcal{S}$ &  $\mathcal{A}^1$ &  $\mathcal{A}^2$ & $\mathcal{O}^1$ & $\mathcal{O}^2$ \\
\midrule
      Competitive Tiger & 2 & 4 & 4 & 3 & 3  \\
      Adversarial Tiger & 2 & 3 & 2 & 2 & 2 \\
      Recycling Robot & 4 & 3 & 3 & 2 & 2 \\
      Mabc & 4 & 2 & 2 & 2 & 2 \\
      \bottomrule
\end{tabular}
\end{table}

Graphs in \Cref{fig|ConvergenceGraphs} show how the upper- and
lower-bounding values at $\occ_0$ (\ie, $\upb{V}_0(\occ_0)$ (red
curve) and $\lob{V}_0(\occ_0)$ (blue curve)) evolve as a function of
the number of iterations, here considering the same benchmarks and
time horizons as in \Cref{sec|XPs} (except for $H=2$).

As expected, these bounds monotonically converge toward the optimal
value (here provided by Sequence Form in all cases but Recycling Robot
for $H=6$).
This convergence would be symmetric in Competitive Tiger, the only
symmetric game, if it were not for the algorithm breaking ties in a
biased manner when multiple equivalent solutions are possible.

The dotted red (respectively blue) curve is obtained by simply
removing from $\upb{bagV}_0$ and (resp. $\lob{bagV}_0$) its initial
element.
This curve somehow allows better observing the actual (hidden)
progress of the convergence at the beginning since the resulting value
is always obtained (even in first iterations) from updates.
A solid curve is thus constant until joined by the corresponding
dotted curve.
In the case of Adversarial Tiger, the ``merge'' takes place earlier for the upper bound than for the lower bound for $H=5$, while the opposite is observed for $H=4$.
Such a difference might be due to the optimal strategies being
possibly very different when the horizon is extended (\eg, as also
observed in the Tiger POMDP, or the Multi-Agent Tiger Dec-POMDP), but
needs to be further investigated.

Note also that iteration durations are expected to increase with the
time horizon, so that it is surprising to observe similar numbers of
iterations in 24 hours in both Competitive Tiger ($H=4$ vs $5$) and
Recycling Robot ($H=5$ vs $6$).
This phenomenon is currently under investigation, but probably linked
to the very efficient compression in these two problems.

\newcommand\evolGraphCaption[1]{Evolution of the upper- and
  lower-bound value $\upb{V}_0(\occ_0)$ (in red) and
  $\lob{V}_0(\occ_0)$ (in blue) of \omgHSVIlccc{} for the #1 problem
  as a function of the number of iterations (generated trajectories).
Optimal value found by Sequence Form LP in green for reference (when available). 
}
\newcommand\evolGraphSubCaption[1]{#1}
\def\evolGraphScale{0.33} 

\newcommand{\dotvfill}{\par\leaders\hbox{$\cdot$}\vfill}

\begin{figure}\centering
    \includegraphics[width=\evolGraphScale\linewidth]{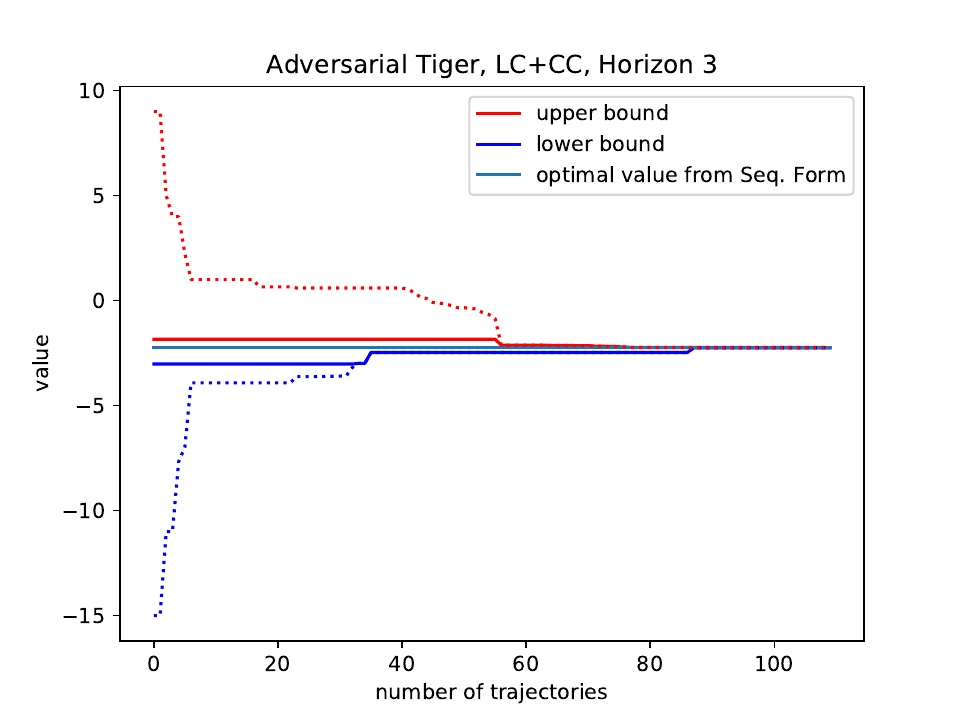}\hfill
    \includegraphics[width=\evolGraphScale\linewidth]{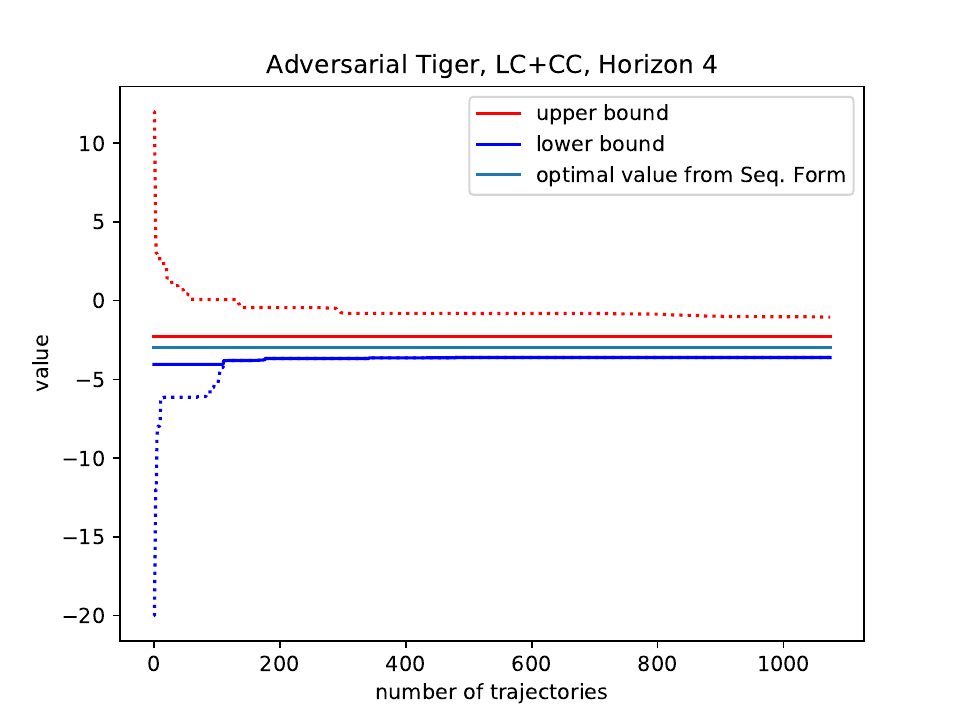}\hfill
    \includegraphics[width=\evolGraphScale\linewidth]{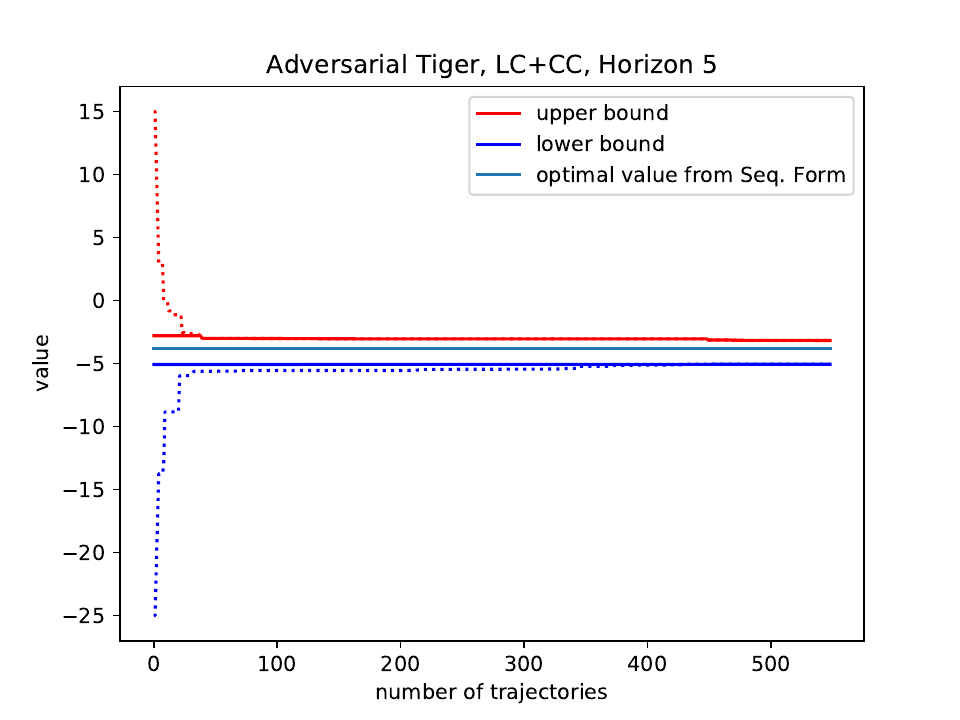}

    \dotfill (a) \evolGraphSubCaption{Adversarial Tiger}\dotfill 
\label{fig|ConvergenceAdvTiger}

    \centering
    \includegraphics[width=\evolGraphScale\linewidth]{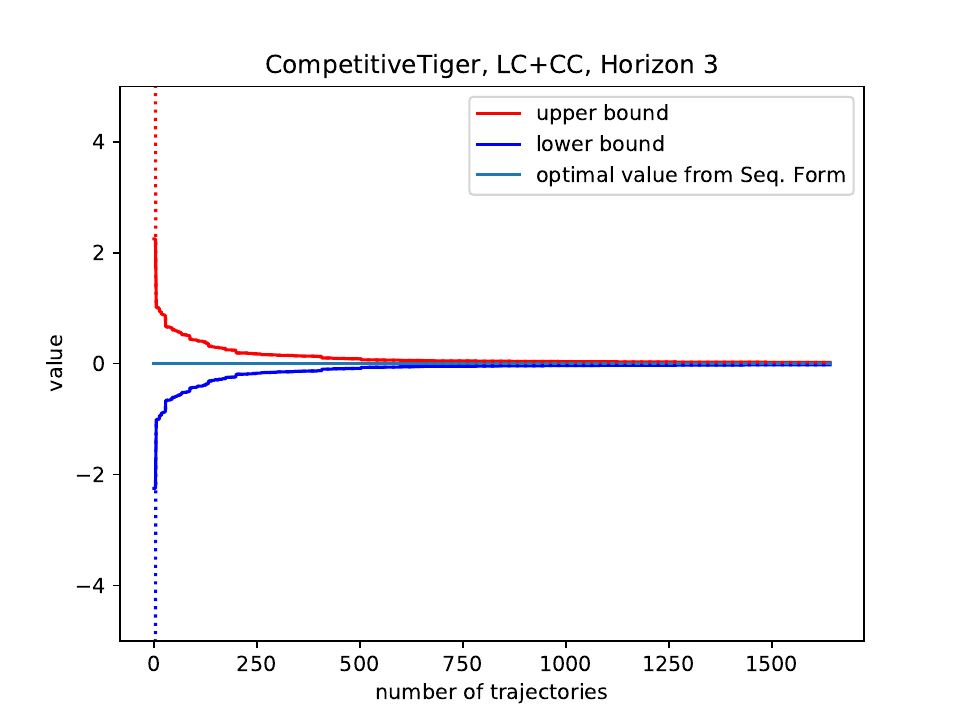}\hfill
    \includegraphics[width=\evolGraphScale\linewidth]{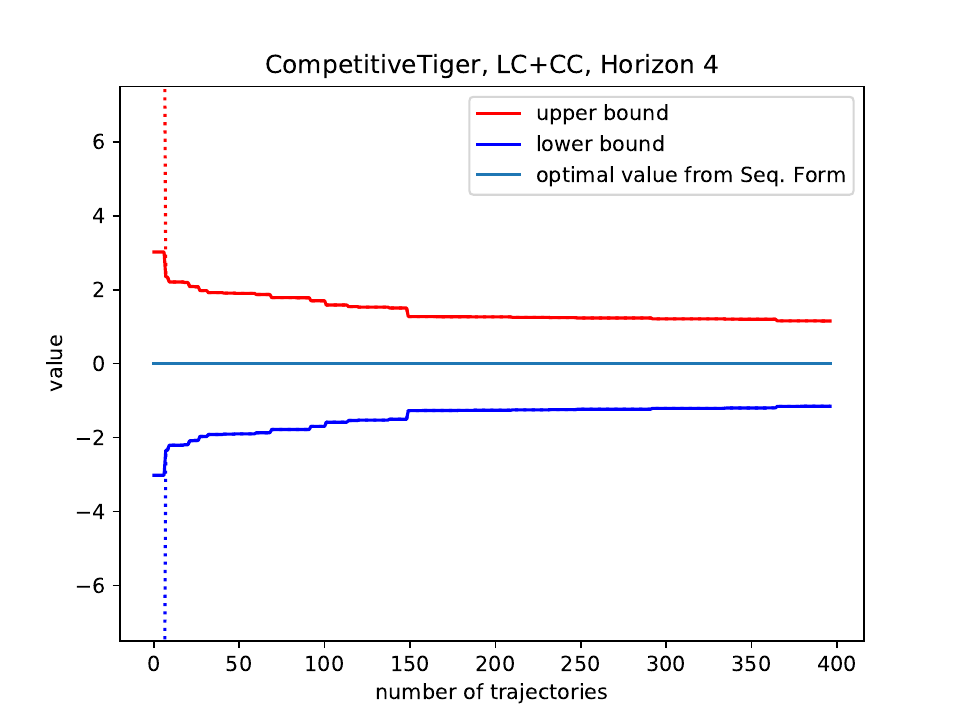}\hfill
    \includegraphics[width=\evolGraphScale\linewidth]{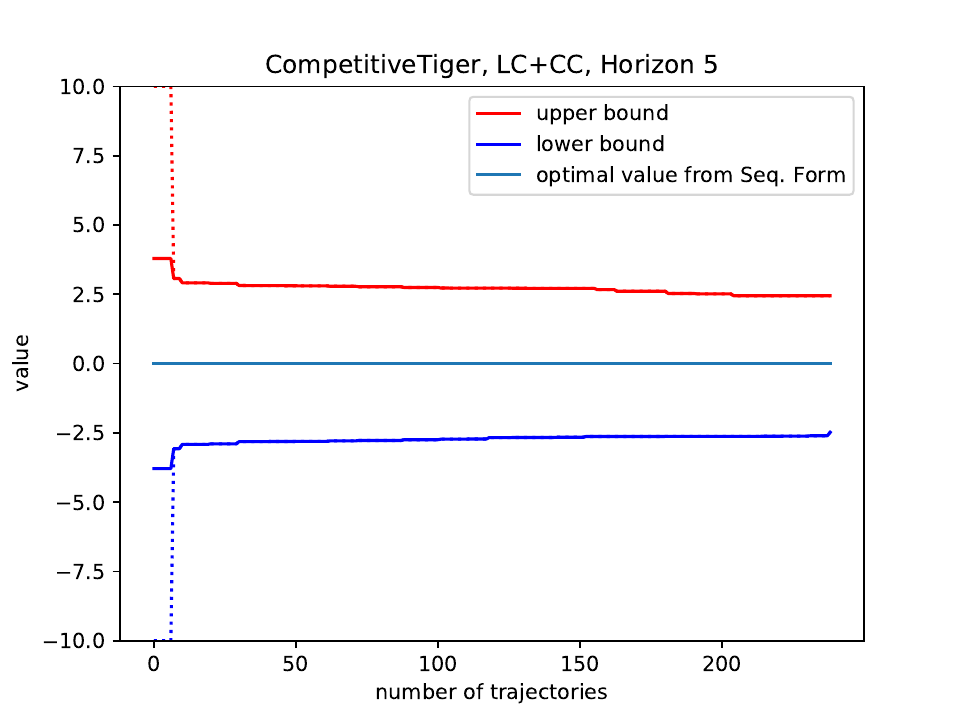}

    \dotfill (b) \evolGraphSubCaption{Competitive Tiger}\dotfill 
\label{fig|ConvergenceCompTiger}

    \centering
    \includegraphics[width=\evolGraphScale\linewidth]{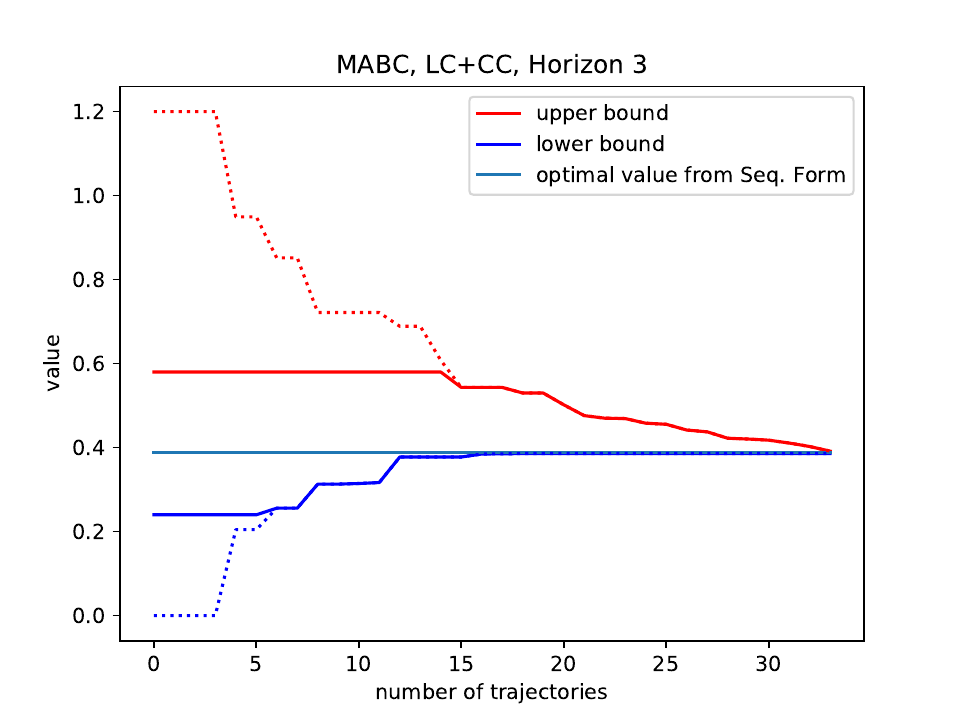}\hfill
    \includegraphics[width=\evolGraphScale\linewidth]{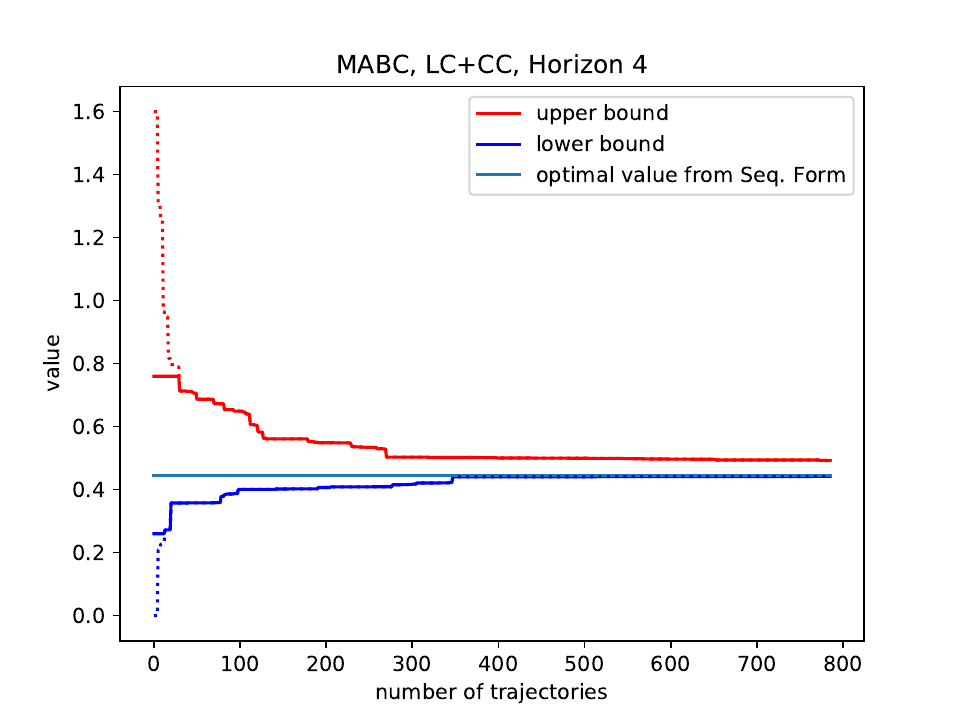}\hfill
    \includegraphics[width=\evolGraphScale\linewidth]{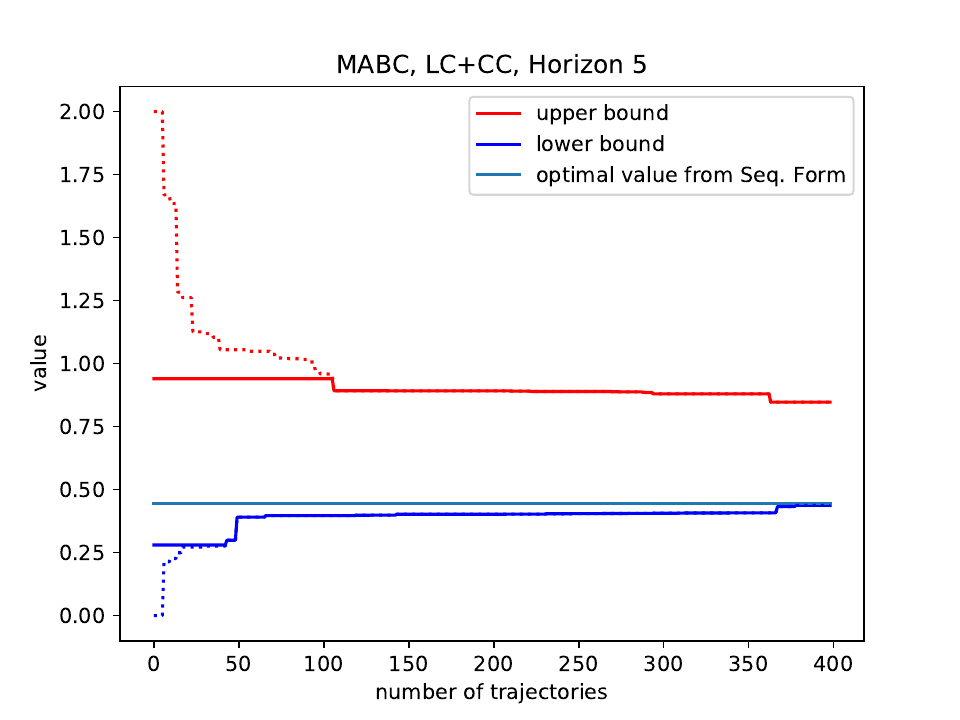}

    \dotfill (c) \evolGraphSubCaption{Mabc}      \dotfill 
    
    \includegraphics[width=\evolGraphScale\linewidth]{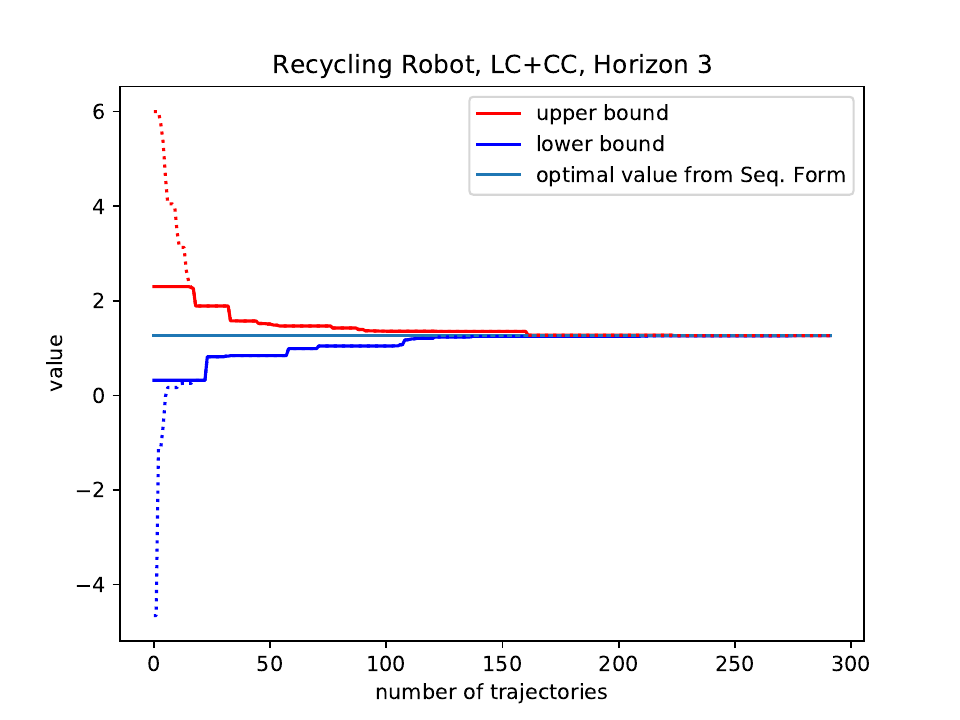}\hfill
    \includegraphics[width=\evolGraphScale\linewidth]{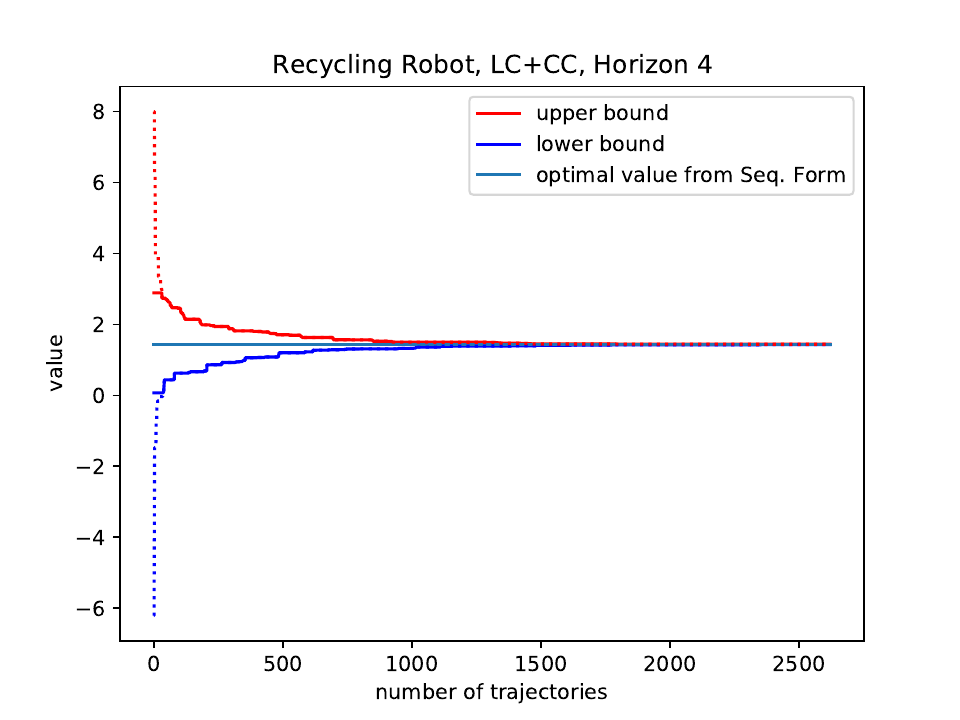}\hfill
    \includegraphics[width=\evolGraphScale\linewidth]{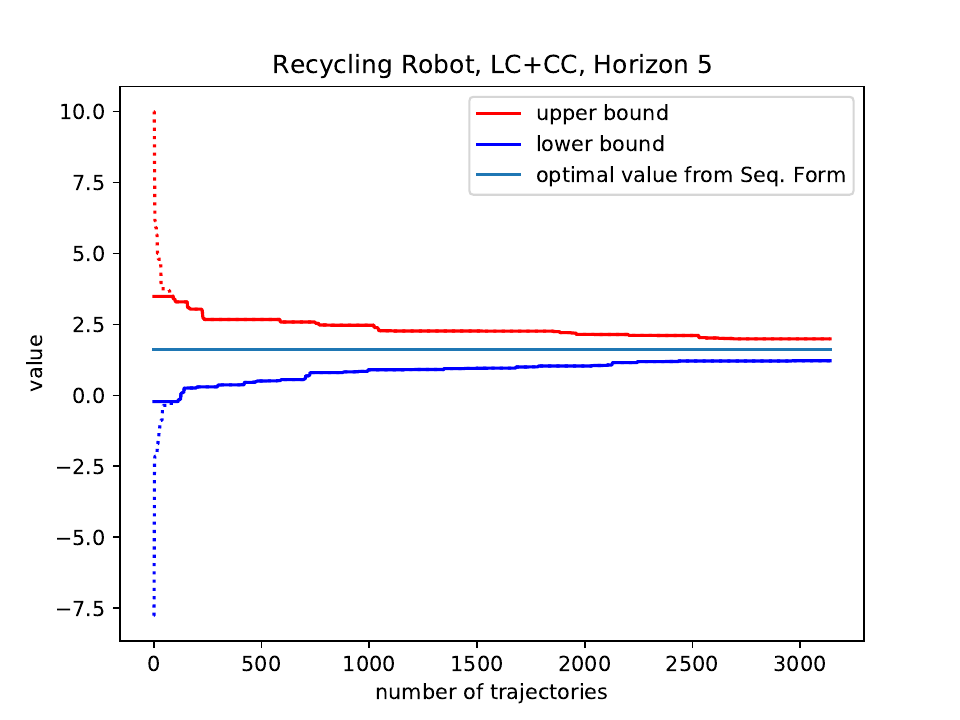}

    \includegraphics[width=\evolGraphScale\linewidth]{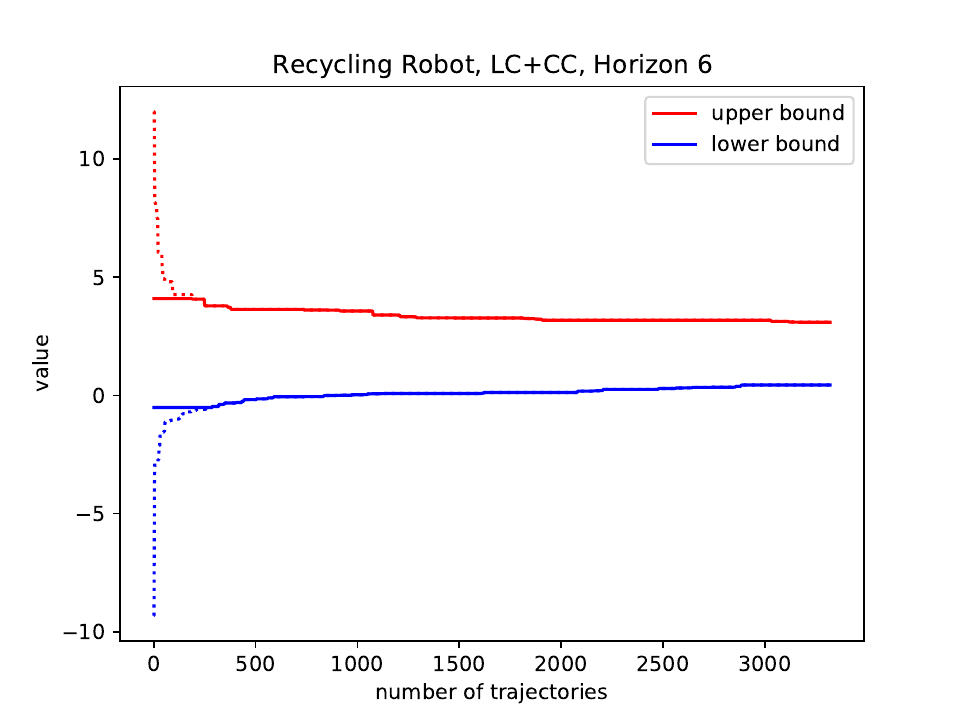}

    \dotfill (d) \evolGraphSubCaption{Recycling Robot}       \dotfill

\label{fig|ConvergenceRecyclingRobot}

    \centering
    \caption{
      Evolution of the upper- and
      lower-bound value $\upb{V}_0(\occ_0)$ (in red) and
      $\lob{V}_0(\occ_0)$ (in blue) of \omgHSVIlccc{} for the various benchmark problems
      as a function of the number of iterations (generated trajectories).
Dotted curves: same bounding approximations when removing each bag's ($\upb{bagV}_0$ and $\lob{bagV}_0$) initial elements.
In green: reference optimal value found by Sequence Form LP
      (when available).
}
\label{fig|ConvergenceGraphs}\end{figure}

\section{Relying on Lipschitz-Continuity Alone}
\label{app|LipschitzOnly}

This appendix demonstrates that one can derive an other version of
HSVI for solving zs-POSGs using only $V^*$ Lipschitz-continuity
(\Cref{cor|V|LC|occ}, p.~\pageref{cor|V|LC|occ}).
To that end, we
\begin{enumerate}
\item first describe appropriate LC upper and lower bound
  approximations; \item then discuss the various operators (initialization, update and
  pruning) required by these approximations; and
\item finally explain how the local game faced in each visited
  $\occ_\depth$ is solved as bi-level optimization problem with an
  error-bounded algorithm based on \citeauthor{Munos-ftml14}'
  Deterministic Optimistic Optimization (DOO) algorithm
  (\citeyear{Munos-ftml14}).
\end{enumerate}

We focus here on upper bounding $V^*$, as lower bounding is a
symmetrical problem.

\subsection{LC-only Approximation of $V^*$}

$V^*_\depth$ being LC, we define an upper bound approximation
$\upb{V}_\depth$ at depth $\depth$ as a lower envelope of (i) an initial upper bound $\upb{V}^{(0)}_\depth(\occ_\depth)$ and (ii) downward-pointing L1-cones, where an upper-bounding cone
$\upb{c} = \langle \upb{\occ}_\depth, \upb{v}_\depth \rangle $---located at
$\upb \occ_\depth$, with ``summit'' value $\upb{v}_\depth$, and slope
$\l_{(H-\depth)}$---induces a function
$\upb{V}^{(\upb c)}_\depth(\occ_\depth) \eqdef \upb{v}_\depth + \l_{(H-\depth)}
\norm{\upb\occ_\depth-\occ_\depth}_{\p}$.
Formally, the set of cones being denoted
$\upb{bagC}_\depth$,
\begin{align*}
  \upb{V}_\depth(\occ_\depth)
  & = \min\{ \upb{V}^{(0)}_\depth(\occ_\depth), \min_{\upb c \in \upb{bagC}_\depth} \upb{V}^{(\upb c)}_\depth(\occ_\depth) \}.
\end{align*}

\subsection{Related Operators}

\paragraph{Initialization}
With such an approximation, the same relaxations proposed in
\Cshref{sec|relatedOperators} can be used to initialize $\upb{V}$.
Note that that the initialization is required to
\begin{itemize}
\item be a LC function (because the LC property is needed to solve
  local games); and
\item come with a default decision rule at each time step for $2$
  (that shall be used as a default (safe) strategy to execute).
\end{itemize}

\paragraph{Update}
Then, updating this approximation when in $\occ_\depth$ requires
solving the (infinite) local game defined by
$\upb{Q}_\depth(\occ_\depth,\beta^1_\depth,\beta^2_\depth) \eqdef r(\occ_\depth,\beta^1_\depth,\beta^2_\depth) + \gamma \upb{V}_\depth(\occ_\depth,\beta^1_\depth,\beta^2_\depth))$,
what is enabled by the following property.

\begin{restatable}[Proof in \extCshref{proofLemQQLC}]{lemma}{lemQQLC}
  \labelT{lem|QQ|LC}
In any $\occ_\depth$, the induced local game
  $\upb Q_\depth(\occ_\depth,\beta^1_\depth,\beta^2_\depth) \eqdef r(\occ_\depth,\beta^1_\depth,\beta^2_\depth) + \gamma \upb
  V_{\depth+1}(T(\occ_\depth,\beta^1_\depth,\beta^2_\depth))$
  is \ifextended{Lipschitz continuous}{LC} in both $\beta^1_\depth$
  and $\beta^2_\depth$.
\end{restatable}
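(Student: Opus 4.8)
The plan is to exploit the compositional structure of $\upb Q_\depth$ together with the linearity and Lipschitz properties already established for $r$ and $\nxt$. Writing
\[
  \upb Q_\depth(\occ_\depth,\beta^1_\depth,\beta^2_\depth)
  = r(\occ_\depth,\beta^1_\depth,\beta^2_\depth)
  + \gamma\,\upb V_{\depth+1}\bigl(\nxt(\occ_\depth,\beta^1_\depth,\beta^2_\depth)\bigr),
\]
I would treat the two summands separately, fixing $\occ_\depth$ and $\beta^2_\depth$ throughout and establishing Lipschitz continuity in $\beta^1_\depth$; the argument for $\beta^2_\depth$ is symmetric. Three ingredients are needed: (i) $r(\occ_\depth,\cdot,\beta^2_\depth)$ is linear in $\beta^1_\depth$ (proof of \Cref{lem|occSufficient}); (ii) the map $\beta^1_\depth \mapsto \nxt(\occ_\depth,\beta^1_\depth,\beta^2_\depth)$ is linear in $\beta^1_\depth$ (\Cref{lem|occ|lin}); and (iii) $\upb V_{\depth+1}$ is Lipschitz continuous in its argument.

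Item (iii) I would justify from the definition of the LC-only approximation itself: $\upb V_{\depth+1}$ is the pointwise minimum of the initial bound $\upb V^{(0)}_{\depth+1}$ (which is required to be LC) and a family of L1-cones, each Lipschitz continuous with the common slope used in their definition. Since a pointwise minimum of functions sharing a single Lipschitz constant is itself Lipschitz with that constant, $\upb V_{\depth+1}$ is $\lt{\depth+1}$-LC in $\occ_{\depth+1}$ (in $1$-norm). For the value summand I would then chain the bounds: a linear map on the bounded, finite-dimensional decision-rule simplex is Lipschitz, so there is a finite constant $C$ with $\norm{\nxt(\occ_\depth,\beta^1_\depth,\beta^2_\depth)-\nxt(\occ_\depth,\tilde\beta^1_\depth,\beta^2_\depth)}_1 \le C\norm{\beta^1_\depth-\tilde\beta^1_\depth}_1$; composing with the $\lt{\depth+1}$-LC function $\upb V_{\depth+1}$ gives
\[
  \bigl|\upb V_{\depth+1}(\nxt(\occ_\depth,\beta^1_\depth,\beta^2_\depth))
  -\upb V_{\depth+1}(\nxt(\occ_\depth,\tilde\beta^1_\depth,\beta^2_\depth))\bigr|
  \le \lt{\depth+1}\,C\,\norm{\beta^1_\depth-\tilde\beta^1_\depth}_1 .
\]
The reward summand, being linear in $\beta^1_\depth$, is likewise Lipschitz in $\beta^1_\depth$ with some finite constant $C_r$.

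Summing the two Lipschitz bounds shows that $\upb Q_\depth(\occ_\depth,\cdot,\beta^2_\depth)$ is Lipschitz in $\beta^1_\depth$ with constant $C_r+\gamma\,\lt{\depth+1}\,C$, and the same reasoning with the roles of the players swapped yields Lipschitz continuity in $\beta^2_\depth$. I expect no genuine obstacle here: every component property has already been proved earlier, and the only real content is the two elementary facts that the composition of a Lipschitz function with a linear (hence Lipschitz) map is Lipschitz, and that a pointwise minimum of uniformly-Lipschitz functions is Lipschitz. The one point deserving a little care is confirming that $\upb V_{\depth+1}$ inherits a single, uniform Lipschitz constant from its definition (so that taking the lower envelope does not degrade the bound), which is why I would state explicitly that the initialization is chosen to be LC and that all cones share the same slope.
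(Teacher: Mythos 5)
Your proof is correct and takes essentially the same route as the paper's: decompose $\upb Q_\depth$ into the reward term and the value term, use linearity of $r$ and $\nxt$ in each $\beta^i_\depth$ together with the $\lt{\depth+1}$-Lipschitz continuity of $\upb V_{\depth+1}$, and sum the two resulting Lipschitz bounds. The only notable difference is that the paper makes the constants explicit ($\l_r = \frac{r_{\max}-r_{\min}}{2}$ and a transition constant of $1$, yielding $\l^Q_\depth = \l_r + \gamma \lt{\depth+1}$, which is needed downstream by DOO), whereas you leave them generic but give a cleaner justification of why the approximation $\upb V_{\depth+1}$ itself (as a lower envelope of uniformly-Lipschitz cones), and not just $V^*$, is Lipschitz continuous.
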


\label{proofLemQQLC}

\begin{proof}
  \label{proof|lem|QQ|LC}
In this local game's definition,
  \begin{align*}
    \upb Q_\depth(\occ_\depth,\beta^1_\depth,\beta^2_\depth)
    & \eqdef r(\occ_\depth, \beta^1_\depth, \beta^2_\depth)
      + \gamma \upb V_{\depth+1}(\nxt(\occ_\depth, \beta^1_\depth, \beta^2_\depth)),
  \end{align*}
  \begin{itemize}
  \item the first term (reward-based) is
$\l_r$-LC in each $\beta^i_\depth$ (\Cref{lem|occSufficient},
    p.~\pageref{lem|occSufficient}), with
    $\l_r= \frac{r_{\max} - r_{\min}}{2}$, and
\item the second term is $(\gamma \cdot \l_{\depth+1} \cdot 1)$-LC in each
    $\beta^i_\depth$, 
due to (i) $\upb{V}_{\depth+1}$ being $\lt{\depth+1}$-LC (\Cref{cor|V|LC|occ},
    p.~\pageref{cor|V|LC|occ}), and (ii) $\nxt$ being linear in $\beta^i_\depth$
    (\Cref{lem|occSufficient}, p.~\pageref{lem|occSufficient}).
  \end{itemize}
$Q_\depth$ is thus $\l^{Q}_\depth$-LC with
  $\l^{Q}_\depth = \l_r + \gamma \cdot \lt{\depth+1}$.
\end{proof}

We provide an algorithm for solving such a bi-level optimization
problem given the known Lipschitz constant in \Cref{sec|DOO}.
It returns $\upb\beta^1_\depth$ to guide the trajectory, and the
associated value $\upb{v}$ to create a new cone at $\occ_\depth$.

\paragraph{Execution}

But let us point out that $\upb{v}$ is a worst expected value for $2$
if (i) applying $\upb\beta^2_\depth$, solution of the dual bi-level
optimization problem, and (ii) then acting so as to obtain at most
$\upb{V}_{\depth+1}(\nxt(\occ_\depth, \upb\beta^1_\depth,
\upb\beta^2_\depth))$.
We see here appearing again a recursive definition of strategies for
$2$ with guaranteed worst-case value.
This leads to storing along with each upper-bounding cone both (i) $\upb\beta^2_\depth$ and (ii) a pointer to the (single) cone at $\depth+1$ (or the initial
value approximation) involved in computing $\upb{v}$, which will allow
deducing decision rules to apply from $\depth+1$ on.

Then, once the algorithm has converged, $2$ (resp. $1$) can retrieve a
solution strategy to perform using the (recursive) strategy attached
to $\upb{V}_0(\occ_0)$ (resp.  $\lob{V}_0(\occ_0)$).

\paragraph{Pruning}

In this setting where all cones have the same slope, a cone
$\upb{c} = \langle \upb{\occ}_\depth, \upb{v}_\depth \rangle $ of
$\upb{V}_\depth$ can be pruned if and only if it is dominated by
another cone at $\upb\occ_\depth$.

\subsection{Bi-level Optimization with DOO}
\label{sec|DOO}

The discussion below explains how to compute an optimal
strategy profile $\langle \beta^1_\depth, \beta^2_\depth \rangle$ at
any time step $\depth$ for any occupancy state $\occ_\depth$ for the
local game
$\upb Q_\depth(\occ_\depth,\beta^1_\depth,\beta^2_\depth) \eqdef
r(\occ_\depth, \beta^1_\depth, \beta^2_\depth) + \gamma \upb
V_{\depth+1}(\nxt(\occ_\depth, \beta^1_\depth, \beta^2_\depth))$. 
Unfortunately, $\upb Q_\depth(\occ_\depth,\beta^1_\depth,\beta^2_\depth)$ is not convex or concave, so that a method relying on differentiability could return a local optimum instead of the optimal value.
Moreover, we require the algorithm to find an $\epsilon$-optimal solution in finite time, asymptotic convergence guarantees being insufficient.

\paragraph{Lipschitz Optimization}
\citeauthor{Munos-ftml14}' DOO (Deterministic Optimistic Optimization) algorithm (\citeyear{Munos-ftml14}) is a finite-time $\epsilon$-optimal algorithm that computes $\max_{x} f(x)$ for a $\lambda$-Lipschitz function $f$ (see \Cref{alg|DOOBiDOO}, top).
It iteratively covers up a compact search space by other compact sets
(the current cover being here denoted $(R_i)_{i\in \cI}$) on which the
optimal value function can be upper bounded thanks to its Lipschitz
continuity.
Each set $R_i$ is attached a reference point $x_i$, its value
$f(x_i)$, and its radius $r_i$,\footnote{$R_i$ is contained in the
  ball of center $x_i$ and radius $r_i$.} so that the value of $f$ on
$R_i$ is upper bounded by $f(x_i) + \l r_{i}$.
With this, the algorithm starts with a few compact sets and
iteratively subdivides the one whose upper bound is the largest.
Repeating this, the algorithm converges towards an $\epsilon$-optimal solution in finite time.
\Cref{fig|DOOTree} \citep{Munos-ftml14} represents a subdivision tree for the optimisation of $f(x) = [sin(13x)sin(27x)+1]/2$. Note that, except for the subdividing process that will be discussed later, this algorithm is generic.

\begin{algorithm}
  \caption{DOO \& BiDOO}
\label{alg|DOOBiDOO}
\DontPrintSemicolon
  \SetKwInOut{Input}{input}
  \SetKwInOut{Output}{output}
  \SetKwFunction{DOO}{{\bf DOO}}
  \SetKwFunction{BiDOO}{{\bf BiDOO}}
  \SetKwFunction{ScoBiDOO}{{\bf ScoBiDOO}}
\Fct{\DOO{$\Big[ \mathcal{D} \to \reals \,;\, x\mapsto f(x) \Big], \epsilon, \l$}}{ \Input{$f: \reals \to \reals$ is a $\lambda$-Lipschitz function} Initialize $\cI$ and $(R_i)_{i \in \cI}$ s.t. $\mathcal{D}  \subseteq \cup_{i \in \cI} R_i$ \;
    \While{ $\max_{i \in \cI} \left( f(x_i) + \l r_{i} \right) - \max_{i \in \cI} f(x_i) > \epsilon$}{
      $i^* \gets \argmax_{i \in \cI} f(x_i) + \l r_i$\;
      Subdivide $R_{i^*}$ into $\cup_{j \in \cI^*} R_j \quad (\supseteq R_{i^*})$  \tcp*{\scriptsize \Cref{theo|IntersectionHypercubeSimplexe} allows building $\cI^*$ in the case of simplexes.}
$\forall j \in \cI^*, x_j \gets Center(R_j)$\;
      $\cI \gets [\cI \setminus i^*] \cup \cI^{*}$\;
    }
    \Return{$\langle \arg \& \max_{x_i} f(x_i) \rangle $}\;
  }
\medskip
\Fct{\BiDOO{$\Big[  \mathcal{D}_x\times\mathcal{D}_y \to \reals \,;\, x, y \mapsto f(x,y) \Big], \epsilon_1, \epsilon_2, \l)$}}{ \Input{ $f: \mathcal{D}_x\times\mathcal{D}_y \to \reals$ a $\lambda$-Lipschitz function (in both $x$ and $y$)} $\langle x_{\max}, v_{\max} \rangle \gets$ \DOO{ $\Big[ \mathcal{D}_x \to \reals \,;\, x \mapsto -getMax\Big( $ { \DOO{$ \Big[ \mathcal{D}_y \to \reals \,;\, y \mapsto -f(x,y) \Big], \epsilon_2, \l$} } $ \Big) \Big]$, \linebreak \phantom{$\langle x_{\max}, v_{\max} \rangle \gets$ \DOO(}
      $ \epsilon_1, \l $}\; \tcc{\scriptsize $getMax(\cdot,\cdot)$ here returns its second argument, \ie, the maximum of the inner DOO computation.}
  }
  
  \Return{$\langle x_{\max}, v_{\max} \rangle$}\;

\end{algorithm}

\begin{figure}
    \centering
\includegraphics[width=0.45\textwidth]{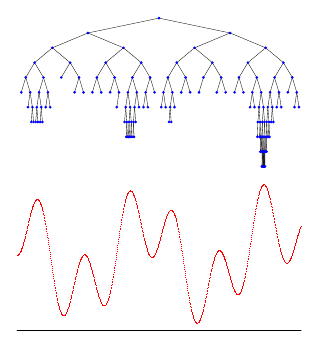}
    \caption{Example tree obtained while maximizing $f(x) = [sin(13x)sin(27x)+1]/2$ (Fig.~3.7 from \citet{Munos-ftml14})}
    \label{fig|DOOTree}
\end{figure}

\paragraph{Lipschitz bi-level Optimization}
We solve our bi-level optimization problem (finding $\max_{x} \min_{y} f(x,y)$ for a $\lambda$-Lipschitz function $f$) by using two nested DOO processes, \ie,
\begin{itemize}
\item an outer $\epsilon_1$-optimal DOO maximizing the function
  $x \mapsto \min_y f(x,y)$, using the solution of 
\item an inner $\epsilon_2$-optimal DOO minimizing the function
  $y \mapsto f(x,y)$ for fixed $x$
\end{itemize}
(see \Cref{alg|DOOBiDOO}, bottom).
\citeauthor{Munos-ftml14}' proof straightforwardly adapts to this
case.
The final error is then $\epsilon=\epsilon_1+\epsilon_2$.

\begin{figure}
    \centering
    \includegraphics[width=0.75\textwidth]{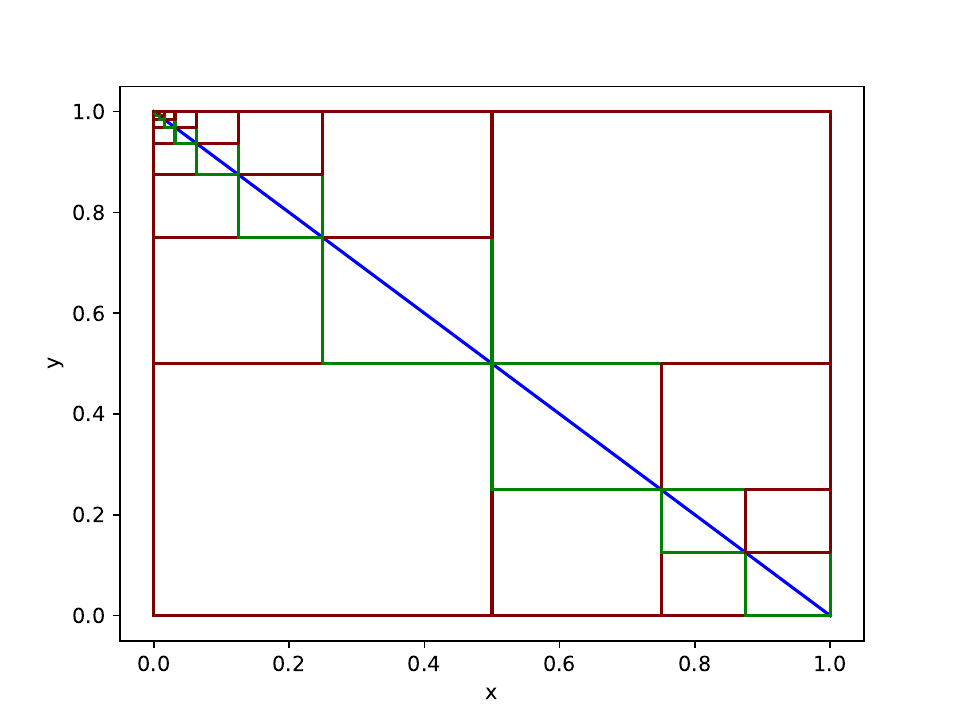}
    \caption{Example of iterative subdivision of the 2D simplex using squares. The optimized function is $(x,y) \to x + 1.01y^2$. The blue diagonal represents valid probability distributions (\ie, $x+y=1$). Ignored hypercubes are colored in brown and retained ones in green.}
    \label{fig|IntersectionHypercubeSimplex}
\end{figure}

\paragraph{Subdividing Probability Simplexes}
In our setting, we need to subdivide an $n$-dimensional probability
simplex (in fact, one $|\cA|$-dimensional simplex per individual
action-observation history).
To that end, we propose starting with the $n$-dimensional hypercube (\ie a closed ball $B(x_i,r_i) = \{x \in \mathbb{R}^n \mid \norm{ x - x_i }_{\infty} \leq r_i\}$ of radius $r_i$ and whose center is $x_i$)
that contains this simplex (with lowest corner $(0, \dots, 0)$ and
highest corner $(1, \dots, 1)$) and subdividing it in $2^n$ smaller
hypercubes, but only keeping the ones that intersect with the simplex.
Then, for a given hypercube, one will evaluate the function to
optimize at the center of the intersection between the hypercube and
the simplex.
The following theorem shows how to determine whether an hypercube intersects the simplex or not, and how to compute its reference point.

\begin{theorem}[Intersection bewteen the $n$-dimensional unit simplex and an $n$-dimensional hypercube]
  \labelT{theo|IntersectionHypercubeSimplexe}
Let $n \in \mathbb{N}^* \setminus \{1\}$.
Let $H$ be an $n$-dimensional cube (\ie a closed ball for $\norm{\cdot}_{\infty}$) of radius $\eta$ whose center is called $m$, and
let $\mathcal{S}(1)$ be the unit simplex in dimension $n$.
Then, $H$ and $S(1)$ intersect (\ie, $(H \cap S(1)) \neq \emptyset$) if and only if $\exists(x_i,x_j)$ such that $\sum_{k=1}^n x_i^k \leq 1$ and $\sum_{k=1}^n x_j^k \geq 1$, and the unique intersection point can be computed.
\end{theorem}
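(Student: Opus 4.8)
The plan is to reduce everything to the behaviour of the single affine function $f(x) \eqdef \sum_{k=1}^n x^k$ on the box $H$. First I would make the objects explicit: $H = \{x : \norm{x-m}_\infty \leq \eta\}$ is the axis-aligned box $\prod_{k=1}^n [m^k-\eta,\, m^k+\eta]$, whose $2^n$ corners are the points $x$ with $x^k \in \{m^k-\eta,\, m^k+\eta\}$ for every $k$, and $S(1)=\{x : x \geq 0,\ \sum_k x^k = 1\}$. Since in our setting $H$ is produced by recursively halving the initial cube $[0,1]^n$ (which already contains $S(1)$), we have $H \subseteq [0,1]^n$; hence every point of $H$ is coordinatewise nonnegative and $H \cap S(1) = H \cap \{f = 1\}$. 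This observation is what lets me discard the nonnegativity constraints and argue only about the hyperplane $\{f=1\}$.

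The core is then that $f$, being affine, attains its extrema over the box at two opposite corners, with $\min_{x\in H} f(x) = f(m-\eta\,\mathbf{1}) = \sum_k m^k - n\eta$ (lower corner) and $\max_{x\in H} f(x) = f(m+\eta\,\mathbf{1}) = \sum_k m^k + n\eta$ (upper corner). As $H$ is compact and connected and $f$ continuous, $f(H)$ is exactly the closed interval $[\,\sum_k m^k - n\eta,\ \sum_k m^k + n\eta\,]$, so $H\cap\{f=1\}\neq\emptyset$ iff $1$ lies in this interval. Now ``$\exists$ corner $x_i$ with $\sum_k x_i^k \leq 1$'' is equivalent to $\min_{\text{corners}} f \leq 1$, i.e. $\sum_k m^k - n\eta \leq 1$, and symmetrically ``$\exists$ corner $x_j$ with $\sum_k x_j^k \geq 1$'' is equivalent to $\sum_k m^k + n\eta \geq 1$; conjoining the two gives precisely the membership $1 \in f(H)$. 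This settles the stated equivalence in both directions, the only nontrivial ingredient being the standard fact that a linear function over a box is extremal at opposite corners.

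For the constructive ``the unique intersection point can be computed'' claim I would work on the main diagonal of the box, parameterised by $d(t) \eqdef (m-\eta\,\mathbf{1}) + 2\eta t\,\mathbf{1}$ for $t\in[0,1]$, running from the lower to the upper corner. Along it $f(d(t)) = \sum_k m^k - n\eta + 2 n\eta\, t$ is affine with slope $2 n\eta > 0$ (assuming $\eta>0$), hence strictly increasing, so $f(d(t))=1$ has at most one solution $t^\star = \frac{1 - (\sum_k m^k - n\eta)}{2 n\eta}$; under the intersection condition $t^\star \in [0,1]$, so $d(t^\star)$ is a well-defined point of $H$ with $f(d(t^\star))=1$ whose coordinates, each lying in $[m^k-\eta,\, m^k+\eta]\subseteq[0,1]$, are nonnegative, whence $d(t^\star)\in H\cap S(1)$. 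This yields the explicit reference point used to seed the DOO region.

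The part I expect to require the most care is pinning down what ``the unique intersection point'' means, since $H\cap S(1)$ is in general a whole $(n-1)$-dimensional cell rather than a single point; the resolution is that uniqueness refers to the crossing of the box's main diagonal with the hyperplane (the point actually taken as the region's centre), together with handling the degenerate boundary cases (the hyperplane grazing a single corner, or $\eta=0$). The remaining subtlety is the tacit hypothesis $H\subseteq[0,1]^n$ that guarantees $H\cap\{f=1\}=H\cap S(1)$; I would state it explicitly, as it is exactly what the subdivision process ensures and what makes the finite corner test both necessary and sufficient.
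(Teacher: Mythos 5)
Your proof is correct and follows essentially the same route as the paper's: both arguments test the affine function $\sum_k x^k$ at the two extreme corners $m-\eta\,\mathbf{1}$ and $m+\eta\,\mathbf{1}$ of the box and locate the crossing of the box's main diagonal with the hyperplane $\{\sum_k x^k = 1\}$, arriving at the identical parameter $t^\star = \bigl(1 - (\sum_k m^k - n\eta)\bigr)/(2n\eta)$. The only difference is that you make explicit two points the paper leaves tacit --- that nonnegativity of the crossing point requires $H \subseteq [0,1]^n$ (guaranteed by the subdivision process), and that ``the unique intersection point'' refers to the diagonal's crossing with the hyperplane rather than to $H \cap S(1)$ itself, which is generally an $(n-1)$-dimensional set.
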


\begin{proof}
  \label{proof|theo|IntersectionHypercubeSimplexe}
  Let us consider the diagonal from the lowest point ($x_{inf} = (m_1 - \eta,\dots,m_n - \eta)$) to the highest point ($x_{sup} = (m_1 + \eta, \dots, m_n + \eta)$), and use the Intermediate Value Theorem on it.
There is no intersection point if $\sum_k x_{inf}^k > 1$ or $\sum_k x_{sup}^k < 1$.
Else, the intersection point is $x = x_{inf} + t (x_{sup}-x_{inf})$, where $t =  \frac{1 - \sum_{k=1}^n x_{inf}^k}{n \cdot \norm{ x_{inf} - x_{sup} }_{\infty}}$.
Indeed, 
\begin{align}
  \sum_{k=1}^n x^k = 1
  & \iff \sum_{k=1}^n \left[ x_{inf}^k + t(x_{sup}^k - x_{inf}^k) \right] = 1 \\
  & \iff t = \frac{1 - \sum_{k=1}^n x_{inf}^k}{\sum_k x_{sup}^k - x_{inf}^k} \\
  & \iff t = \frac{1 - \sum_{k=1}^n x_{inf}^k}{n \cdot \norm{ x_{sup}-x_{inf} }_{\infty}}.
\end{align}

\end{proof}

\Cref{fig|IntersectionHypercubeSimplex} illustrates the iterative subdivision of the $2$-dimensional unit simplex by $2$-dimensional hypercubes (\ie, squares).
Let us point out that the subdivision operation is here concentrated around the optimum: $(0,1)$.

\message{>>> Leaving \currfilename >>>}

 }{}

\end{document}